\newcommand{\R}{{\mathbb{R}}}
\newcommand{\abs}[1]{\left| #1 \right|}
\newcommand{\norm}[1]{\left\Vert #1 \right\Vert}
\DeclareMathOperator{\supp}{supp}
\newtheorem{theorem}{Theorem}[section]
\newtheorem{lemma}[theorem]{Lemma}
\theoremstyle{definition}
\newtheorem{remark}[theorem]{Remark}
\newtheorem{remarks}[theorem]{Remarks}
\numberwithin{theorem}{section}
\numberwithin{equation}{section}
\DeclareMathOperator {\ess}{ess}
\DeclareMathOperator {\loc}{loc}
\DeclareMathOperator {\disc}{disc}
\newcounter{smalllist}
\thanks{\copyright 2024 by the authors. Faithful reproduction of this article,
       in its entirety, by any means is permitted for non--commercial purposes}
\keywords{Resonances, virtual levels, Efimov effect}
\date{\today, version \jobname }
\title[Absence of the Efimov Effect for Confined Particles]{Why a System of three bosons on separate lines can not exhibit the Confinement Induced Efimov effect}
\author{Dirk Hundertmark, Marvin R. Schulz, Semjon Vugalter}
\begin{document}

\begin{abstract}
We study a system of three bosons interacting with short--range potentials which can move along three different lines. Two of these lines are parallel to each other within one plane. The third line is constrained to a plane perpendicular to the first one. Recently it was predicted in physics literature \cite{N:2017} that such a system exhibits the so--called confinement induced Efimov effect. We prove that this prediction is not correct by showing that this system has at most finitely many bound--states.
\end{abstract}

\maketitle
\setcounter{tocdepth}{1}
{\hypersetup{linkcolor=black} 
\tableofcontents }

\section{Introduction} \label{sec:intro}
\subsection{The physical system} \label{subsec:model} 
We study a system consisting of three quantum particles (bosons) with \emph{short--range} interactions confined to move on separate lines within \(\mathbb{R}^3\), see Figure \ref{fig:001}  in section \ref{main_result:section}. 
Two of these lines, \(L_2\) and \(L_3\), are parallel to each other within a plane \(P\), while the first line, \(L_1\), is constrained to a plane perpendicular to \(P\) which does not intersect \(L_2\) or \(L_3\). That is, the intersection of the plane on which \(L_1\) lies with the plane \(P\) forms a line parallel to  \(L_2\) and \(L_3\). The line \(L_1\) intersects \(P\) at an angle \(\zeta \in (0,\pi/2]\). 
Without loss of generality, we can fix the point of intersection 
of the line \(L_1\) and the plane \(P\) as the origin. 

Recently the 
physicists Nishida and Tan predicted that this system might exhibit the so--called confinement induced Efimov effect. This in particular means, that it should have an infinite number of negative eigenvalues, if two--particle subsystems do not have bound states, but have resonances at the bottom of the essential spectrum. the possible existence of Efimov type effects in such \emph{}{geometrically constrained quantum systems}, see \cite{NT:2008},\cite{NT:11}). 

\smallskip 

Our main result, see Theorem \ref{main_thrm}, shows that this prediction is \emph{not correct}. We prove that the  
geometrically constrained three--particle system discussed above 
can have at most finitely many bound--states for a large class of short--range potentials.

\smallskip 

\subsection{The Efimov effect.} It is well known that an one-particle Schr\"odinger operator $-\Delta + V(x)$ on $L^2(\R^d)$ with relatively bounded potential decaying faster than $\abs{x}^{-2-\delta}$, $\delta>0$ may have only a finite number of eingenvalues. It was proven by Zhislin \cite{GZ:1974} and Yafaev \cite{J:1974} using two different methods that $N$-particle Schrödinger operators, under the same conditions on the potentials, posess only a finite number of eigenvalues if at least one of the subsystems has a bound state below zero.

It was very surprising, when physicist Efimov found in 1970 \cite{E:1970} that a system of three particles in $\R^3$ with short--range pairwise interactions may have an \emph{infinite number} of eigenvalues when the two--particle subsystems do not have bound--states, but have resonances at the bottom of the spectrum. 

Beyond the infinite accumulation of bound--states, the Efimov effect exhibits several remarkable properties, 
one of the most significant being its universality. 
This means that the discrete spectrum's asymptotic behavior remains the same, regardless of the microscopic 
specifics of the underlying pair--potentials. 
In particular, the number of bound--states \( N(E) \) below \( E < 0 \) satisfies the \emph{universal} asymptotic behavior 
\begin{equation} \label{scaling}
    \lim_{E \to 0^-} N(E) = C_0 \abs{\ln(\abs{E})} + o(1) 
\end{equation}
for some constant \( C_0 > 0 \), which depends solely on the particle masses and not on the interaction potentials. 

Both in mathematics and physics, it became a highly recognized challenge to study the Efimov effect. 
After Efimov's initial description the first rigorous mathematical proof was provided by Yafaev in 1970 \cite{J:1974}, followed by a variational proof by Ovchinnikov and Sigal in 1979 \cite{OS:79} and Tamura in \cite{T:1991}. 
The asymptotic behavior of the number of bound--states, which was already predicted by Efimov, was later confirmed mathematically by Sobolev in 1993 \cite{S:1993}. Until the end of the 1990s, several significant physical and mathematical findings had emerged on this topic (see, e.g., \cite{T:1993}, \cite{KS:79},  \cite{P:1995}, \cite{P:1996}, \cite{VZ:1982} and \cite{VZ:1984})

Despite its universality property, the Efimov effect is 
an exceptionally rare phenomenon, primarily due to the necessary presence of virtual levels in two--particle subsystems.
In experiments, it is difficult to create conditions where two--particle subsystems have zero--energy resonances. Moreover, the Efimov bound--states have a large size and are very weakly bound. This makes the Efimov effect exceedingly challenging to observe.

However, technological advancements in the 1990s, such as improved laser cooling techniques, enabled the study of resonant systems through the application of magnetic fields and so--called \emph{Feshbach resonances} (see, e.g., \cite{T:1993}, \cite{I:Exp:1998}, and \cite{C:Exp:1998}). The first experimental observation of the Efimov effect was achieved in 2002 in an ultracold gas of cesium atoms, which was published in 2006 \cite{K:Exp:2006}. Later, in experiments with potassium atoms, two consecutive Efimov states have been observed \cite{Z:2009}, obtaining data consistent with the universal scaling property in \eqref{scaling}. By the late 2000s, evidence of the Efimov effect was found for various other particles, including cases beyond systems of three identical bosons (see, e.g., \cite{G:2009}, \cite{G:Li7:2009}, \cite{P:2009}, \cite{B:2009} and \cite{X:2020}). For a more detailed review of the experimental results on the Efimov effect see \cite{F:2010} and the references therein. The experimental verification of the Efimov effect generated renewed and significant academic interest. For a comprehensive review, see \cite{N:2017} (published in 2017), which includes over 400 references, most of them from after 2009.

Recently, both experimental and theoretical studies have explored the existence of effects similar to the Efimov effect. A natural question is: does the Efimov effect extend to \(N\)--particle systems for \(N>3\) when the \((N-1)\)--particle subsystems possess a virtual level? It is known that in systems with \(N\geq 4\) bosons in three dimensions, the effect is absent \cite{AG:1973}, \cite{arxive:BBV}, \cite{1D:13}. 

Another question is whether the effect can exist in spatial dimensions other than three, which occur, for example in configurations involving graphene or by confinement of particles via optical lattices. In systems of $N$ bosons, the absence of the Efimov effect in dimension one has been established in \cite{arxive:BBV}. In the same work, it was proved that for $N$ two--dimensional bosons the Effimov effect is absent, except in the case $N=4$. Physicists predict that for $N=4$ the Efimove effect exists only if the system interacts solely via three-particle forces. Mathematically this is still an open problem.

For $N=3$ in dimensions greater than five, virtual levels correspond to bound--states of two--particle subsystems, resulting in the non--existence of the Efimov effect. The situation in dimension four is more complex since virtual levels in this case are resonances but not bound--states, however, the decay rate is so high, that the resonance barely misses to be $L^2$. The non--existence of the Efimov effect for three bosons in dimension four was demonstrated by the use of so--called \emph{Faddeev equations} in \cite{BB:2019}. This completes the picture for the existence or non--existence of the Efimov effect for (bosonic) three--particle systems in all dimensions.

Recent advancements in experiments with ultracold gases have enabled the confinement of particles to lower--dimensional subspaces using strong optical lattices (see, e.g., \cite{G:2001}, \cite{T:2008}). 
This development enables the study of systems with \emph{mixed dimensionality}, where different species of particles are confined to distinct subspaces of dimension less than three. 
The physicists Nishida and Tan discussed the possible existence of Efimov type effects in such geometrically constrained systems (see \cite{NT:2008},\cite{NT:11}). 
In \cite{NT:2009}, they examined the possibility of this effect occurring in a mixture of \ce{^{40}K} and \ce{^{6}Li} isotopes, where the conventional Efimov effect is known to be absent. 
They argued that confining \ce{^{40}K} particles to a one--dimensional subspace is a promising system for the so--called \emph{confinement induced Efimov effect}. However, a rigorous mathematical description of these scenarios remains an open question. 
\begin{figure}[!ht]
\centering
\includegraphics[width=0.9\textwidth]{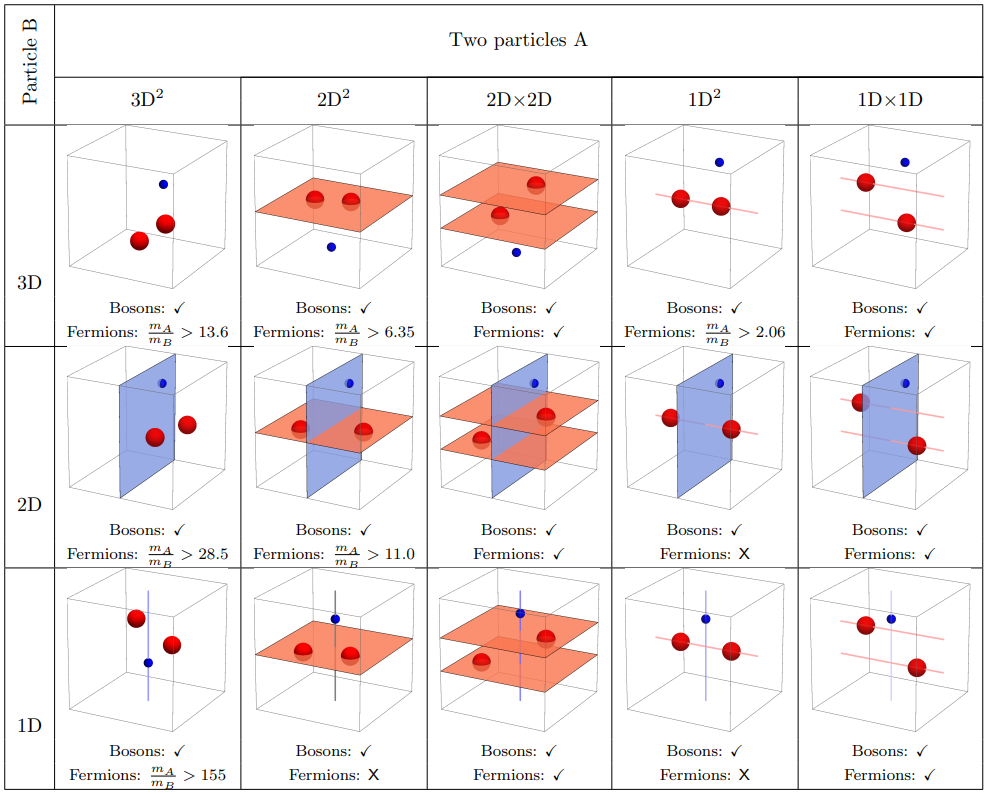}
\caption{Predictions on the confinement induced Effimov effect. Graphic taken from \cite[page 44, Table 1]{N:2017}. For each case, it is indicated whether the Efimov effect is predicted to occur $(\checkmark)$ or not $(x)$.} \label{fig:005}
\end{figure}
In \cite{N:2017} the predictions on the Efimov effect among various configurations of three particles with mixed dimensionality have been summarized. We present the table \cite[page 44, Table 1]{N:2017} and these predictions in Figure \ref{fig:005} below. In this work, we examine one of the cases. Namely the case \mbox{(1D-1D x 1D)} on the bottom right.  

We prove that, contrary to the predictions made, this system can support only a finite number of bound--states, even if there are virtual levels in the two--particle subsystems. 

This configuration is particularly intriguing because it is a truly mixed dimensional system, consisting of one one--dimensional and two two--dimensional subsystems.

Our approach is based on methods developed by Vugalter and Zhislin (see, e.g., \cite{GZ:1974}, \cite{VZ:1982}, \cite{VZ:1984} and \cite{V:1992}), which were recently applied in \cite{BBV:2022} to establish the absence of the Efimov effect in unconstrained $N$--particle systems in dimensions one and two. As usual, an important part of the work is the study of the decay properties of resonances which may occur at the bottom of the spectrum of subsystems. To the best of our knowledge, the decay of resonances has only been studied in the case of rotational symmetric potentials. However, for this mixed dimensional system, the subsystems are not necessarily rotational invariant which tremendously complicates the analysis of the decay properties of zero--energy solutions. Although this solutions are not functions in $L^2(\R^2)$ using a modification of techniques from \cite{HJL-helium}, \cite{HJL-proceedings},  \cite{arxive:BBV}, \cite{BBV:2022} and \cite{BHHV:2022},
which extend the method of \cite{Agmon:Lectures}, we show that the projection of these solutions onto the subspace orthogonal to radially symmetric functions are in $L^2(\R^2)$.

The paper is structured as follows. In Section \ref{main_result:section}, we give main definitions and state our main result and addresses the (lack of) symmetries within the two--particle subsystems. In Section \ref{proof_of_main_thrm}, we state four lemmas and show how they 
lead to a proof of our main theorem regarding the finiteness of the discrete spectrum.

As preparation for proving these lemmas, Section \ref{preliminiaries} examines the decay properties of the zero--energy solutions within a 
given symmetry subspace. It turns out that the part of a resonance in angular momentum subspaces different from zero angular momentum 
has faster decay than the part in the zero angular momentum subspace (the s channel in physics language).  
We finish the proof of  the main theorem by 
proving the remaining lemmas from Section \ref{proof_of_main_thrm} 
in Section \ref{proofs}.

\smallskip\noindent
\textbf{Acknowledgements:} 
This research has been partially funded by the Deutsche Forschungsgemeinschaft (DFG, German Research Foundation) – Project-ID 258734477 – SFB 1173. 
\section{Definitions and Main Result} \label{main_result:section}
Let $y_i \in \R$ be the distance of the $i$-th particle from the origin along the line $L_i$, and let $\mathbf{r}_i \in \mathbb{R}^3$ be the three--dimensional position vector of this particle. Then
\begin{equation}
    \mathbf{r}_1 = \left(
\begin{array}{c}
y_1 \cos(\zeta)\\
0\\
y_1 \sin(\zeta)\\
\end{array}
\right), \,
\mathbf{r}_2 = \left(
\begin{array}{c}
y_2\\
a_2\\
0\\
\end{array}
\right), \,
\mathbf{r}_3 = \left(
\begin{array}{c}
y_3\\
a_3\\
0\\
\end{array}
\right),
\end{equation}
where $a_2,a_3 \in \R$ denote the distances between the lines as indicated in Figure \ref{fig:001}. 

Denote by $\mathbf{r}_{ij}= r_i-r_j$ the distance between 
the particles $i$ and $j$.  
The Schr\"odinger operator of the system, expressed in this coordinate system, is given by
\begin{equation}\label{3-particle-ham}
    H =  - \sum_{i=1}^3 \frac{1}{m_i}\frac{\partial^{2}}{\partial y_i^{2}} +\sum_{\alpha\in I } V_{\alpha}(\abs{\mathbf{r}_{\alpha}})
\end{equation}
where \( V_{\alpha}:\mathbb{R} \to \mathbb{R} \) is the interaction potential between the particle pairs, indexed by \(\alpha \in I\), with \( I \coloneqq \{(12),(13),(23)\} \) and \( m_1, m_2, m_3 > 0 \) are the masses of the particles.
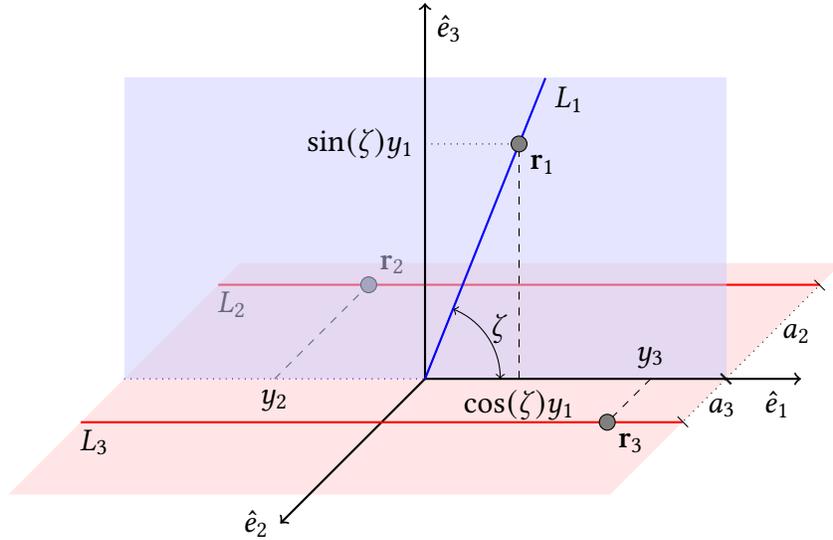
\begin{figure}[!ht]
\centering
\begin{tikzpicture}
    \fill[red!20,opacity=0.5] (-4,0,-4) -- (4,0,-4) -- (4,0,4) -- (-4,0,4) -- cycle;
    \node at (-3.0,-0.1,2.75) [above left] {$L_3$};

    \draw[red, thick,-] (-4,0,-3.25) -- (4,0,-3.25)  ;
    \node at (-3.0,-0.1,-2.0) [above left] {$L_2$};

    \draw [dashed, -] (-2,0,0) -- (-2,0,-3.25);
    \draw[fill=gray] (-2,0,-3.25) circle (3pt) node[anchor=south west] {$\mathbf{r}_2$};

    \fill[blue!20,opacity=0.5] (-4,0,0) -- (4,0,0) -- (4,4,0) -- (-4,4,0) -- cycle;
    \node at (2.25,3.4,0) [above left] {$L_1$};
    \draw [dotted] (0,3.125,0) -- (1.25,3.125,0);

    \draw
    (1,0) coordinate (a) 
    -- (0,0) coordinate (b) 
    -- (1.6,4) coordinate (c) 
    pic["$\zeta$", draw=black, <->, angle eccentricity=1.2, angle radius=1cm]
    {angle=a--b--c};

    \draw[thick,->] (0,0,0) -- (5,0,0) node[anchor=north east] {$\hat{e}_1$};
    \draw[thick,->] (0,0,0) -- (0,5,0) node[anchor=north west] {$\hat{e}_3$};
    \draw[thick,->] (0,0,0) -- (0,0,5) node[anchor=east] {$\hat{e}_2$};

    \draw[blue, dotted] (-4,0,0) -- (0,0,0);
    
    \draw[blue, thick,-] (0,0,0) -- (1.6,4,0);
    \draw [dashed] (1.25,0,0) -- (1.25,3.125,0);
    \draw[red, thick,-] (-4,0,1.5) -- (4,0,1.5)  ;

    \draw [dashed, -] (3,0,0) -- (3,0,1.5);
    \draw[fill=gray] (3,0,1.5) circle (3pt) node[below right] {$\mathbf{r}_3$};

    \draw[fill=gray] (1.25,3.125,0) circle (3pt) node[anchor=north west] 
    {$\mathbf{r}_1$};

    \node at (1.25,0,0) [below] {$\cos(\zeta) y_1$};
    \node at (-2,0,0) [below] {$y_2$};
    \node at (3,0,0) [above] {$y_3$};
    \node at (0,3.125,0) [left] {$\sin(\zeta) y_1 $};

    \draw[dotted, |-|] (4,0,0)--(4,0,1.5);
    \draw[dotted, |-|] (4,0,0)--(4,0,-3.25);
    \node at (4,0,1.0) [right] {$a_3$};
    \node at (4,0,-1.575) [right] {$a_2$};    
\end{tikzpicture}
\caption{Geometrically constrained configuration space of particles.} \label{fig:001}
\end{figure}
We study the system of three bosons given by the operator $H$ in \eqref{3-particle-ham}. 
Regarding the potentials we assume that \( V_{1j} \in L^2_{\text{loc}}(\mathbb{R}^2) \) and \( V_{23} \in L^2_{\text{loc}}(\mathbb{R}) \) and there exist constants \( C, \delta > 0 \) and \( A > 0 \) such that for \( \abs{\mathbf{r}_{\alpha}} > A \).
\begin{equation} \label{short-range}
    \left| V_{\alpha}(\abs{\mathbf{r}_{\alpha}}) \right| \leq C (1 + \abs{\mathbf{r}_{\alpha}})^{-\nu_{\alpha}},
\end{equation}
where \( \nu_{23} \coloneqq 2 + \delta \) and \( \nu_{12} \coloneqq \nu_{13} \coloneqq 3 + \delta \). 
Note that by short--range one typically refers to potentials which decay as \( \left| \cdot \right|^{-2-\delta} \) at infinity. To ensure the applicability of specific decay estimates on the so--called zero--energy resonances of two--particle subsystems, we assume a stronger decay condition on the interaction potentials  
 $V_{12}$ and $V_{13}$. 

In addition, the particles always maintain a minimum distance, making the presence or absence of singularities in the potentials at very small distances irrelevant, except in the special cases where \( a_2 = 0 \) or \( a_3 = 0 \), when the particles $1$ and $2$, respectively, $1$ and $3$, can come arbitrarily close to each other. 
Denote by $\sigma_{\ess}(H)$ the essential and by $\sigma_{\disc}(H)$ the discrete spectrum of $H$. Our main result is
\begin{theorem} \label{main_thrm}
    Let $H$ be the operator defined in equation \eqref{3-particle-ham} with $V_{\alpha}$ fulfilling \eqref{short-range} for any $\alpha\in I=\{(12),(13),(23)\}$. Assume that $\sigma_{\ess}(H) = [0,\infty)$. Then $\sigma_{\disc}(H)$ is at most finite. Contrary to the prediction made in \cite{N:2017}, this system does not exhibit a confinement induced Efimov effect. 
\end{theorem}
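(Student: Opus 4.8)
The plan is to show directly that $H$ has only finitely many eigenvalues below the bottom of its essential spectrum, which by assumption sits at $0$. By the min--max principle this is equivalent to exhibiting a \emph{finite}--dimensional subspace $\mathcal{L}\subset L^2(\R^3)$ such that $\langle\psi,H\psi\rangle\ge 0$ for every $\psi$ in the form domain with $\psi\perp\mathcal{L}$. I would build such an $\mathcal{L}$ by a geometric localization argument in the spirit of Vugalter and Zhislin: control each region of the configuration space $\R^3=\{(y_1,y_2,y_3)\}$ separately, show that all but one region is nonnegative up to finitely many directions, and identify the region where all three particles are mutually far apart as the only candidate for producing infinitely many bound--states. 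The assumption $\sigma_{\ess}(H)=[0,\infty)$ is used at the outset: together with the HVZ theorem it forces each two--particle subsystem Hamiltonian $h_\alpha$ to satisfy $h_\alpha\ge 0$, i.e.\ the subsystems have no negative bound--states and their thresholds coincide with the common value $0$, so that only the zero--energy resonances can matter.

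Concretely, I would fix a smooth partition of unity $\{J_a\}_{a\in\mathcal{A}}$, $\mathcal{A}=\{0,\infty\}\cup I$, with $\sum_a J_a^2\equiv 1$: $J_0$ supported in a ball $\{\abs{y}\le R_0\}$, each $J_\alpha$ supported where the pair $\alpha$ stays within bounded distance while the complementary particle escapes, and $J_\infty$ supported where all three distances $\abs{\mathbf{r}_{12}},\abs{\mathbf{r}_{13}},\abs{\mathbf{r}_{23}}$ exceed $R_0$. The IMS localization formula then yields
\begin{equation*}
H=\sum_{a\in\mathcal{A}} J_a H J_a-\sum_{a\in\mathcal{A}}\abs{\nabla J_a}^2 ,
\end{equation*}
where the error $\sum_a\abs{\nabla J_a}^2$ is $O(\abs{y}^{-2})$ on the scale $R_0$ and will be absorbed into the large--distance estimates. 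On the core, $J_0HJ_0$ is a Schr\"odinger operator on a bounded region with form--bounded potentials (here $V_{12},V_{13}\in L^2_{\loc}(\R^2)$ and $V_{23}\in L^2_{\loc}(\R)$ suffice), hence bounded below with compact resolvent and only finitely many negative eigenvalues. In each two--cluster channel the leading operator is the free kinetic energy of the separating coordinate plus $h_\alpha\ge 0$; the spectral part of $h_\alpha$ bounded away from $0$ gives a strictly positive contribution, so only the threshold (zero--energy resonance) part of $h_\alpha$ survives and must be tracked together with $J_\infty HJ_\infty$.

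The heart of the argument, and the step I expect to be the main obstacle, is the lower bound on $J_\infty HJ_\infty$ together with these residual threshold contributions. In the standard three--dimensional Efimov effect the superposition of the resonance tails of the resonant pairs produces an effective inverse--square attraction $-\gamma/\rho^2$ in the overall scale $\rho$, with $\gamma$ \emph{exactly} at the critical Hardy constant, which is what creates the geometric cascade of bound--states. I would instead prove that here $\gamma$ is strictly \emph{subcritical}. This is precisely where the decay estimates of Section~\ref{preliminiaries} are decisive: because the subsystem potentials $V_{12},V_{13}$ are genuinely anisotropic --- the distance satisfies $\abs{\mathbf{r}_{1j}}^2=y_1^2+y_j^2-2\,y_1 y_j\cos(\zeta)+a_j^2$, which is not radial in $(y_1,y_j)$ --- the two--dimensional zero--energy resonances split into an $s$--channel part and a part orthogonal to the radially symmetric functions, the latter lying in $L^2(\R^2)$, i.e.\ decaying strictly faster than the critical rate; meanwhile the genuinely one--dimensional pair $(23)$ cannot supply the missing attraction. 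Feeding these decay rates into a Hardy--type lower bound should push the effective coupling below the critical threshold, so that $J_\infty HJ_\infty$ is nonnegative outside a compact set modulo a finite--rank correction.

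Finally I would assemble the pieces: the core contributes finitely many negative directions, the cluster channels and the separated region are each nonnegative modulo finitely many directions once subcriticality is established, and the localization error $\sum_a\abs{\nabla J_a}^2=O(\rho^{-2})$ is absorbed by the strict gap left below the critical constant. Collecting all exceptional subspaces into a single finite--dimensional $\mathcal{L}$ yields $\langle\psi,H\psi\rangle\ge 0$ on $\mathcal{L}^\perp$ and hence the finiteness of $\sigma_{\disc}(H)$. The technically delicate points I anticipate are making the anisotropic resonance decomposition of Section~\ref{preliminiaries} quantitative enough to produce a genuinely subcritical constant uniformly in the geometry $(\zeta,a_2,a_3)$, and ensuring that the localization errors and the cross--terms between distinct channels are of strictly lower order than the $\rho^{-2}$ scale that governs the inverse--square competition.
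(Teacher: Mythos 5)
Your overall blueprint (min--max reduction to a positivity statement on a finite--codimension subspace, geometric localization, control of the two--body resonance tails) is the same as the paper's, but two steps of your plan do not survive as written. The first is the partition. You take each $J_\alpha$ supported where the pair $\alpha$ stays within a \emph{bounded} distance $R_0$, and $J_\infty$ where all three pair distances exceed $R_0$. On the support of $J_\infty$ the only available bound on the potentials is the constant $C(1+R_0)^{-\nu_\alpha}$: a pair can sit at distance just above $R_0$ while $\abs{x}\to\infty$, so the potentials do \emph{not} decay in the global variable there. A potential bounded below only by a small negative constant on an unbounded region cannot be absorbed by the kinetic energy or by a Hardy term in $\abs{x}$, so your claimed nonnegativity of $J_\infty HJ_\infty$ modulo finite rank has no proof from the stated support properties. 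The paper instead works with the \emph{conical} regions $K_\alpha(\gamma)=\{\abs{(x_i,x_j)}\le\gamma\abs{x}\}$ of \eqref{definition_cones} (and with a hard cut--off, precisely because the radially symmetric part of the 2D resonance channel cannot be smoothly localized); outside $K_\alpha(\gamma)$ one then has $\abs{\mathbf{r}_\alpha}\gtrsim\gamma\abs{x}$, hence $\abs{V_\alpha}\lesssim\abs{x}^{-\nu_\alpha}$ by Lemma~\ref{potentials_small_outside_cones}, and the final radial Hardy step on $\Omega(\gamma)$ goes through.

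The second point is the mechanism you propose for what you call the heart of the argument. You plan to compute an effective inverse--square coupling in the fully separated region and show it is strictly subcritical thanks to the anisotropic decay of the resonances; this is not what happens, and you have the role of the anisotropy reversed. In the paper the positivity of each two--cluster channel comes from the spectral gap above the virtual level (the term $\mu\norm{\nabla_{12}F}^2$ obtained from assertion (2) of Lemma~\ref{virtual_level_lemma}) together with one--dimensional Hardy inequalities in the separating coordinate; no effective $-\gamma/\rho^2$ potential is ever computed or compared with a critical Hardy constant. The decay estimate \eqref{Perp_in_L2} for $P_\perp\varphi_0$ is not the source of any subcriticality: it is needed only to control the localization error $\varepsilon\int\abs{P_\perp\psi_2}^2\abs{(x_1,x_2)}^{-2}\,dx$ in Lemma~\ref{third_part_of_lem21}. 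Indeed, the anisotropy of $V_{1j}$ is an obstacle the paper must overcome --- for radial $V$ the resonance is radial, does not decay, and that error term is absent --- rather than the reason the effect is absent. Finally, each channel is ultimately bounded below by a surface integral over $\partial K_\alpha(\gamma)$ which is then absorbed into a small fraction of the kinetic energy on the adjacent transition cone via the one--dimensional trace theorem (Lemmas~\ref{trace_lemma_001} and~\ref{trace_lemma_002}); this bookkeeping, which is what actually closes the argument, is absent from your outline. As it stands your proposal is a reasonable programme, but the step you yourself flag as the main obstacle is exactly the part that is not supplied, and the partition you start from would have to be replaced before it could be.
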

\begin{remarks}
    The statement of Theorem \ref{main_thrm} does not impose any 
    conditions on the existence or absence of resonances in 
    two--particle subsystems. 
\end{remarks}
For each $\alpha =(ij) \in I$ we denote by $h_\alpha$ the corresponding two--body Hamiltonian:
\begin{equation} \label{two_particle_operators}
    \begin{split}
            h_{\alpha} &\coloneqq  -\frac{1}{m_i} \frac{\partial^{2}}{\partial y_i^{2}} -  \frac{1}{m_j}\frac{\partial^{2}}{\partial y_j^{2}}  +V_{\alpha}(\abs{\mathbf{r}_{\alpha}}) \, .
    \end{split}
\end{equation}
Let $\Sigma_{ij} \coloneqq \inf \sigma(h_{ij})$ be the bottom of the spectrum of $h_{ij}$ and let $\Sigma \coloneqq \min \{ \Sigma_\alpha : \alpha \in I \}$. Analogously to the HVZ theorem for systems without geometrical constraints \cite[Theorem XIII.17]{RS:BOOK:1978}, we have $\sigma_{\text{ess}}(H) = [\Sigma, \infty)$. Under the conditions of Theorem \ref{main_thrm}, $\Sigma=0$ and consequently $h_\alpha \geq 0$. 

By appropriate rescaling, we can remove the dependence on the masses from the kinetic part of the Hamiltonian $H$. Let $x=(x_1,x_2,x_3)$ with 
\begin{equation}
    x_i= \sqrt{m_i} y_i \text{ for } i \in \{1,2,3\}
\end{equation}
and $\abs{x} \coloneqq (x_1^2+x_2^2+x_3^2)^{1/2}$, then
\begin{equation}\label{3-particle-ham-new_coos}
    H = -\sum_{i=1}^3 \frac{\partial^{2}}{\partial x_i^{2}} + \sum_{\alpha\in I } V_{\alpha}(\abs{\mathbf{r}_{\alpha}}) = -\Delta_x + \sum_{\alpha\in I } V_{\alpha}(\abs{\mathbf{r}_{\alpha}})  
\end{equation}
and
\begin{equation}
    h_{ij} = -\frac{\partial^{2}}{\partial x_i^{2}} - \frac{\partial^{2}}{\partial x_j^{2}} + V_{ij}(\abs{\mathbf{r}_{ij}}) \, .
\end{equation}
In abuse of notation, we denote the transformed operator by the same letter. In this new set of coordinates the distances $\abs{\mathbf{r}_{\alpha}}$ are
\begin{equation} \label{distanc_of_particles}
    \begin{split}
        \abs{\mathbf{r}_{1j}} &= \left( \frac{x_1^2}{m_1} + \frac{x_j^2}{m_j} - 2 \frac{\cos(\zeta)}{\sqrt{m_1m_j}} x_1x_j + a_j^2\right)^{1/2}, \\
        \abs{\mathbf{r}_{23}} &= \left( \left( \frac{x_2}{\sqrt{m_2}} - \frac{x_3}{\sqrt{m_3}} \right)^2 +(a_2-a_3)^2 \right)^{1/2} \, .
    \end{split}
\end{equation}
Note that \(\abs{\mathbf{r}_{1j}}\) remains unchanged under reflection $(x_1,x_j)\mapsto(-x_1,-x_j)$. This symmetry of the potentials will play an important role in our analysis. 
\section{Proof of Theorem \ref{main_thrm}: Absence of the Efimov Effect} \label{proof_of_main_thrm}
   By the min--max principle, it is sufficient to find a finite--dimensional subspace $\mathcal{M} \subset L^2(\R^3)$ such that for any $\psi \in L^2(\R^3)$ orthogonal to $\mathcal{M}$
    \begin{equation}
        \langle \psi, H \psi \rangle \geq 0 \, .
    \end{equation}
    Such a space $\mathcal{M}$ exists, see for example the work by Zhislin \cite{GZ:1974}, if there are constants $b, \tau > 0$ such that
    \begin{equation} \label{L_has_to_be_pos}
    L[\psi] 
      \coloneqq 
        \int_{\R^3} 
          \left( 
            \sum_{i=1}^3 \abs{\partial_{x_i} \psi}^2 
            + \sum_{\alpha \in I} V_{\alpha} \abs{\psi}^2 
          \right) 
        dx 
        - \int_{\abs{x}\in [b,2b]} 
            \frac{\abs{\psi}^2}{\abs{x}^{2+\tau}} 
        dx 
    \geq 0\, 
\end{equation}
for any $\psi \in C^1_0(\R^3)$ with $\supp \psi \subset \{x\in \R^3:\abs{x}>b\}$. We emphasize that no smallness condition of the parameter $\tau>0$ is needed. 
Everywhere below we assume that $b>0$ is sufficiently large and $\tau>0$ is a fixed number which is less than $\delta$, where $\delta>0$ is the parameter in the decay condition on the 
interaction potentials in \eqref{short-range}.
Let
\begin{equation} \label{definition_cones}
    \begin{split}
        K_{1j}(\gamma) 
          &\coloneqq 
            \{ x \in \R^3 : \left(x_1^2+x_j^2\right)^{1/2} \leq \gamma\abs{x} \},
            \quad j\in\{2,3\}\, ,\\
        K_{23}(\gamma) 
          &\coloneqq 
            \left\{ 
              x\in \R^3: \abs{\frac{x_2}{\sqrt{m_2}} 
                 - \frac{x_3}{\sqrt{m_3}}} \leq \gamma \abs{x} 
            \right\}, \\
        \Omega(\gamma) 
          &\coloneqq 
            \R^3 \setminus \left\{  
              K_{12}(\gamma)\cup K_{13}(\gamma) \cup K_{23}(\gamma) 
            \right\}\,.
    \end{split}
\end{equation}
In the following we denote by $\partial K_\alpha(\gamma)$ the boundary of $K_\alpha(\gamma)$. The sets $K_\alpha (\gamma)$ describe parts of the configuration space where the particles $i$ and $j$ 
in $\alpha =(ij)$ are close to each other compared to their distance from the third particle $k\neq\{i,j\}$. 
In Lemma~\ref{cones_are_disjoint} in the Appendix, we show that the sets $K_\alpha(\gamma)$ do not intersect, except for $x=0$, for sufficiently small $\gamma > 0$.

Let $\mathbbm{1}_{A}$ be the indicator function of the set 
$A \subset \R^3$ and define $\psi_\alpha \coloneqq \psi \mathbbm{1}_{K_\alpha(\gamma)}$ and \mbox{$\psi_0 \coloneqq \psi \mathbbm{1}_{\Omega(\gamma)}$}. We prove \eqref{L_has_to_be_pos} 
by estimating for all $\alpha \in I$ the \emph{local energies} 
\begin{equation} \label{L_on_cones}
    \begin{split}
        L_\alpha[\psi_\alpha] 
          &\coloneqq 
            \int_{K_{\alpha}(\gamma)} 
              \left(  
                \abs{\nabla \psi_\alpha}^2 + \sum_{\beta\in I} V_{\beta} \abs{\psi_\alpha}^2 
              \right) 
            dx 
            - \int_{K_{\alpha}(\gamma)} 
                \frac{\abs{\psi_\alpha}^{2}}{\abs{x}^{2+\tau}} 
               dx \, , \\
        L_0[\psi_0] 
          &\coloneqq 
            \int_{\Omega(\gamma)} 
              \left(  
                \abs{\nabla \psi_0}^2 
                + \sum_{\alpha\in I} V_{\alpha} \abs{\psi_0}^2 
              \right) 
            dx 
              - \int_{\Omega(\gamma)} 
                  \frac{\abs{\psi_0}^2}{\abs{x}^{2+\tau}} 
                 dx\, ,
    \end{split}
\end{equation}
and noting 
\begin{equation} \label{expanded_L}
    L[\psi] = L_0[\psi_0] +  \sum_{\alpha \in I } L_\alpha[\psi_\alpha] \, .
\end{equation}
Note that we use a hard cut--off in the definition of the local energies $L_\alpha$ and $L_0$. The analysis of these local energies will involve boundary terms on  $\partial K_{\alpha}(\gamma)$ 
and $\partial \Omega(\gamma)$. The analysis will proceed in several steps. 
As a first step, we show that the functionals $L_\alpha$ for 
$\alpha \in I$ can be bounded in terms of boundary integrals 
over $\partial K_{\alpha}(\gamma)$. 
This is done in the following two lemmas. 
\begin{lemma} \label{L_1j_lemma}
    Fix $\alpha \in \{(12),(13)\}$ and let $P_0[\alpha]$ be the projection in $L^2(\mathbb{R}^3)$ onto functions that are invariant under rotations of the coordinates $(x_1, x_j)$ describing the position of the particle pair $\alpha =(ij)$. Under the conditions of Theorem \ref{main_thrm} there exists a constant $c > 0$, independent of $\psi$, such that
   \begin{equation} \label{1j_surface_est}
       L_{\alpha}[\psi_{\alpha}] \geq -c \int_{\partial K_{\alpha}(\gamma) } \frac{\abs{P_0[\alpha]\psi }^2}{\abs{x}^{1+\tau}} d\sigma \, .
   \end{equation}
\end{lemma}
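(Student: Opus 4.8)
The plan is to work entirely inside the cone $K_\alpha(\gamma)$, $\alpha=(1j)$, where the pair $(1,j)$ is close while the third particle $k$ is far, so that $V_{1j}$ (a function of $(x_1,x_j)$ alone) is the only strongly singular interaction; I would exploit the two–body positivity $h_{1j}\ge 0$ via a ground–state substitution and use the angular decomposition in the $(x_1,x_j)$–plane to isolate the $s$–channel $P_0[\alpha]\psi$. As a first, routine step I would discard the inessential terms: on $K_\alpha(\gamma)$ the coordinate $x_k$ has size $|x|$, so $|\mathbf{r}_\beta|$ is comparable to $|x|$ for $\beta\neq\alpha$, and \eqref{short-range} with $\tau<\delta$ gives $\sum_{\beta\neq\alpha}|V_\beta|\le C\,b^{-(\delta-\tau)}|x|^{-2-\tau}$ on $\{|x|>b\}$. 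Writing $-\Delta_2=-\partial_{x_1}^2-\partial_{x_j}^2$ for the internal and $-\partial_{x_k}^2$ for the longitudinal kinetic energy, and using $\psi_\alpha=\psi$ on $K_\alpha$, this reduces \eqref{1j_surface_est} to a lower bound for
\begin{equation*}
\int_{K_\alpha}\!\Big(|\partial_{x_k}\psi|^2+|\nabla_2\psi|^2+V_{1j}|\psi|^2\Big)\,dx\ -\ C''\!\int_{K_\alpha}\frac{|\psi|^2}{|x|^{2+\tau}}\,dx,
\end{equation*}
the constant $C''$ being harmless once $b$ is large.

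Next comes the ground–state substitution. Since $h_{1j}\ge 0$, there is a strictly positive solution $\phi$ of $(-\Delta_2+V_{1j})\phi=0$ on $\R^2$ with $\inf\phi>0$ and $\sup\phi<\infty$; its behaviour at infinity ($\phi\to\text{const}$ in the resonant case, $\phi$ growing otherwise) is precisely what the decay analysis of Section \ref{preliminiaries} supplies. Slicing at fixed $x_k$, the slice of $K_\alpha$ is the disc $\{\rho\le R\}$ with $\rho=(x_1^2+x_j^2)^{1/2}$ and $R=\gamma'|x_k|$, $\gamma'=\gamma/\sqrt{1-\gamma^2}$; writing $\psi=\phi g$ the standard ground–state identity yields
\begin{equation*}
\int_{\rho\le R}\!\big(|\nabla_2\psi|^2+V_{1j}|\psi|^2\big)\,dx_1dx_j=\int_{\rho\le R}\!\phi^2|\nabla_2 g|^2\,dx_1dx_j+\oint_{\rho=R}\frac{\partial_\nu\phi}{\phi}\,|\psi|^2\,d\ell,
\end{equation*}
with $\partial_\nu$ the outward radial derivative. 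Integrating over $x_k$ converts the line integral into a surface integral over $\partial K_\alpha(\gamma)$ and leaves the nonnegative bulk term $\int_{K_\alpha}\phi^2|\nabla_2 g|^2$.

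It remains to absorb the singular term and to reduce the boundary integral to the $s$–channel; here I would split $\psi=P_0[\alpha]\psi+\psi^\perp$ into its angular average in the $(x_1,x_j)$–angle $\theta$ and the complementary, mean–zero part. For $P_0[\alpha]\psi$ the singular term is swallowed by the radial and longitudinal kinetic energies via a Hardy inequality on the cone: in the variables $(\rho,x_k)$ the $s$–channel lives in effective dimension three, and the radial Hardy field $x\,|x|^{-2-\tau}$ is tangent to $\partial K_\alpha$ (one checks $x\cdot n=0$ on the cone), so it produces no boundary term. What survives is the ground–state boundary term, whose coefficient is the angular average of $\partial_\nu\phi/\phi$; by the decay estimates of Section \ref{preliminiaries} this average decays faster than $|x|^{-1-\tau}$ (the $s$–wave correction to $\phi$ decays fast because the monopole $\int V_{1j}\phi$ vanishes at criticality; in the non–resonant case $\partial_\nu\phi>0$ and the term is nonnegative and may be dropped), which is exactly the right–hand side of \eqref{1j_surface_est}. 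For $\psi^\perp$ the coupling of its mean–zero part to the constant mode of $\partial_\nu\phi/\phi$ drops out after angular integration, and the centrifugal part of the bulk term $\int_{K_\alpha}\phi^2|\nabla_2 g|^2$, which since $\phi\asymp 1$ is comparable to $\int\rho^{-2}|\partial_\theta\psi^\perp|^2\,d\theta\ge\int\rho^{-2}|\psi^\perp|^2\,d\theta$ (Wirtinger) with $\rho^{-2}\ge(\gamma'|x|)^{-2}$ on $K_\alpha$, dominates the singular term for small $\gamma$ and large $b$ and, through a trace estimate, absorbs the remaining boundary contributions carrying a factor $\psi^\perp$. Collecting the pieces gives \eqref{1j_surface_est}.

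The main obstacle is the absence of rotational symmetry of $V_{1j}$, caused by the cross term $-2\cos(\zeta)(m_1m_j)^{-1/2}x_1x_j$ in \eqref{distanc_of_particles}: the solution $\phi$ is itself not radial, so the ground–state representation—natural in $g=\psi/\phi$ with the non–radial weight $\phi^2$—must be reconciled with the projection $P_0[\alpha]$, which is defined on $\psi$. Controlling the ensuing cross terms rests on the sharp decay statements of Section \ref{preliminiaries}, namely that the angular average of $\partial_\nu\phi/\phi$ decays like $|x|^{-2}$ while the non-$s$–wave part of $\phi$ lies in $L^2(\R^2)$; establishing these for the non–symmetric two–body problem is the technical heart of the argument.
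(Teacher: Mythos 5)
Your overall architecture (discard the $\beta\neq\alpha$ potentials on the cone, isolate the two--body form of $h_{1j}$ slice by slice in $x_k$, and split off the $s$--channel) matches the paper's starting point, but the core mechanism you propose — a ground--state substitution $\psi=\phi g$ with a boundary term $\oint_{\rho=R}(\partial_\nu\phi/\phi)\abs{\psi}^2\,d\ell$ — is different from what the paper does, and as written it has a genuine gap. Everything in your argument ultimately rests on two unproven quantitative properties of a positive zero--energy solution $\phi$ of the \emph{non--radially--symmetric} two--body problem: (i) a two--sided pointwise bound $0<\inf\phi\le\sup\phi<\infty$ (which you simultaneously assert and contradict by allowing ``$\phi$ growing otherwise''), and (ii) decay of order $\abs{x}^{-2}$ for the angular average of $\partial_\nu\phi/\phi$, together with pointwise control of the non--radial part of $\nabla\phi$ needed to kill the cross terms $\oint(\partial_\nu\phi/\phi)(P_0\psi)\overline{\psi^\perp}$ and to absorb the $\abs{\psi^\perp}^2$ boundary contribution. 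Since $\phi$ is not radial, $\partial_\nu\phi/\phi$ is not constant in the angle, so the mean--zero part of $\psi^\perp$ does \emph{not} decouple from it; you would need decay of $P_\perp(\partial_\nu\phi)$, not just of $P_\perp\phi$. None of this is supplied by the paper's Section~\ref{preliminiaries}: Lemma~\ref{first_lemma} only shows $(1+\abs{\cdot})^{l}P_\perp\varphi_0\in L^2(\R^2)$, i.e.\ a weighted $L^2$ bound on the function itself, with no information on its gradient and no information whatsoever on the radial derivative of the $s$--wave part $P_0\varphi_0$. The monopole heuristic ``$\int V_{1j}\phi=0$ at criticality'' is exactly the kind of statement that is delicate for non--radial, non--compactly--supported $V_{1j}$, and you do not prove it. So the proposal reduces Lemma~\ref{L_1j_lemma} to decay estimates that are at least as hard as the lemma and are not available.

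For contrast, the paper avoids ever needing pointwise or gradient information on the resonance. It cuts off $P_\perp\psi_\alpha$ smoothly inside the cone (so that only $P_0\psi$ survives on $\partial K_\alpha(\gamma)$, which is how the right--hand side of \eqref{1j_surface_est} ends up involving only $P_0[\alpha]\psi$), extends the resulting function $\psi_2$ to all of $\R^3$ by its boundary value, and then projects the extension onto $\varphi_0$ in the $\dot H^1$ inner product. Positivity then comes from the variational coercivity estimate in Lemma~\ref{virtual_level_lemma}(2), $\langle\psi,h\psi\rangle\ge\mu\norm{\nabla\psi}^2$ on the orthogonal complement of $\varphi_0$, rather than from a ground--state identity; the only place the decay of the resonance enters is through the weighted $L^2$ bound on $P_\perp\varphi_0$ of Lemma~\ref{first_lemma}, used to compare two bulk integrals over $K_{12}(\gamma)$ and $K_{12}(\gamma,\tilde\gamma)$. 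If you want to salvage your route, you would have to first establish the pointwise lower bound on $\phi$ and the decay of the angularly averaged logarithmic derivative for non--symmetric potentials satisfying \eqref{short-range}; absent that, the argument does not close.
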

The statement of Lemma~\ref{L_1j_lemma} is similar to \cite[Lemma 6.7]{BBV:2022}, whereas its proof is significantly more complex. It is based on a detailed analysis of properties of zero--energy resonances of two--dimensional systems which will be done in Section \ref{preliminiaries}. 

For the functional $L_{23}[\psi_{23}]$ we prove the following bound whose proof is similar to the proof of \cite[Theorem 6.1]{BBV:2022}.
\begin{lemma} \label{L_23_lemma}
    Under the conditions of Theorem \ref{main_thrm}, there exists a constant $c > 0$ such that
     \begin{equation} \label{23_surface_est}
          L_{23}[\psi_{23}]\geq -c\int_{\partial K_{23}(\gamma)} \frac{\abs{\psi}^2}{\abs{x}^{1+\tau}} d\sigma \, . \\
     \end{equation}
\end{lemma}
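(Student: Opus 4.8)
The plan is to exploit that the pair $(23)$ is effectively one--dimensional. First I would pass to orthogonal coordinates $(z,\eta)\in\R^2\times\R$ adapted to this pair, where $\eta$ is a fixed multiple of the relative coordinate $x_2/\sqrt{m_2}-x_3/\sqrt{m_3}$ and $z=(x_1,s)$ collects $x_1$ and the $(23)$ center of mass coordinate $s$. In these coordinates $-\Delta_x=-\partial_\eta^2-\Delta_z$, the potential $V_{23}$ depends on $\eta$ alone by \eqref{distanc_of_particles}, and the cone $K_{23}(\gamma)$ becomes a thin slab $\{\abs{\eta}\le \gamma\abs{x}\}$, so that on $K_{23}(\gamma)$ one has $\abs{\eta}\le c_\gamma\abs{z}$ and $\abs{z}\gtrsim\abs{x}>b$. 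The positivity $h_{23}\ge 0$ noted under the conditions of Theorem~\ref{main_thrm} transfers to the relative operator $-\partial_\eta^2+V_{23}\ge 0$ on $L^2(\R)$.

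Next I would discard the two spectator potentials. Since $K_{23}(\gamma)$ is disjoint from $K_{1j}(\gamma)$ and the quadratic form defining $\abs{\mathbf{r}_{1j}}^2$ in \eqref{distanc_of_particles} is positive definite for $\zeta\in(0,\pi/2]$ (its determinant equals $\sin^2\zeta/(m_1m_j)>0$), one gets $\abs{\mathbf{r}_{1j}}\ge c\abs{x}$ on $K_{23}(\gamma)$; the decay hypothesis \eqref{short-range} with $\nu_{1j}=3+\delta$ then yields $\abs{V_{12}+V_{13}}\le C\abs{x}^{-(3+\delta)}$ there, a term absorbed into a small multiple of the Hardy term $\int_{K_{23}(\gamma)}\abs{x}^{-(2+\tau)}\abs{\psi}^2\,dx$ once $b$ is large, because $3+\delta>2+\tau$. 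Note that in $K_{23}(\gamma)$ particle $1$ is far from the pair, so the collision issues at $a_2=0$ or $a_3=0$ play no role here.

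The core step is a ground state substitution in the $\eta$ variable. Let $\phi>0$ solve $-\phi''+V_{23}\phi=0$ on $\R$; such a positive solution exists precisely because $-\partial_\eta^2+V_{23}\ge 0$. Writing $\psi=\phi g$ and integrating, for each fixed $z$ over the slab interval $I=I(z)$, the identity
\begin{equation*}
\int_{I}\Bigl(\abs{\partial_\eta\psi}^2+V_{23}\abs{\psi}^2\Bigr)\,d\eta=\int_{I}\phi^2\abs{\partial_\eta g}^2\,d\eta+\Bigl[\phi\phi'\abs{g}^2\Bigr]_{\partial I},
\end{equation*}
I would drop the nonnegative bulk term and keep only the boundary contribution on $\partial K_{23}(\gamma)$. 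Because $\nu_{23}=2+\delta>2$, the zero--energy solutions are asymptotically linear, $\phi(\eta)=a_\pm\eta+b_\pm+O(\abs{\eta}^{-\delta})$ and $\phi'(\eta)=a_\pm+O(\abs{\eta}^{-1-\delta})$, and positivity forces $a_+\ge 0\ge a_-$. At an endpoint where $\phi$ grows ($a_\pm\neq 0$) the boundary term has the favorable sign, while at an endpoint where $\phi$ stays bounded ($a_\pm=0$) one has $\abs{\phi\phi'}\,\abs{g}^2\le C\abs{x}^{-(1+\delta)}\abs{\psi}^2$. Since $\delta>\tau$, in every case the $\eta$--boundary term is bounded below by $-c\int_{\partial K_{23}(\gamma)}\abs{x}^{-(1+\tau)}\abs{\psi}^2\,d\sigma$, which is the right--hand side of \eqref{23_surface_est}.

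Finally I would control the leftover Hardy term by the transverse kinetic energy $\int_{K_{23}(\gamma)}\abs{\nabla_z\psi}^2\,dx$, which the substitution left untouched. On $K_{23}(\gamma)$ we have $\abs{x}\ge\abs{z}$ and, for fixed $\eta$, the function $\psi(\cdot,\eta)$ is supported in $\{\abs{z}>b'\}$ with $b'\sim b$; the logarithmic two--dimensional Hardy inequality in $z$ together with the gain $\sup_{\abs{z}>b'}(\ln\abs{z})^2\abs{z}^{-\tau}\le (\ln b')^2/(b')^{\tau}\to 0$ gives
\begin{equation*}
\int_{K_{23}(\gamma)}\frac{\abs{\psi}^2}{\abs{x}^{2+\tau}}\,dx\le \frac{C(\ln b')^2}{(b')^{\tau}}\int_{K_{23}(\gamma)}\abs{\nabla_z\psi}^2\,dx ,
\end{equation*}
so for $b$ large the Hardy term is absorbed by $\int_{K_{23}(\gamma)}\abs{\nabla_z\psi}^2\,dx$ and only the boundary term survives, proving \eqref{23_surface_est}. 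I expect the main obstacle to be the resonance step: establishing the linear asymptotics and the sign and size of $\phi\phi'$ at the slab boundary uniformly in the resonant and subcritical cases. Here, however, \emph{unlike} the genuinely two--dimensional analysis required for Lemma~\ref{L_1j_lemma}, this reduces to elementary one--dimensional ODE estimates, which is why the argument parallels \cite[Theorem 6.1]{BBV:2022}.
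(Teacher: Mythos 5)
Your overall architecture matches the paper's: split off the relative coordinate of the pair $(23)$, discard the spectator potentials via \eqref{short-range}, extract a boundary term from the one--dimensional positivity of $-\partial_q^2+V_{23}$, and absorb the remaining Hardy correction into the transverse kinetic energy. Your treatment of the one--dimensional step is a genuine alternative: the paper does not use a ground--state substitution but instead extends $\psi$ linearly to a function in $H^1(\R)$ outside the slab and invokes $h_{23}\geq 0$ on the whole line (Lemma~\ref{1D_hamiltonain_estimate}, i.e.\ \cite[Lemma 6.2]{BBV:2022}), which yields the same $\abs{x}^{-1-\delta}$ boundary weight without needing existence, positivity, or Levinson asymptotics of a zero--energy solution $\phi$. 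Your route can be made to work, but it carries exactly the technical burden you flag, which the extension trick avoids entirely.

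The transverse step, however, contains a genuine gap. The inequality
\begin{equation*}
\int_{K_{23}(\gamma)}\frac{\abs{\psi}^2}{\abs{x}^{2+\tau}}\,dx\;\leq\;\varepsilon(b)\int_{K_{23}(\gamma)}\abs{\nabla_z\psi}^2\,dx
\end{equation*}
is false for any $\varepsilon(b)$. Take $\psi(z,q)=v(q)$ depending only on the relative coordinate, with $v$ supported in $\{\abs{q}\geq b\}$ (then $\supp\psi\subset\{\abs{x}\geq b\}$ since $\abs{x}\geq\abs{q}$, and after a truncation at large $\abs{q}$ it is admissible). Then $\nabla_z\psi\equiv 0$, so the right--hand side vanishes, while the left--hand side equals $c\int\abs{v(q)}^2\abs{q}^{-\tau}dq>0$. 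The underlying misstep is the appeal to the logarithmic two--dimensional Hardy inequality (Lemma~\ref{2d_2_hardy_inq}) for $\psi(\cdot,\eta)$ on a fixed slice: that inequality requires the angular mean of the function to vanish on some circle, and on the slices where the inner circle of $K_{23}(\gamma)$ lies on the lateral boundary $\partial K_{23}(\gamma)$ (i.e.\ $\abs{q}\gtrsim b$), $\psi$ satisfies no such condition. This is precisely why the paper first subtracts $\overline{\psi}_1$, the angular average of $\psi_{23}$ on $\partial K_{23}(\gamma)$ extended constantly in $\rho$: the difference $\mathscr{F}=\psi_{23}-\overline{\psi}_1$ does satisfy the hypothesis of Lemma~\ref{2d_2_hardy_inq} at $\rho=\rho_0$ and can be absorbed into $\int\abs{\nabla_{(\rho,\varphi)}\mathscr{F}}^2$ for large $b$, while the $\overline{\psi}_1$ contribution cannot be absorbed and is instead estimated by the boundary integral $\int_{\partial K_{23}(\gamma)}\abs{\psi}^2\abs{x}^{-1-\tau}d\sigma$ — which is harmless because the conclusion \eqref{23_surface_est} permits exactly such a term. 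Your proof needs this additional decomposition; without it the absorption step collapses on the $q$--dependent (transversally constant) part of $\psi$.
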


    As the second step, applying the one--dimensional Trace Theorem (see, \cite[Theorem 1, p. 272]{book:Evans}) and Hardy Inequality we show that the right--hand sides of \eqref{1j_surface_est} and \eqref{23_surface_est} can be controlled by a small part of the kinetic energy term on the set $\Omega(\gamma)$. This is done in the following two lemmas.
\begin{lemma} \label{trace_lemma_001}
    For $\gamma_1 \in (\gamma,1)$ and $\alpha \in \{(12),(13)\}$, we define $K_{\alpha}(\gamma, \gamma_1) \coloneqq K_{\alpha}(\gamma_1) \setminus K_{\alpha}(\gamma)$. For any $\varepsilon > 0$, for sufficiently large $b>0$ holds

    \begin{equation} \label{trace_thrm_in_action_001}
        \int_{\partial K_{\alpha}(\gamma) } \frac{\abs{P_0[\alpha]\psi }^2}{\abs{x}^{1+\tau}} d\sigma \leq \varepsilon \int_{K_{\alpha}(\gamma, \gamma_1)} \abs{\nabla \psi }^2 dx \, .
    \end{equation}
\end{lemma}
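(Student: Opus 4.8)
The plan is to pass to coordinates adapted to the cone and then reduce everything to a one-dimensional trace estimate. For $\alpha=(1j)$, write $\rho\coloneqq(x_1^2+x_j^2)^{1/2}$; since $\abs{x}^2=\rho^2+x_k^2$ for the remaining index $k$, the defining inequality $\rho\le\gamma\abs{x}$ of $K_\alpha(\gamma)$ says exactly that $\sin\theta\le\gamma$, where $\theta$ is the polar angle from the $x_k$-axis. Thus $K_\alpha(\gamma)$ is a circular double cone about that axis. I would introduce spherical coordinates $(r,\theta,\phi)$ with $r=\abs{x}$ and $\phi$ the rotation angle in the $(x_1,x_j)$-plane, so that $\partial K_\alpha(\gamma)=\{\theta=\theta_0\}\cup\{\theta=\pi-\theta_0\}$ with $\sin\theta_0=\gamma$, and $K_\alpha(\gamma,\gamma_1)=\{\theta_0<\theta\le\theta_1\}\cup\{\pi-\theta_1\le\theta<\pi-\theta_0\}$ with $\sin\theta_1=\gamma_1$. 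By the symmetry $x_k\mapsto-x_k$ it is enough to treat a single nappe, say $\theta\in[\theta_0,\theta_1]$.

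Next I would remove the projection. Since $P_0[\alpha]$ is the average over $\phi$, the function $u\coloneqq P_0[\alpha]\psi$ depends only on $(r,\theta)$, vanishes in a neighborhood of $\{\abs{x}=b\}$ (the compact support of $\psi$ has positive distance from $\{\abs{x}=b\}$), and still lies in $C^1$. Because the shell $K_\alpha(\gamma,\gamma_1)$ and the weight are $\phi$-invariant, Jensen's inequality applied to $\partial_r$ and $\partial_{x_k}$ (the $\partial_\phi$-component being killed by the averaging) gives $\int_{K_\alpha(\gamma,\gamma_1)}\abs{\nabla u}^2\,dx\le\int_{K_\alpha(\gamma,\gamma_1)}\abs{\nabla\psi}^2\,dx$. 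Hence it suffices to bound the left-hand side of \eqref{trace_thrm_in_action_001} by $\varepsilon\int_{K_\alpha(\gamma,\gamma_1)}\abs{\nabla u}^2\,dx$. Using $d\sigma=r\sin\theta_0\,dr\,d\phi$ on the cone and $dx=r^2\sin\theta\,dr\,d\theta\,d\phi$, the surface integral over one nappe equals $2\pi\sin\theta_0\int_0^\infty\abs{u(r,\theta_0)}^2 r^{-\tau}\,dr$, while the target is $2\pi\int_0^\infty\!\!\int_{\theta_0}^{\theta_1}\bigl(\abs{\partial_r u}^2 r^2+\abs{\partial_\theta u}^2\bigr)\sin\theta\,d\theta\,dr$.

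Then, for each fixed $r$, I would apply the one-dimensional Trace Theorem on $[\theta_0,\theta_1]$ (of fixed length $h=\theta_1-\theta_0$), which yields
\[
\abs{u(r,\theta_0)}^2\le \frac{C}{h}\int_{\theta_0}^{\theta_1}\abs{u}^2\,d\theta+Ch\int_{\theta_0}^{\theta_1}\abs{\partial_\theta u}^2\,d\theta.
\]
Multiplying by $r^{-\tau}$ and integrating in $r$, the second (angular) term is already of the right shape: on $\supp\psi$ one has $r>b$, so $r^{-\tau}\le b^{-\tau}$, and $\sin\theta\ge\sin\theta_0$ on the shell, whence it is controlled by $b^{-\tau}$ times the angular part of the target. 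For the first term I would invoke, for each fixed $\theta$, the weighted Hardy inequality in $r$ with exponent matched to $\tau$,
\[
\int_b^\infty\abs{u}^2 r^{-\tau}\,dr\le C_\tau\int_b^\infty\abs{\partial_r u}^2 r^{2-\tau}\,dr\le C_\tau b^{-\tau}\int_b^\infty\abs{\partial_r u}^2 r^2\,dr,
\]
legitimate because $u$ vanishes near $r=b$ and $\tau<1$; this bounds the first term by $b^{-\tau}$ times the radial part of the target. Summing over the two nappes and choosing $b$ so large that the resulting prefactor is below $\varepsilon$ completes the argument.

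The computations are elementary once the conical geometry and the coordinate change are fixed; the one point requiring care is the bookkeeping of the weights. All of the smallness must be extracted from the single factor $b^{-\tau}$ arising from the mismatch between the surface weight $\abs{x}^{-1-\tau}$ and the unweighted volume Dirichlet integral, since the Trace Theorem and the Hardy inequality contribute only $O(1)$ constants. The main obstacle I anticipate is therefore purely in verifying that, after the change to $(r,\theta)$-coordinates, both the angular term (via the trace) and the radial term (via Hardy) genuinely acquire the factor $b^{-\tau}$ against the weight $r^2\sin\theta$ of the volume integral, together with checking admissibility of the Hardy exponent $2-\tau$ (guaranteed since $\tau$ may be taken in $(0,1)$).
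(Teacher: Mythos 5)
Your proof is correct and follows essentially the same route as the paper: spherical coordinates adapted to the cone, the one-dimensional trace theorem in the angular variable $\theta$ on the shell $K_\alpha(\gamma,\gamma_1)$, a radial Hardy inequality for the zeroth-order term, and extraction of the smallness from the factor $b^{-\tau}$. The only (immaterial) differences are that the paper discards the projection by bounding the surface integral of $\abs{P_0[\alpha]\psi}^2$ by that of $\abs{\psi}^2$ rather than passing to $u=P_0[\alpha]\psi$ via Jensen, and that it first pulls out the factor $(b/2)^{-\tau}$ and then applies the unweighted radial Hardy inequality on the conical set instead of your weighted one-dimensional version.
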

\begin{lemma} \label{trace_lemma_002}
        Let $\gamma_1 \in (\gamma,1)$ and define $K_{23}(\gamma, \gamma_1) \coloneqq K_{23}(\gamma_1) \setminus K_{23}(\gamma)$. For any $\varepsilon > 0$ we have  
    \begin{equation} \label{trace_thrm_in_action_002}
        \int_{\partial K_{23}(\gamma) } \frac{\abs{\psi}^2}{\abs{x}^{1+\tau}} d\sigma \leq \varepsilon \int_{K_{23}(\gamma, \gamma_1)} \abs{\nabla \psi }^2 dx \, .
    \end{equation}
    for all sufficiently large $b>0$. 
\end{lemma}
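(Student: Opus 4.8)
The plan is to reduce the surface integral over $\partial K_{23}(\gamma)$ to a one–dimensional trace in the direction transverse to the cone, and then to absorb the resulting bulk terms into the Dirichlet energy on the shell $K_{23}(\gamma,\gamma_1)$ by means of Hardy's inequality. First I would exploit the conical structure. After an orthogonal change of coordinates $(x_1,x_2,x_3)\mapsto (u,v,w)$ with $u\coloneqq (x_2/\sqrt{m_2}-x_3/\sqrt{m_3})/\sqrt{1/m_2+1/m_3}$, the defining condition of $K_{23}(\gamma)$ becomes $\abs{u}\leq\gamma'\abs{x}$ with $\gamma'\coloneqq\gamma/\sqrt{1/m_2+1/m_3}$, so that $K_{23}(\gamma)$ is a circular cone of revolution about the $u$–axis. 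Passing to spherical coordinates $(R,\vartheta,\varphi)$ adapted to this axis, with $\vartheta$ the polar angle measured from the $u$–axis, the cone becomes the band $\abs{\cos\vartheta}\leq\gamma'$; its boundary consists of the two sheets $\vartheta=\vartheta_0$ and $\vartheta=\pi-\vartheta_0$ with $\vartheta_0\coloneqq\arccos\gamma'$, while the shell $K_{23}(\gamma,\gamma_1)$ corresponds to $\vartheta\in[\vartheta_1,\vartheta_0)$ and its reflection, where $\vartheta_1\coloneqq\arccos\gamma_1'<\vartheta_0$. By symmetry it suffices to treat one sheet, on which the surface element is $d\sigma=R\sin\vartheta_0\,dR\,d\varphi$.

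Next, for fixed $R$ and $\varphi$ I would view $\psi$ as a function of $\vartheta$ on the interval $[\vartheta_1,\vartheta_0]$ and apply the quantitative form of the one–dimensional Trace Theorem \cite[Theorem 1, p. 272]{book:Evans}: with $\Delta\coloneqq\vartheta_0-\vartheta_1$ an $O(1)$ constant fixed by $\gamma,\gamma_1$, there is a constant $C>0$ with
\[
\abs{\psi(R,\vartheta_0,\varphi)}^2 \leq \frac{C}{\Delta}\int_{\vartheta_1}^{\vartheta_0}\abs{\psi}^2\,d\vartheta + C\Delta\int_{\vartheta_1}^{\vartheta_0}\abs{\partial_\vartheta\psi}^2\,d\vartheta .
\]
Substituting this into the surface integral and integrating in $R$ and $\varphi$ splits it into a gradient part and an $L^2$ part. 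For the gradient part I would use $\abs{\partial_\vartheta\psi}^2=R^2\abs{\hat e_\vartheta\cdot\nabla\psi}^2\leq R^2\abs{\nabla\psi}^2$, where $\hat e_\vartheta$ is the unit transverse direction, together with $R^{2-\tau}\leq b^{-\tau}R^2$ on $\supp\psi\subset\{\abs{x}>b\}$; this bounds the gradient part by $Cb^{-\tau}\int_{K_{23}(\gamma,\gamma_1)}\abs{\nabla\psi}^2\,dx$.

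Finally, the $L^2$ part reduces, after reinserting $dx=R^2\sin\vartheta\,d\vartheta\,d\varphi\,dR$ and using that $\sin\vartheta$ is bounded above and below by positive constants on the shell, to a multiple of $\int_{K_{23}(\gamma,\gamma_1)}\abs{\psi}^2\abs{x}^{-2-\tau}\,dx$. Here the decisive point is that $K_{23}(\gamma,\gamma_1)$ is a \emph{conical} shell, hence a union of full radial rays; thus I may apply the radial form of the three–dimensional Hardy inequality, $\int_0^\infty\abs{f}^2\,dR\leq 4\int_0^\infty\abs{f'}^2R^2\,dR$, along each ray $R\mapsto\psi(R\omega)$ \emph{without leaving the shell}. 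Combined with $\abs{x}^{-2-\tau}\leq b^{-\tau}\abs{x}^{-2}$ and $\abs{\partial_R\psi}\leq\abs{\nabla\psi}$, this bounds the $L^2$ part by $Cb^{-\tau}\int_{K_{23}(\gamma,\gamma_1)}\abs{\nabla\psi}^2\,dx$ as well. Since both contributions carry the factor $b^{-\tau}\to 0$, choosing $b$ large enough makes their sum at most $\varepsilon\int_{K_{23}(\gamma,\gamma_1)}\abs{\nabla\psi}^2\,dx$, which is \eqref{trace_thrm_in_action_002}.

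The step I expect to require the most care is the confinement of Hardy's inequality to the shell: it is essential that the right–hand side of \eqref{trace_thrm_in_action_002} integrates only over $K_{23}(\gamma,\gamma_1)$ and not over all of $\R^3$, and this is exactly what the scale invariance of the conical shell secures, since every radial ray meeting the shell stays inside it. Keeping track of the uniformly bounded Jacobian factors $\sin\vartheta$ in the adapted spherical coordinates, and of the dependence of the trace constant on the fixed angular gap $\Delta=\vartheta_0-\vartheta_1$, is then routine.
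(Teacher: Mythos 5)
Your proposal is correct and follows essentially the same route as the paper: pass to the orthogonal coordinates in which the relative coordinate of the pair $(23)$ (your $u$ is exactly the paper's $q$) makes $K_{23}(\gamma)$ a cone of revolution, apply the one--dimensional trace theorem in the polar angle over the angular gap between $\partial K_{23}(\gamma)$ and $\partial K_{23}(\gamma_1)$, and absorb both resulting bulk terms into the Dirichlet energy on the conical shell via the radial Hardy inequality, with the smallness coming from the factor $b^{-\tau}$. This is precisely the argument the paper invokes by declaring the proof ``analogous to Lemma~\ref{trace_lemma_001}'' in the $(x_1,q,\xi)$ coordinates.
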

Assuming the Lemmas~\ref{L_1j_lemma}, \ref{L_23_lemma}, \ref{trace_lemma_001} and \ref{trace_lemma_002} for the moment, 
we give  the 
\begin{proof}[Proof of Theorem  \ref{main_thrm}:]
Using the bounds of  Lemma~\ref{trace_lemma_001} and \ref{trace_lemma_002} in \eqref{expanded_L} and assuming that $\gamma_1>\gamma$ 
is close enough to $\gamma$ so that the regions $K_{\alpha}(\gamma,\gamma_1)= K_{\alpha}(\gamma_1) \setminus K_{\alpha}(\gamma)$ 
for $\alpha \in I$ do not overlap, we arrive at
\begin{equation} \label{used_the_non_triv_lemmas}
    \begin{split}
        L[\psi]     &\geq  L_0[\psi_0]-\varepsilon \sum_{\alpha \in \{(12),(13)\}}\int_{K_{\alpha}(\gamma, \gamma_1)} \abs{\nabla \psi }^2 dx  - \varepsilon \int_{K_{23}(\gamma, \gamma_1)} \abs{\nabla \psi }^2 dx \\
                    &\geq  L_0[\psi_0]-3\varepsilon  \int_{\Omega(\gamma)} \abs{\nabla \psi }^2 dx \, .
    \end{split}
\end{equation}
where $\varepsilon>0$ is arbitrary small and $b>0$ large. Choosing $b>0$ large enough, Lemma~\ref{potentials_small_outside_cones} shows that each of the potentials satisfies $\abs{V_{\alpha}} \leq C\abs{x}^{-\nu_{\alpha}}$ for some constant $C>0$ on $\Omega(\gamma)\cap\{x: \abs{x}>b\}$. Thus, for fixed $\varepsilon>0$ we have $\abs{V_{\alpha}} \leq \varepsilon \abs{x}^{-2}$ for $x\in \Omega(\gamma)$ with $\abs{x}>b$ and all sufficiently large $b>0$. Hence, 
\begin{equation} \label{up_to_radial_done}
    L_0[\psi_0 ] -3\varepsilon  \int_{\Omega(\gamma)} \abs{\nabla \psi }^2 dx \geq (1-3\varepsilon) \int_{\Omega(\gamma)} \abs{\nabla \psi }^2 dx - \varepsilon \int_{\Omega(\gamma)} \frac{\abs{\psi}^2}{\abs{x}^{2}} dx \, .
\end{equation}
The set $\Omega(\gamma) \subset \R^3$ is conical and applying the radial Hardy Inequality for the last term on the right--hand side of \eqref{up_to_radial_done} (see \ref{radial_hardy_inequality}) yields
\begin{equation}
    L_0[\psi_0 ] -3\varepsilon  \int_{\Omega(\gamma)} \abs{\nabla \psi }^2 dx \geq (1-7\varepsilon) \int_{\Omega(\gamma)} \abs{\nabla \psi }^2 dx  \, .
\end{equation}
This completes the proof of Theorem \ref{main_thrm}.
\end{proof}
\section{Zero--Energy Resonances} \label{preliminiaries}
To prove Lemma~\ref{L_1j_lemma} we will need some properties of zero--energy resonances of two--particle Schr\"odinger Operators in dimension two. The most important of them are the estimates on the decay of such resonances, which will be given in Lemma~\ref{first_lemma}.

Let
\begin{equation} \label{schrodinger}
    h=-\Delta+V \text{ on } L^2(\R^2)
\end{equation}
with $V$ satisfying \eqref{short-range} with parameter $\nu_{\alpha}=3+\delta$ for some $\delta>0$ and $V(x)=V(-x)$. Following \cite{Y:1975}  we say that $h$ has a virtual level (zero--energy resonance) if $h\geq 0$ and for any $\varepsilon>0$, $h+\varepsilon \Delta$ has an eigenvalue below zero.   

Let $\dot H^1(\R^2)$ be the homogeneous Sobolev space defined as 
\begin{equation}
    \begin{split}
        \dot H^1(\R^2) &\coloneqq \{u \in L^2_{\loc}(\R^2): \nabla u \in L^2(\R^2)  \},
    \end{split}
\end{equation}
equipped with the norm
\begin{equation}
        \norm{u}_{\dot H^1(\R^2)} \coloneqq \left( \int_{\R^2}\abs{\nabla u}^2 dx +  \int_{\abs{x}\leq 1}\abs{ u}^2 dx \right)^{1/2} \, .
\end{equation}
 We will use the following result of \cite[Theorem 2.2]{BBV:2022}
\begin{lemma} \label{virtual_level_lemma}
    Assume that $h$ has a virtual level. Then 
    \begin{enumerate}
    \item there exists a unique non--negative $\varphi_0 \in \dot H^1(\R^2)$ with $\norm{\varphi_0}_{\dot H^1(\R^2)} = 1$ such that for any $\psi \in  \dot H^1(\R^2) $
        \begin{equation} \label{generalized_zero_solution}
            \langle \nabla \psi, \nabla \varphi_0 \rangle + \langle \psi, V \varphi_0 \rangle = 0 \, . 
        \end{equation}
    \item there exists $\mu>0$ such that for any $\psi \in H^1(\R^2)$ with $\langle \nabla \psi, \nabla \varphi_0 \rangle =0 $ 
        \begin{equation} \label{ortho_to_zero}
            \langle \psi, h \psi \rangle \geq \mu \norm{\nabla \psi}^2 \, . 
        \end{equation}
\end{enumerate}
\end{lemma}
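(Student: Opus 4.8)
The plan is to characterize the virtual level variationally. Set $Q[\psi] := \norm{\nabla\psi}^2 + \langle \psi, V\psi\rangle$ and study the minimization of $Q[\psi]/\norm{\nabla\psi}^2$ over the homogeneous space $\dot H^1(\R^2)$. Both claims should follow once I establish that the virtual-level hypothesis is equivalent to
\[
\lambda_0 := \inf\{ Q[\psi] : \psi \in \dot H^1(\R^2),\ \norm{\nabla\psi} = 1 \} = 0 ,
\]
with the infimum attained. Unwinding the definition, $h+\veps\Delta = -(1-\veps)\Delta + V$ has a negative eigenvalue for every $\veps>0$ exactly when $\lambda_0 \le 0$, while $h\ge 0$ gives $\lambda_0 \ge 0$; the latter requires approximating a general $\psi\in\dot H^1(\R^2)$ by compactly supported $H^1$ functions (a standard logarithmic-cutoff argument, unavoidable in the critical dimension two where $\dot H^1$ functions may grow logarithmically) and using continuity of $Q$ along the approximation.

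The technical heart is a \emph{form-compactness} statement: if $\psi_n \rightharpoonup \psi$ weakly in $\dot H^1(\R^2)$, meaning $\nabla\psi_n \rightharpoonup \nabla\psi$ in $L^2$ and $\psi_n \to \psi$ in $L^2_{\loc}$, then $\langle\psi_n, V\psi_n\rangle \to \langle\psi, V\psi\rangle$. I would prove this by splitting $V$ at radius $R$. On $\{\abs{x}\le R\}$ convergence follows from $V\in L^2_{\loc}$ together with the compact embedding of $\dot H^1(\R^2)$ into $L^4_{\loc}(\R^2)$. On $\{\abs{x}>R\}$ I would exploit $\abs{V}\le C\abs{x}^{-3-\delta}$, which decays strictly faster than the critical weight $\abs{x}^{-2}$: a one-dimensional radial estimate along rays yields that $\int_{\abs{x}>R}\abs{x}^{-3-\delta}\abs{\psi}^2\,dx \to 0$ as $R\to\infty$, uniformly over sequences bounded in $\dot H^1(\R^2)$ (with the boundary traces at radius $R$ controlled by the trace theorem). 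This step is where the stronger decay assumption $\nu_{1j}=3+\delta$ is genuinely used.

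Granting form-compactness, claim (1) follows by the direct method. Take a minimizing sequence with $\norm{\nabla\psi_n}=1$ and $Q[\psi_n]\to 0$, and extract a weak limit $\varphi_0$. Form-compactness gives $\langle\psi_n,V\psi_n\rangle \to \langle\varphi_0,V\varphi_0\rangle$; were $\varphi_0=0$, this limit would vanish and force $Q[\psi_n]\to 1$, a contradiction, so $\varphi_0\ne 0$. Lower semicontinuity of the gradient norm together with $\lambda_0=0$ then force $Q[\varphi_0]=0$ and $\norm{\nabla\varphi_0}=1$, so $\varphi_0$ is a genuine minimizer; rescaling achieves $\norm{\varphi_0}_{\dot H^1(\R^2)}=1$. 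Since the minimum value is $0$, the Euler--Lagrange equation carries Lagrange multiplier $0$, which is precisely \eqref{generalized_zero_solution} (varying in real and imaginary directions). Replacing $\varphi_0$ by $\abs{\varphi_0}$ does not increase $Q$ and preserves $\norm{\nabla\cdot}$, so the minimizer may be taken non-negative; elliptic regularity and a Harnack inequality upgrade this to $\varphi_0>0$. For uniqueness I would use the ground-state substitution $\psi=\varphi_0 u$, which after integrating by parts against the equation for $\varphi_0$ gives the representation $Q[\varphi_0 u]=\int \varphi_0^2\abs{\nabla u}^2$; vanishing of $Q$ then forces $u$ constant, so every minimizer is a multiple of $\varphi_0$.

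Finally, claim (2) is a spectral-gap statement that I would prove by contradiction using the uniqueness from (1). If no $\mu>0$ worked, there would be $\psi_n\in H^1(\R^2)$ with $\langle\nabla\psi_n,\nabla\varphi_0\rangle=0$, $\norm{\nabla\psi_n}=1$, and $Q[\psi_n]\to 0$ (nonnegativity of $h$ rules out negative values). This is again a minimizing sequence, so by the compactness argument its weak limit $\psi_*$ is a nonzero minimizer with $\norm{\nabla\psi_*}=1$, whence $\psi_*=\lambda\varphi_0$ with $\lambda\ne 0$ by (1). But weak convergence preserves the constraint, giving $\langle\nabla\psi_*,\nabla\varphi_0\rangle=0$, while $\psi_*=\lambda\varphi_0$ yields $\langle\nabla\psi_*,\nabla\varphi_0\rangle=\lambda\norm{\nabla\varphi_0}^2\ne 0$, a contradiction. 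I expect the main obstacles to be the form-compactness in the critical dimension --- the delicate interplay between the logarithmic growth permitted by $\dot H^1(\R^2)$ and the potential decay, including the boundary-trace bookkeeping at infinity --- and, secondarily, the positivity and uniqueness, since $\varphi_0$ is generally not in $L^2(\R^2)$ and so the usual $L^2$ simplicity-of-ground-state arguments must be replaced by the ground-state substitution identity together with Harnack.
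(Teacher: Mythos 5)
The paper does not prove this lemma at all: it is imported verbatim as \cite[Theorem 2.2]{BBV:2022}, so there is no in--paper argument to compare yours against. Your variational reconstruction --- minimize $Q[\psi]=\norm{\nabla\psi}^2+\langle\psi,V\psi\rangle$ over $\norm{\nabla\psi}=1$ in $\dot H^1(\R^2)$, identify the virtual--level hypothesis with $\lambda_0=0$ attained, and derive both claims from form--compactness of the potential term plus the ground--state substitution --- is the standard route and is essentially the argument of the cited reference. The ingredients you isolate (compactness of $\dot H^1\hookrightarrow L^p_{\loc}$ for the inner region, a logarithmic Hardy bound to make $\int_{\abs{x}>R}\abs{V}\abs{\psi_n}^2$ small uniformly, the representation $Q[\varphi_0 u]=\int\varphi_0^2\abs{\nabla u}^2$ for uniqueness, and the contradiction argument for the gap in (2)) are all correct. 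One small correction: the tail estimate only needs decay $\abs{x}^{-2-\delta}$; the stronger exponent $3+\delta$ is not what is ``genuinely used'' here but rather in the decay estimate of Lemma \ref{first_lemma}.

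The one step you should not wave at is the boundedness of your minimizing sequences in $\dot H^1(\R^2)$. The constraint $\norm{\nabla\psi_n}=1$ does not control $\int_{\abs{x}\le 1}\abs{\psi_n}^2$, because constants are a zero mode of the gradient seminorm; without a bound on the full $\dot H^1$ norm you cannot extract a weak limit, and your form--compactness lemma never gets off the ground. The fix is to first observe that $h\geq 0$ forces $\int V>0$ (test with the logarithmic cutoffs you already mention: they have vanishing gradient energy in the limit, so $\int V<0$ would make the form negative), then write $\psi_n=c_n+\phi_n$ with $\phi_n$ of mean zero on the unit ball, so that $\phi_n$ is bounded in $\dot H^1$ by Poincar\'e and $Q[\psi_n]=1+c_n^2\int V+\Oh(\abs{c_n})+\Oh(1)\to+\infty$ whenever $\abs{c_n}\to\infty$, contradicting $Q[\psi_n]\to 0$. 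The same normalization is needed in the compactness step of part (2). With that inserted, and with the logarithmic Hardy inequality made explicit in the cutoff argument establishing $\lambda_0\geq 0$, the proposal is a complete and correct proof.
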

\begin{remark}
    Note that, in general, $\varphi_0 \notin L^2(\R^2)$. Moreover, if the potential $V$ is radially symmetric and compactly supported, it is easy to see that $\varphi_0$ is also a radially symmetric function that does not decay at infinity. If $V$ is not radially symmetric, then $\varphi_0$ is not radially symmetric either. For this case, we prove, that if $V$ is symmetric under reflection, then the projection of $\varphi_0$ onto the subspace orthogonal to radially symmetric functions is in $L^2(\R^2)$. The proof is given in the next lemma.
\end{remark}
Let $P_0$ the projection onto radially symmetric functions in $L^2(\R^2)$ and  \mbox{$P_\perp \coloneqq 1-P_0$} the projection onto its orthogonal complement. The next result shows that even though \mbox{$\varphi_0\notin L^2(\R^2)$}  its projection $P_\perp \varphi_0$ is in a weighted $L^2$--space. 
\begin{lemma} \label{first_lemma}
    Let $\varphi_0$ be a virtual level of $h$, then there exists $l(\delta)>0$  such that
        \begin{equation} \label{Perp_in_L2}
            (1+\abs{\cdot})^{l}P_\perp \varphi_0 \in L^2(\R^2) \, .
        \end{equation}
\end{lemma}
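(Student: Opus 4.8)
The plan is to run a weighted energy (Agmon--type) estimate for $w:=P_\perp\varphi_0$ in which the decisive gain comes from an \emph{angular} Hardy inequality whose constant is boosted precisely by the reflection symmetry of $V$.

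\textbf{Step 1 (Symmetry reduction).} First I would use $V(x)=V(-x)$ to show $\varphi_0$ is even: the reflection $x\mapsto-x$ commutes with $-\Delta$ and fixes $V$, so $\varphi_0(-\cdot)$ is again a non--negative, $\dot H^1$--normalized weak solution of \eqref{generalized_zero_solution}; by the uniqueness in Lemma~\ref{virtual_level_lemma} it equals $\varphi_0$. Expanding into angular modes $\varphi_0=\sum_n\varphi_n(r)e^{in\theta}$, evenness removes all odd $n$, and $P_\perp$ removes $n=0$, so $w$ contains only modes with $\abs{n}\geq 2$. Consequently the elementary angular Hardy inequality
\begin{equation}
  \int_{\R^2}\abs{\nabla u}^2\,dx \;\geq\; 4\int_{\R^2}\frac{\abs{u}^2}{\abs{x}^2}\,dx
\end{equation}
holds for every $u$ in the range of $P_\perp$ built from even modes (simply discard the non--negative radial--derivative terms and use $n^2\geq 4$). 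This jump of the Hardy constant from the would--be value $1$ (which is what $n=\pm1$ would give) to $4$ is exactly what makes the lemma true, and it matches the ODE intuition: the forcing $V\varphi_0$ decays like $\abs{x}^{-3-\delta}$, so each channel with $\abs{n}\geq 2$ decays like $\abs{x}^{-1-\delta}$, whereas an $\abs{n}=1$ channel would decay only like $\abs{x}^{-1}$ and fail to be in $L^2$.

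\textbf{Step 2 (Master identity and Hardy).} I would test \eqref{generalized_zero_solution} with $\psi=\chi^2 w$, where $\chi=\chi(\abs{x})$ is radial, equals $\abs{x}^{l}$ for $\abs{x}>R$, and is cut off near the origin and (at first) near infinity. Since $\chi$ is radial and $w$ has no radial mode, $\chi^2 w$ also has no radial mode, whence $\la\nabla(\chi^2w),\nabla P_0\varphi_0\ra=0$, and the weak equation collapses to
\begin{equation}
  \int\abs{\nabla(\chi w)}^2\,dx-\int\abs{\nabla\chi}^2\abs{w}^2\,dx=-\la\chi^2 w,\,V\varphi_0\ra .
\end{equation}
Applying the angular Hardy inequality to $\chi w$ and using $\abs{\nabla\chi}^2=l^2\chi^2\abs{x}^{-2}$ on $\{\abs{x}>R\}$ yields
\begin{equation}
  (4-l^2)\int\frac{\chi^2\abs{w}^2}{\abs{x}^2}\,dx \;\leq\; C_R+\abs{\la\chi^2 w,\,V\varphi_0\ra},
\end{equation}
with $4-l^2>0$ as long as $l<2$.

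\textbf{Step 3 (Absorbing the potential term).} Writing $\abs{\varphi_0}\leq\abs{P_0\varphi_0}+\abs{w}$ I would split the right--hand side. The self--term $\int\abs{V}\chi^2\abs{w}^2$ is bounded by $\varepsilon\int\chi^2\abs{w}^2\abs{x}^{-2}$ on $\{\abs{x}>R\}$ because $\abs{V}\leq c\,\abs{x}^{-3-\delta}$ decays faster than $\abs{x}^{-2}$, and is absorbed into the left. For the cross--term I would use Cauchy--Schwarz,
\begin{equation}
  \int\abs{V}\,\abs{P_0\varphi_0}\,\chi^2\abs{w}\,dx \leq \Big(\int\abs{V}^2\abs{P_0\varphi_0}^2\chi^2\abs{x}^2\,dx\Big)^{1/2}\Big(\int\frac{\chi^2\abs{w}^2}{\abs{x}^2}\,dx\Big)^{1/2},
\end{equation}
again absorbing the second factor. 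The first factor is finite because $\abs{V}^2\leq c\,\abs{x}^{-6-2\delta}$ and, since $P_0\varphi_0$ is a \emph{radial} $\dot H^1$ function, $\abs{P_0\varphi_0(r)}\leq c\,(1+(\log r)^{1/2})$; the resulting radial integral $\int r^{2l-3-2\delta}\log r\,dr$ converges exactly when $l<1+\delta$. Choosing $l\in(1,\min(2,1+\delta))$ and letting the infinity cutoff tend to $\infty$ (monotone convergence on the now uniformly bounded left--hand side) gives $\int\abs{x}^{2(l-1)}\abs{w}^2\,dx<\infty$, i.e.\ \eqref{Perp_in_L2} with $l(\delta):=l-1\in(0,\delta)$.

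\textbf{Main obstacle.} The conceptual heart is Step 1: without the reflection symmetry the $n=\pm1$ channel survives, the Hardy constant drops to $1$, the coefficient $4-l^2$ forces $l<1$, and the weight exponent $l-1$ cannot be made positive. The remaining work is technical: justifying the cutoff/limiting procedure so that all weighted integrals are \emph{a priori} finite and the absorptions are legitimate, verifying the orthogonality $\la\nabla(\chi^2 w),\nabla P_0\varphi_0\ra=0$ via the vanishing of the angular average of $w$, and establishing the elementary growth bound $\abs{P_0\varphi_0(r)}\leq c\,(1+(\log r)^{1/2})$ from $P_0\varphi_0\in\dot H^1$, which conveniently replaces any genuine boundedness statement for $\varphi_0$.
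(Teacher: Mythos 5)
Your proposal is correct and follows essentially the same route as the paper: reflection symmetry plus the non--negativity/uniqueness of $\varphi_0$ forces it to be even, so $P_\perp\varphi_0$ lives in the angular modes $|n|\geq 2$ and enjoys the angular Hardy inequality with constant $4$; testing the weak equation with a radially weighted copy of $P_\perp\varphi_0$ (weight exponent strictly below $2$), using the IMS identity, and absorbing the localization error and potential terms then yields the weighted $L^2$ bound. The only differences are implementational: the paper regularizes the weight at infinity via $|x|^\kappa/(1+\omega|x|^\kappa)$ with $\kappa=1+\delta/4$ (which keeps the logarithmic derivative below $2$ uniformly, the point you correctly flag as the remaining technicality of your cutoff) and controls the cross term with the radial part via a logarithmic two--dimensional Hardy inequality, whereas you use a pointwise $O\left((\log r)^{1/2}\right)$ growth bound on $P_0\varphi_0$ --- both choices work and give a weight exponent of order $\delta$.
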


\subsection{Proof of Lemma \ref{first_lemma}}
Let $f\coloneqq P_0 \varphi_0$ and $g \coloneqq P_\perp \varphi_0 $. 
Since \( V(x) = V(-x) \), a virtual level \(\varphi_0\) can be either an even or odd function with respect to reflection. However, by Lemma~\ref{virtual_level_lemma}, \(\varphi_0\) is non--negative, which implies that it must be an even function. Consequently, for almost all $\abs{x}$, 
\begin{equation}
    \int_{0}^{2\pi} e^{\pm i \theta} g(\abs{x},\theta) d\theta =0 \, .
\end{equation}
Note that for all functions $F(\abs{x},\theta) \in \dot H^1(\R^2)$ orthogonal to radially symmetric functions with
\begin{equation} \label{higher_momentum}
    \int_{0}^{2\pi} e^{\pm i \theta} F(\abs{x},\theta) d\theta =0 
\end{equation}
the following inequality holds:
\begin{equation} \label{Hardy_type_ineq}
   \int_{\R^2} \abs{\nabla F}^2 \, dx \geq 4 \int_{\R^2} \frac{F(x)^2}{\abs{x}^2} \, dx \, .
\end{equation}
In particular \eqref{Hardy_type_ineq} holds for $F=g$. To prove \eqref{Perp_in_L2} it suffices now to show that 
\begin{equation} \label{grad_est_g}
    \nabla \left( (1+\abs{\cdot})^{l +1 }  g \right) \in L^2(\R^2)\,  .
\end{equation}
Choose $\xi \in C^\infty([0,\infty))$ with $\xi(t) = 0$ for $t\leq 1$ and $\xi(t)=1$ for $t \geq 2$, such that $\xi(t)\leq 1$ and $\xi'(t) \leq 2$ for any $t\in [0,\infty)$. For any $\omega,\kappa,\beta >0$ we define
\begin{equation} \label{def_of_G}
     G(\abs{x})  \coloneqq \frac{\abs{x}^\kappa}{1+\omega \abs{x}^\kappa} \xi(\abs{x}/\beta)\, .
\end{equation}
Inserting $\psi = G^2 g$ into  \eqref{generalized_zero_solution}  and writing $\varphi_0 = f+g$ yields
\begin{equation} \label{starting_point}
    0= \langle G^2 g, h f \rangle + \langle G^2 g, h g \rangle \, .
\end{equation}
Since $P_0$ and $P_\perp$ commute with $-\Delta$, it holds
\begin{equation} \label{used_orto}
        \langle G^2 g, h f\rangle =\langle G g, V G f\rangle  \, .
\end{equation}
Observe that 
\begin{equation} \label{IMS}
    \begin{split}
        \langle G^2  g, h  g\rangle &=  \langle G  g, h G g\rangle  - \left\langle  G g, \frac{\abs{\nabla G}^2}{G^2}  G g\right\rangle \\
        &= \norm{\nabla(Gg)}^2 + \langle G  g, V G g\rangle  - \left\langle  G g, \frac{\abs{\nabla G}^2}{G^2}  G g\right\rangle\, .
    \end{split}
\end{equation}
Combining \eqref{starting_point}, \eqref{used_orto} and \eqref{IMS} we arrive at
\begin{equation} \label{go_on_from_here}
     \norm{\nabla(Gg)}^2 + \langle G  g, V G g\rangle  - \left\langle  G g, \frac{\abs{\nabla G}^2}{G^2}  G g\right\rangle + \langle G g, V G f\rangle = 0\, .
\end{equation}
We claim, there exists some $\varepsilon>0$ and a constant $c(\beta)>0$ that both are independent of $\omega$ such that 
\begin{equation} \label{wish_for}
    \langle G  g, V G g\rangle  - \left\langle  G g, \frac{\abs{\nabla G}^2}{G^2}  G g\right\rangle + \langle G g, V G f\rangle \geq - (1-\varepsilon) \norm{\nabla(Gg)}^2  -c(\beta) \norm{\varphi_0}_{\dot H^1}^2 \, .
\end{equation}
Assuming this claim for the moment, we complete the proof of Lemma~\ref{first_lemma}. Combining \eqref{go_on_from_here} with \eqref{wish_for} yields
\begin{equation}
    \varepsilon \norm{\nabla(Gg)}^2 \leq c(\beta) \norm{\varphi_0}_{\dot H^1}^2
\end{equation}
and consequently $\norm{\nabla(Gg)}^2$ is bounded uniformly in $\omega$, which proves \eqref{grad_est_g} with $\kappa=l+1$. Then taking the limit $\omega \to 0$ concludes the proof of Lemma~\ref{first_lemma}. \qed

\smallskip

We prove the remaining statement in \eqref{wish_for} by estimating each of the terms on the left--hand side of \eqref{wish_for} separately.
\subsection{First Term in \eqref{wish_for}}
The function $G$ vanishes for $\abs{x}<\beta$ and $V$ fulfills \eqref{short-range} and therefore there exists $C>0$ such that
\begin{equation} \label{first_term_with_pot}
   \abs{ \langle G  g, V G g\rangle } \leq \frac{C}{(1+\beta)^{1+\delta}} \int_{\abs{x}>\beta} \frac{\abs{Gg}^2}{\abs{x}^2} dx \leq \frac{4C}{(1+\beta)^{1+\delta}} \norm{ \nabla (Gg)}^2 \, .
\end{equation}
\subsection{Second Term in \eqref{wish_for}}
In Lemma~\ref{Lemma_G_derivative} in the appendix we show for $\abs{x}>2\beta$ that
\begin{equation} \label{est_on_G}
        \abs{\nabla G(x)} \leq  \kappa \abs{x}^{-1} G(x) \quad \text{for} \quad \abs{x}>2\beta
\end{equation}
and
\begin{equation} \label{uniform_bound}
    \abs{\nabla G(x)}^2 \leq \beta^{\kappa -1 } \left( 2^{\kappa+1} +  \kappa 2^{\kappa-1} \right) \eqqcolon c_1(\beta) \quad \text{for} \quad \abs{x} \in [\beta, 2\beta] \, .
\end{equation}
Recall that the function $Gg$ is orthogonal to radially symmetric functions and in addition satisfies \eqref{higher_momentum}, consequently \eqref{est_on_G} together with \eqref{Hardy_type_ineq} yields
\begin{equation} \label{conclude_from_here}
    \int_{\abs{x}\geq 2\beta} \frac{\abs{\nabla G}^2}{G^2} \abs{Gg}^2 dx   \leq \kappa^2  \int_{\abs{x}\geq 2\beta} \abs{x}^{-2} \abs{Gg}^2 dx  \leq    \frac{\kappa^2}{4} \norm{\nabla(Gg)}^2  \, .
\end{equation}
By combining \eqref{uniform_bound} and \eqref{conclude_from_here}, we obtain 
\begin{equation} \label{splitted_nabla_G_term}
    \begin{split}
        \left\langle  G g, \frac{\abs{\nabla G}^2}{G^2}  G g\right\rangle &\leq  \int_{\beta \leq \abs{x}\leq 2\beta} \abs{\nabla G}^2 \abs{g}^2 dx + \int_{ \abs{x}\geq  2\beta } \abs{\nabla G}^2 \abs{g}^2 dx \\
        &\leq c_1(\beta)  \int_{\beta \leq \abs{x}\leq 2\beta} \abs{g}^2 dx +  \frac{\kappa^2}{4} \norm{\nabla(Gg)}^2 \, .
    \end{split}
\end{equation}
Applying \eqref{Hardy_type_ineq} gives
\begin{equation}\label{g_on_compact}
    \begin{split}
        \int_{\beta \leq \abs{x}\leq 2\beta } \abs{g}^2 dx &\leq  (2\beta)^2 \int_{\beta\leq \abs{x}  \leq 2\beta} \abs{x}^{-2}\abs{g}^2 dx \\
        &\leq 4(2\beta)^2 \int_{\R^2} \abs{\nabla g}^2 dx \\
        &\leq  4(2\beta)^2 \norm{\varphi_0}_{\dot H^1}^2  \, .
    \end{split}
\end{equation}
Inserting \eqref{g_on_compact} into \eqref{splitted_nabla_G_term} yields
\begin{equation} \label{estimate_on_nabla_G_term}
    \left\langle  G g, \frac{\abs{\nabla G}^2}{G^2}  G g\right\rangle \leq    \frac{\kappa^2}{4} \norm{\nabla(Gg)}^2  + c_2(\beta) \norm{\varphi_0}_{\dot H^1}^2 
\end{equation}
where $c_2(\beta) = 4(2\beta)^2 c_1(\beta)$.
\subsection{Third Term in \eqref{wish_for}}
Let $\tilde \varepsilon>0$. By \eqref{short-range} and Schwarz Inequality, we obtain
\begin{equation}\label{split_inequal}
    \begin{split}
        \abs{\langle G  g, V G f \rangle} &\leq C \int_{\abs{x}>\beta} \frac{\abs{  G  g}}{\abs{x}^{1+\delta/2}}  \frac{\abs{G f }}{\abs{x}^{2+\delta/2}}dx \\
        &\leq C \tilde \varepsilon  \int_{\abs{x}>\beta} \frac{\abs{G}^2}{\abs{x}^{2+\delta}} \abs{ g }^2 dx  + \frac{C}{\tilde \varepsilon} \int_{\abs{x}>\beta} \frac{\abs{G}^2}{\abs{x}^{4+\delta}} \abs{ f }^2 dx \, .
    \end{split}
\end{equation}
Using \eqref{Hardy_type_ineq} for $Gg$ in the first term on the right--hand side of \eqref{split_inequal} we get
\begin{equation} \label{hardy_again_with_eps2}
     C \tilde \varepsilon  \int_{\abs{x}>\beta} \frac{\abs{G}^2}{\abs{x}^{2+\delta}} \abs{ g }^2 dx \leq\frac{C \tilde \varepsilon}{4} \norm{\nabla(Gg)}^2 \, .
\end{equation}
To estimate the second term on the right--hand side of \eqref{split_inequal} we choose $\kappa$ in the definition of $G$ in \eqref{def_of_G} as $ 1 + \delta/4$, then for every $\abs{x}>\beta$
\begin{equation} \label{bound_on_G}
    \frac{\abs{G(x)}^2}{\abs{x}^{4+\delta}}  \leq \abs{x}^{2\kappa -4-\delta} \abs{\xi(\abs{x}/\beta)}^2\leq \abs{x}^{-2-\delta/2} \abs{\xi(\abs{x}/\beta)}^2 \, .
\end{equation}
Applying \eqref{bound_on_G} yields
\begin{equation} \label{used_bound_on_G}
    \begin{split}
    \int_{\abs{x}>\beta} \frac{\abs{G}^2}{\abs{x}^{4+\delta}} \abs{ f }^2 dx &\leq \int_{\R^2}  \abs{x}^{-2-\delta/2}  \abs{ \xi(\abs{x}/\beta) f(x) }^2 dx\, .
    \end{split} 
\end{equation}
Note that the function $\xi(\abs{x}/\beta) f(x) $ vanishes for $\abs{x}=\beta$  and consequently for $\beta>0$ large enough we can apply the two--dimensional Hardy Inequality (see Lemma~\ref{2d_2_hardy_inq}) and get
\begin{equation} \label{two_dim_hardy}
    \int_{\R^2}  \abs{x}^{-2-\delta/2}  \abs{ \xi(\abs{x}/\beta) f(x) }^2 dx  \leq \int_{\R^2} \abs{\nabla (\xi(\abs{x}/\beta) f(x) )}^2 dx \, .
\end{equation}
Since $\nabla \xi(\abs{\cdot}/\beta)$ is supported in $\beta\leq \abs{x} \leq 2\beta$ and $\abs{\nabla \xi(\abs{\cdot}/\beta)} \leq 2/\beta$ we get for the right--hand side of \eqref{two_dim_hardy}
\begin{equation} \label{only_f_terms}
    \int_{\R^2} \abs{\nabla (\xi(\abs{x}/\beta) f(x) )}^2 dx \leq \frac{8}{\beta^2} \int_{\beta<\abs{x}<2\beta} \abs{f}^2 dx + 2 \int_{\abs{x}>\beta} \abs{\nabla f}^2 dx 
\end{equation}
where we have used that $(a+b)^2<a^2+b^2$. The function $\varphi_0 \in L^2_{\loc}(\R^2)$ is normalized with respect to $\dot H^1(\R^2)$. Then, due to the orthogonality of $f$ and $g$ for fixed $\abs{x}$, there exists a constant $c_3(2\beta)>0$ such that
\begin{equation} \label{norms_varphi_0}
    \norm{f}_{\abs{x}<2\beta}^2 \leq  \norm{\varphi_0}_{\abs{x}<2\beta}^2 \leq c_3(2\beta)  \norm{\varphi_0}^2_{\dot H^1(\R^2)}\, .
\end{equation}
Combining \eqref{only_f_terms} and \eqref{norms_varphi_0} shows there exits a constant $c_4(\beta)>0 $, such that
\begin{equation} \label{only_norm_of_varphi_0}
    \int_{\R^2} \abs{\nabla (\xi(\abs{x}/\beta) f(x) )}^2 dx \leq c_4(\beta) \norm{\varphi_0}^2_{\dot H^1(\R^2)} \, .
\end{equation}
Relations \eqref{used_bound_on_G}, \eqref{two_dim_hardy} and \eqref{only_norm_of_varphi_0} imply 
\begin{equation}  \label{H_1_dot_varphi_0}
    \int_{\abs{x}>\beta} \frac{\abs{G}^2}{\abs{x}^{4+\delta}} \abs{ f }^2 dx \leq c_4(\beta)  \norm{\varphi_0}^2_{\dot H^1} \, .
\end{equation}
Substituting \eqref{hardy_again_with_eps2} and \eqref{H_1_dot_varphi_0}  into \eqref{split_inequal} gives
\begin{equation} \label{third_term}
    \abs{\langle G  g, V G f \rangle} \leq \frac{C \tilde \varepsilon}{4} \norm{ \nabla (Gg)}^2 + c_5(\beta, \tilde \varepsilon) \norm{\varphi_0}_{\dot H^1}^2 ,
\end{equation}
where 
\begin{equation}
    c_5(\beta,\tilde \varepsilon) \coloneqq \frac{C}{\tilde \varepsilon} \cdot
 c_4(\beta)
\end{equation}
is a constant depending on $\beta$ and $\tilde \varepsilon$. Note that $C>0$ is a constant depending on the potential $V$ only and the parameter $\tilde \varepsilon>0$ can be chosen small.

\subsection{Completing the Proof of \eqref{wish_for}}
Combining the inequalities \eqref{first_term_with_pot}, \eqref{estimate_on_nabla_G_term} and \eqref{third_term} yields for a constant $c(\beta,\tilde \varepsilon)>0$ that depends on $\beta$ and $ \tilde \varepsilon$ but is independent of $\psi$ 
\begin{equation}
    \begin{split}
        \langle G  g, V G g\rangle  &- \left\langle  G g, \frac{\abs{\nabla G}^2}{G^2}  G g\right\rangle + \langle G g, V G f\rangle\\
        &\geq -\left( \frac{\kappa^2}{4}  + \frac{4C}{(1+\beta)^{1+\delta}} + \frac{C\tilde \varepsilon}{4}\right) \norm{\nabla(Gg)}^2  -c(\beta,\tilde \varepsilon) \norm{\varphi_0}_{\dot H_1}^2 \, .
    \end{split}
\end{equation}
Since we can always assume $\delta<1$ and since $\kappa=1+\delta/4$ we have $\kappa^2/4 <1$. Consequently for $\beta>0$ sufficiently large and by assuming that $\tilde \varepsilon>0$ in \eqref{split_inequal} is chosen to be small we can have
\begin{equation}
    \left( \frac{\kappa^2}{4}  + \frac{4C}{(1+\beta)^{1+\delta}} + \frac{C\tilde \varepsilon}{4}\right) < 1-\varepsilon \, .
\end{equation}
The constant $c(\beta, \tilde \varepsilon)$ for fixed $\beta$ and  $\tilde \varepsilon$ may be large but is finite and independent on $\omega$. As explained earlier taking the limit $\omega \to 0$ completes the proof of Lemma~\ref{first_lemma}. 
\section{Proofs of the Lemmas for the Main Theorem} \label{proofs}
In this section, we prove the lemmas stated in  Section \ref{proof_of_main_thrm}.
\subsection{Proof of Lemma~\ref{L_1j_lemma} } \label{proof_of_L_1j_lemma}
We show the statement for $L_\alpha[\psi_\alpha]$ with $\alpha=(12)$. The proof for $\alpha=(13)$ is similar. We drop the index $\alpha$ whenever possible.
\begin{remark} The proof of Lemma~\ref{L_1j_lemma} is organized as follows. First, we introduce several new functions and state three lemmas that correspond to the main steps in the proof, showing how they conclude Lemma~\ref{L_1j_lemma}. The proofs of these three lemmas are then provided in Section \ref{three_lemmatas}. \end{remark}
Due to \cite[Lemma 5.1]{VZ:1982} for given $\varepsilon>0$ and fixed $\gamma>0$ there exists a $\tilde{\gamma} \in (0,\gamma)$ and a piecewise continuously differentiable function $u:\R^3 \to [0,1]$ with
\begin{equation} \label{u_and_v_functions}
    u(x) = \begin{cases}
        1 &x\in K_{12}(\tilde \gamma) \\
        0 &x \notin K_{12}(\gamma)
    \end{cases}
\end{equation}
such that for $v\coloneqq \left(1-u^2\right)^{1/2}$
\begin{equation} \label{localization_error}
    \abs{\nabla u}^2 + \abs{\nabla v}^2 \leq \varepsilon \left( \frac{v^2}{\abs{x}^2} + \frac{u^2}{\abs{(x_1,x_2)}^2} \right) 
\end{equation}
for every $x=(x_1,x_2,x_3)\in \R^3$.

Let
\begin{equation} \label{def_tilde_psi}
    \begin{split}
        \psi_1 &\coloneqq (P_\perp \psi_{12})  v,\\
        \psi_2 &\coloneqq (P_\perp \psi_{12}) u + P_0 \psi_{12} \, .
    \end{split}
\end{equation}
Note that we use smooth localization for the function $P_\perp \psi_{12}$ which allows us to apply the IMS--Localization formula. For reasons that will be explained later, we can not do such a smooth localization for the function $P_0 \psi_{12}$.

As the first step, we show that $L_{12}[\psi_{12}]$ can be estimated in terms of integrals involving $\psi_2$ only. Namely, we prove the following:
\begin{lemma} \label{first_part_of_lem21}
    Let $K_{12}(\gamma,\tilde \gamma) \coloneqq K_{12}(\gamma) \setminus K_{12}(\tilde \gamma)$ and $b>0$ large enough. Then
    \begin{equation}
        \begin{split}
            L_{12}[\psi_{12}] =  \int_{K_{12}(\gamma)} \left( \abs{\nabla \psi_{12}}^2 + \sum_{\alpha\in I} V_{\alpha} \abs{\psi_{12}}^2 \right) dx -  \int_{K_{12}(\gamma)\setminus S(0,b)} \frac{\abs{\psi_{12}}^{2}}{\abs{x}^{2+\tau}} dx  \geq \tilde{L}_{12}[\psi_2]
        \end{split}
    \end{equation}
    where 
    \begin{equation} \label{L_tilde_def}
    \begin{split}
        \tilde{L}_{12}[\psi_2] \coloneqq &\int_{K_{12}(\gamma)} \left( \abs{\nabla \psi_2}^2 + \sum_{\alpha\in I} V_{\alpha} \abs{\psi_2}^2 \mathbbm{1}_{K_\alpha(\tilde \gamma)}  -2\frac{\abs{\psi_2}^{2}}{\abs{x}^{2+\tau}} \right)dx \\
        &-\varepsilon\int_{K_{12}(\gamma,\tilde \gamma)} \frac{\abs{P_\perp {\psi}_2}^2}{\abs{(x_1,x_2)}^2} dx \, .
    \end{split}
\end{equation}
\end{lemma}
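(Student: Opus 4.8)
The plan is to compare $L_{12}[\psi_{12}]$ and $\tilde L_{12}[\psi_2]$ termwise, after splitting $\psi_{12}$ into its radial and non--radial angular parts in the $(x_1,x_2)$--plane. Abbreviate $w_0 \coloneqq P_0\psi_{12}$ and $w_\perp \coloneqq P_\perp\psi_{12}$, so that by \eqref{def_tilde_psi} one has $\psi_2 = u\,w_\perp + w_0$ and $\psi_1 = v\,w_\perp$, and note $P_\perp\psi_2 = u\,w_\perp$ because $u$ is rotationally invariant in $(x_1,x_2)$. Since the cone $K_{12}(\gamma)$, the weight $\abs{x}^{-2-\tau}$ and the cut--offs $u,v$ are all invariant under rotations in $(x_1,x_2)$, while $w_\perp$ has vanishing angular mean, every mixed $P_0$--$P_\perp$ cross term vanishes after integration over the angle. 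First I would record the kinetic decomposition: applying the IMS formula $\abs{\nabla w_\perp}^2 = \abs{\nabla(u w_\perp)}^2 + \abs{\nabla(v w_\perp)}^2 - (\abs{\nabla u}^2+\abs{\nabla v}^2)\abs{w_\perp}^2$ together with this orthogonality yields
\begin{equation} \label{plan_grad_identity}
\int_{K_{12}(\gamma)} \abs{\nabla \psi_{12}}^2\,dx = \int_{K_{12}(\gamma)} \abs{\nabla \psi_2}^2\,dx + \int_{K_{12}(\gamma)} \abs{\nabla \psi_1}^2\,dx - \int_{K_{12}(\gamma,\tilde\gamma)} \left(\abs{\nabla u}^2 + \abs{\nabla v}^2\right)\abs{w_\perp}^2\,dx ,
\end{equation}
where I used that $\nabla u,\nabla v$ are supported in the annulus $K_{12}(\gamma,\tilde\gamma)$.

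Second, I would compare the potential and weight terms, splitting $K_{12}(\gamma) = K_{12}(\tilde\gamma)\cup K_{12}(\gamma,\tilde\gamma)$. On the inner cone $K_{12}(\tilde\gamma)$ one has $u=1$, $v=0$, hence $\psi_1=0$ and $\psi_2=\psi_{12}$ pointwise, so $V_{12}\abs{\psi_{12}}^2$ coincides with the $(12)$--contribution of $\sum_\alpha V_\alpha\abs{\psi_2}^2\mathbbm{1}_{K_\alpha(\tilde\gamma)}$; by the disjointness of the cones (Lemma~\ref{cones_are_disjoint}) the other two indicators vanish there, and $\abs{V_{13}}+\abs{V_{23}}\leq\abs{x}^{-2-\tau}$ for $\abs{x}>b$, which is dominated by the difference of the two weight terms. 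On the annulus all three potentials are small, because there $\abs{\mathbf{r}_{12}}\geq c\,\abs{(x_1,x_2)}\geq c\,\tilde\gamma\abs{x}$ forces $\abs{V_{12}}\leq C\abs{x}^{-3-\delta}$ as well, so $\big|\sum_{\beta\in I}V_\beta\big|\leq\abs{x}^{-2-\tau}$ for $b$ large. Averaging $\abs{\psi_{12}}^2$ and $\abs{\psi_2}^2$ over the angle and using $1-u^2=v^2$, these observations give
\begin{equation} \label{plan_pot_deficit}
\int_{K_{12}(\gamma)} \sum_{\beta\in I} V_\beta\abs{\psi_{12}}^2\,dx - \int_{K_{12}(\gamma)} \sum_{\alpha\in I} V_\alpha\abs{\psi_2}^2\mathbbm{1}_{K_\alpha(\tilde\gamma)}\,dx + \int_{K_{12}(\gamma)} \frac{2\abs{\psi_2}^2 - \abs{\psi_{12}}^2}{\abs{x}^{2+\tau}}\,dx \geq -2\int_{K_{12}(\gamma,\tilde\gamma)} \frac{v^2\abs{w_\perp}^2}{\abs{x}^{2+\tau}}\,dx .
\end{equation}

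Third, I would close the estimate using the leftover gradient energy $\int\abs{\nabla\psi_1}^2$. The difference $L_{12}[\psi_{12}] - \tilde L_{12}[\psi_2]$ is, by \eqref{plan_grad_identity}, \eqref{plan_pot_deficit}, \eqref{localization_error} and $P_\perp\psi_2 = u w_\perp$, bounded below by
\begin{equation} \label{plan_final}
\int_{K_{12}(\gamma)} \abs{\nabla \psi_1}^2\,dx - \varepsilon \int_{K_{12}(\gamma,\tilde\gamma)} \frac{v^2\abs{w_\perp}^2}{\abs{x}^2}\,dx - 2\int_{K_{12}(\gamma,\tilde\gamma)} \frac{v^2\abs{w_\perp}^2}{\abs{x}^{2+\tau}}\,dx ,
\end{equation}
after the $u^2$--part of the localization error in \eqref{plan_grad_identity} cancels exactly against the term $-\varepsilon\int\abs{P_\perp\psi_2}^2/\abs{(x_1,x_2)}^2$ of $\tilde L_{12}$. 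Since $\psi_1 = v w_\perp$ has vanishing angular mean, a two--dimensional Hardy inequality in $(x_1,x_2)$ (cf.\ \eqref{Hardy_type_ineq}) gives $\int\abs{\nabla\psi_1}^2 \geq c\int v^2\abs{w_\perp}^2/\abs{(x_1,x_2)}^2$, and on the annulus $\abs{(x_1,x_2)}\leq\gamma\abs{x}$ together with $\abs{x}>b$ turns this into a lower bound of the form $(c/\gamma^2)\int v^2\abs{w_\perp}^2/\abs{x}^2$. For $\gamma$ small and $b$ large this single term dominates both subtracted terms in \eqref{plan_final}, so the expression is non--negative and the lemma follows.

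The step I expect to be the main obstacle is this last balancing: one must check that the Hardy energy of $\psi_1$, amplified by the cone opening factor $\gamma^{-2}$ and by a power of $b$ coming from $\abs{x}>b$, really dominates \emph{simultaneously} the $v^2/\abs{x}^2$ remainder of the IMS localization and the $v^2/\abs{x}^{2+\tau}$ deficit produced by comparing $\abs{\psi_{12}}^2$ with $\abs{\psi_2}^2$ — all without applying any localization or Hardy estimate to the radial part $w_0$, which does not decay and is therefore only hard--cut at $\partial K_{12}(\gamma)$ (the boundary term it generates being deferred to the subsequent lemmas). A secondary technical point is the pointwise potential bound on the annulus, which relies on the disjointness of the cones and on $\abs{\mathbf{r}_{12}}$ being comparable to $\abs{(x_1,x_2)}$ away from the inner cone, both read off from \eqref{distanc_of_particles}.
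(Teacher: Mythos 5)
Your proposal follows essentially the same route as the paper's proof: the same orthogonal splitting into $P_0\psi_{12}$ and $P_\perp\psi_{12}$, the same IMS localization with $u,v$ and cancellation of the $u^2/\abs{(x_1,x_2)}^2$ error against the $\varepsilon$--term of $\tilde L_{12}$, the same handling of the potentials via Lemma~\ref{potentials_small_outside_cones} and of the $\abs{x}^{-2-\tau}$ weight via angular orthogonality, reducing everything to the positivity of $\int\abs{\nabla\psi_1}^2 - 2\int\abs{\psi_1}^2\abs{x}^{-2-\tau} - \varepsilon\int\abs{\psi_1}^2\abs{x}^{-2}$. The only divergence is in this final step: the paper invokes the three--dimensional radial Hardy inequality on the cone (Lemma~\ref{radial_hardy_inequality}, using $\supp\psi_1\subset\{\abs{x}\geq b\}$), whereas you appeal to \eqref{Hardy_type_ineq} in the $(x_1,x_2)$--plane --- note that \eqref{Hardy_type_ineq} as stated requires the extra condition \eqref{higher_momentum} and a function in $\dot H^1(\R^2)$, neither of which $\psi_1(\cdot,x_3)$ satisfies (it jumps at $\partial K_{12}(\gamma)$, and only its zero angular mode is known to vanish), but the constant--$1$ Wirtinger--type version on each disc slice of $K_{12}(\gamma)$ does apply and still closes the estimate, so this is a citation imprecision rather than a gap.
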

As the next step we extend $\psi_2$ for fixed $x_3$ outside of $K_{12}(\gamma)$. Note that $\psi_2 = P_0 \psi_{12}$ on $\partial K_{12}(\gamma)$ and thus is constant for fixed $x_3 \in \R$ on $\partial K_{12}(\gamma)$. This allows us to continuously extend $\psi_2$  to $\R^3$ by a function which does not depend on $(x_1,x_2)$ outside of $K_{12}(\gamma)$. Let $\tilde \psi_2$ be this new function, then since  $\psi \in C^1_0(\R^3)$ it follows that
\begin{equation} \label{space_for_tilde_psi_2}
    \tilde \psi_2(\cdot, x_3) \in \dot{H}^1(\R^2), \quad \tilde \psi_2(x_1,x_2,\cdot) \in H^1(\R) \quad \text{and} \quad \tilde \psi_2 \in \dot H^1(\R^3) \, .
\end{equation}

Denote by $\nabla_{12} =(\partial_{x_1},\partial_{x_2})$ the gradient in the $(x_1,x_2)$--plane. If $h_{12}$ has a virtual level let $\varphi_0 \in \dot H^1(\R^2)$ be the corresponding solution of $h_{12} \varphi_0 =0$. We normalize $\varphi_0$ with respect to the seminorm corresponding to the sesquilinearform 
\begin{equation}
    \langle \nabla_{12} f, \nabla_{12} \, g \rangle_\ast = \int_{\R^2} \left(  \nabla_{12} f \cdot \nabla_{12} g \right) \, d(x_1,x_2), \quad f,g \in H^1(\R^2)
\end{equation}
such that
\begin{equation}
    \norm{\varphi_0}_\ast^2 = \int_{\R^2} \abs{ \nabla \varphi_0}^2 d(x_1,x_2)= 1 \, .
\end{equation}
For every $x_3\in \R$ let
\begin{equation} \label{def_prjector_weight}
    \Phi(x_3) \coloneqq \langle \nabla_{12} \varphi_0, \nabla_{12} \tilde \psi_{2} \rangle_{L^2(\R^2)} \,  .
\end{equation}
We show that the function $\Phi$ is in $L^2(\R)$. Since $\nabla_{12}\tilde \psi_2$ vanishes outside of $\supp{\psi_{12}}$ and consequently outside of $\supp{\psi}$, applying Schwarz Inequality yields for some constant $C>0$
\begin{equation} \label{Phi_is_l2}
    \begin{split}
        \norm{\Phi}_{L^2(\R)}^2 &= \int \abs{ \iint \left( \nabla_{12} \varphi_0 \mathbbm{1}_{\supp( \psi)} \cdot \nabla_{12} \tilde \psi_2 \right)d(x_1,x_2)}^2 dx_3\\
        &\leq \int_{\R} \norm{\nabla_{12} \varphi_0 \mathbbm{1}_{\supp( \psi)}}_{L^2(\R^2)} \cdot  \norm{\nabla_{12} \tilde \psi_2(\cdot, x_3)}_{L^2(\R^2)} dx_3 \\
        &\leq \frac{1}{2} \int_{\R} \left(  \norm{\nabla_{12} \varphi_0 \mathbbm{1}_{\supp( \psi)}}_{L^2(\R^2)}^2 + \norm{\nabla_{12} \tilde \psi_2(\cdot, x_3)}_{L^2(\R^2)}^2 \right) dx_3 \\
        &\leq C \norm{ \nabla_{12} \varphi_0}_{\ast} +  \frac{1}{2} \norm{ \tilde \psi_2}_{\dot H^1(\R^3)} \, .
    \end{split}
\end{equation}
In the last line of \ref{Phi_is_l2} we have used, that $\psi$ is compactly supported. 

Let $F(x_1,x_2,x_3)$ be defined by
\begin{equation} \label{projection}
\tilde \psi_{2}(x_1,x_2,x_3) =   \varphi_0(x_1,x_2) \Phi(x_3) + F(x_1,x_2,x_3) \, .
\end{equation}
For almost all $x_3\in \R$ the function $F$ satisfies 
\begin{equation} \label{orthogonality_in_grad_sense}
    \langle \nabla_{12} \varphi_0, \nabla_{12} F\rangle_{L^2(\R^2)} =0 \, .
\end{equation} 
If the virtual level $\varphi_0$ does not exist we assume $\Phi(x_3) \equiv 0$ and consequently $F= \tilde \psi_2$.
\begin{remark}
    For fixed $x_3 \in \R$, the expression \eqref{projection} is a projection of $\tilde \psi_2$ onto the virtual level $\varphi_0$ within $\dot H^1(\R^2)$. Note that, unlike $\tilde \psi_2(\cdot, x_3)$, the function $\psi_2(\cdot, x_3)$ is not in $\dot H^1(\R^2)$ and therefore such a projection would not be possible. That is the reason why we needed to extend $\psi_2$ introducing $\tilde \psi_2$. 
\end{remark}
With these definitions in  \eqref{def_prjector_weight} and \eqref{projection}, we can state the following:
\begin{lemma}\label{second_part_of_lem21}
    For the functional $\tilde{L}_{12}$ we have the estimate
    \begin{equation} \label{eq_in_second_part_of_lem21}
    \begin{split}
        \tilde{L}_{12}[\psi_2] \geq &\mu \norm{ \nabla_{12} F}^2  + \frac{1}{2}\int_{K_{12}(\gamma)} \abs{\partial_{x_3} (P_\perp \psi_2)}^2 dx -\varepsilon\int_{K_{12}(\gamma,\tilde \gamma)} \frac{\abs{P_\perp {\psi}_2}^2}{\abs{(x_1,x_2)}^2} dx \\
        &- C\int_{\partial K_{12}(\gamma) } \frac{\abs{P_0 \psi }^2}{\abs{x}^{1+\tau}} d\sigma\, 
    \end{split}
\end{equation}
where $\mu>0$ is the parameter in assertion (2) of Lemma~\ref{virtual_level_lemma} and $C>0$ a constant independent of $\psi$.
\end{lemma}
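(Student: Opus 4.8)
The plan is to split the full gradient as $\abs{\nabla\psi_2}^2 = \abs{\nabla_{12}\psi_2}^2 + \abs{\partial_{x_3}\psi_2}^2$ and to treat the in--plane part together with the localized potential as an $x_3$--fibered version of the quadratic form of $h_{12}$, while keeping the $x_3$--derivative as a reservoir of positive energy. Since $\nabla_{12}\tilde\psi_2$ vanishes outside $K_{12}(\gamma)$ and $\tilde\psi_2 = \psi_2$ on the cone, one has $\int_{K_{12}(\gamma)}\abs{\nabla_{12}\psi_2}^2\,dx = \int_{\R}\norm{\nabla_{12}\tilde\psi_2(\cdot,x_3)}_{L^2(\R^2)}^2\,dx_3$ by \eqref{space_for_tilde_psi_2}. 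The localized potential term $\int V_{12}\abs{\psi_2}^2\mathbbm{1}_{K_{12}(\tilde\gamma)}$ I would replace by the full fiber integral $\int_{\R^3}V_{12}\abs{\tilde\psi_2}^2$, the difference being the tail $\int_{\R^3\setminus K_{12}(\tilde\gamma)}V_{12}\abs{\tilde\psi_2}^2$. This tail is harmless: outside $K_{12}(\tilde\gamma)$ the two particles are far apart, so $\abs{\mathbf{r}_{12}}$ is comparable to $\abs{x}$ and \eqref{short-range} with $\nu_{12}=3+\delta$ gives enough decay to absorb it (after a Hardy estimate) into the remaining terms and into the $\varepsilon$--localization error already present in \eqref{L_tilde_def}. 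In this way $\int_{K_{12}(\gamma)}\bigl(\abs{\nabla_{12}\psi_2}^2 + V_{12}\abs{\psi_2}^2\mathbbm{1}_{K_{12}(\tilde\gamma)}\bigr)$ is bounded below by $\int_{\R}\langle\tilde\psi_2(\cdot,x_3), h_{12}\tilde\psi_2(\cdot,x_3)\rangle_{L^2(\R^2)}\,dx_3$ up to controlled errors.

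Next I would invoke the projection \eqref{projection}, $\tilde\psi_2 = \varphi_0\Phi + F$, with $F$ satisfying the gradient--orthogonality \eqref{orthogonality_in_grad_sense}. Expanding $\langle\tilde\psi_2, h_{12}\tilde\psi_2\rangle$ and using that $\varphi_0$ solves the weak zero--energy equation \eqref{generalized_zero_solution}, both $\langle\varphi_0, h_{12}\varphi_0\rangle = \norm{\nabla_{12}\varphi_0}^2 + \langle\varphi_0,V_{12}\varphi_0\rangle = 0$ and the cross term $\langle\varphi_0, h_{12}F\rangle$ vanish, the latter precisely because \eqref{generalized_zero_solution} applied to $\psi = F$ together with \eqref{orthogonality_in_grad_sense} gives $\langle F, V_{12}\varphi_0\rangle = -\langle\nabla_{12}F,\nabla_{12}\varphi_0\rangle = 0$. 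Hence $\langle\tilde\psi_2, h_{12}\tilde\psi_2\rangle = \langle F, h_{12}F\rangle$, and since $F(\cdot,x_3)$ is gradient--orthogonal to $\varphi_0$, the coercivity estimate \eqref{ortho_to_zero} of Lemma~\ref{virtual_level_lemma} (applied in the homogeneous Sobolev setting, by a truncation/density argument since $\varphi_0\notin L^2$) yields $\langle F, h_{12}F\rangle \geq \mu\norm{\nabla_{12}F(\cdot,x_3)}^2$. Integrating over $x_3$ produces the leading term $\mu\norm{\nabla_{12}F}^2$.

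It then remains to account for the $x_3$--derivative, the critical term $-2\int\abs{\psi_2}^2/\abs{x}^{2+\tau}$, and the replacement/tail errors. Since $P_0$ and $P_\perp$ are orthogonal and commute with $\partial_{x_3}$, one has $\int_{K_{12}(\gamma)}\abs{\partial_{x_3}\psi_2}^2 = \int_{K_{12}(\gamma)}\abs{\partial_{x_3}(P_0\psi_2)}^2 + \int_{K_{12}(\gamma)}\abs{\partial_{x_3}(P_\perp\psi_2)}^2$. I would retain $\tfrac12\int\abs{\partial_{x_3}(P_\perp\psi_2)}^2$, using the remaining half together with the Hardy--type inequality \eqref{Hardy_type_ineq} (valid for $P_\perp\psi_2$, whose zeroth and first angular modes vanish) and $\abs{x}\ge\abs{(x_1,x_2)}$, plus $\abs{x}^{-\tau}$ small for large $b$, to dominate the \emph{non--radial} part of $-2\abs{\psi_2}^2/\abs{x}^{2+\tau}$. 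The radial part $P_0\psi_2$ is the delicate channel: on it the in--plane coercivity gives nothing, so I would control $-2\abs{P_0\psi_2}^2/\abs{x}^{2+\tau}$ using its $x_3$--derivative together with the conical radial Hardy inequality; integrating by parts in the radial and $x_3$ variables over $K_{12}(\gamma)$ generates exactly the boundary contribution $-C\int_{\partial K_{12}(\gamma)}\abs{P_0\psi}^2/\abs{x}^{1+\tau}\,d\sigma$, using that $\psi_2 = P_0\psi_{12}$ on $\partial K_{12}(\gamma)$.

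The main obstacle is this radial, zero--angular--momentum channel. Because $\varphi_0\notin L^2(\R^2)$, the resonant part $\varphi_0\Phi$ of $\tilde\psi_2$ cannot be estimated by naive $L^2$ expansions and carries no in--plane coercivity. The splitting into radial and non--radial parts is what rescues the argument: Lemma~\ref{first_lemma} guarantees that $P_\perp\varphi_0$ is $L^2$ with decay, so the non--radial channel is genuinely sub--critical and its $\partial_{x_3}$--energy can be retained, while all the loss is confined to the radial channel and is paid for by the single boundary integral on $\partial K_{12}(\gamma)$. The technical heart is then the careful bookkeeping — including the precise factor $\tfrac12$ — verifying that the replacement of $\psi_2$ by $\tilde\psi_2$ and the integration by parts leave no uncontrolled bulk terms, only the stated boundary term and the inherited $\varepsilon$--localization error.
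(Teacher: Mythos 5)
Your overall architecture coincides with the paper's: split $\abs{\nabla\psi_2}^2$ into in--plane and $x_3$ parts, pass to the extension $\tilde\psi_2$, identify the fibered quadratic form of $h_{12}$, use the projection $\tilde\psi_2=\varphi_0\Phi+F$ with \eqref{generalized_zero_solution} and \eqref{orthogonality_in_grad_sense} to kill the cross term and invoke assertion (2) of Lemma~\ref{virtual_level_lemma} for $\mu\norm{\nabla_{12}F}^2$, retain half of $\int\abs{\partial_{x_3}(P_\perp\psi_2)}^2$, and pay for the radial channel with the boundary integral. That is exactly the paper's route. However, two of your intermediate steps are not correct as stated.

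First, your mechanism for dominating the non--radial part of the critical term $-2\int\abs{\psi_2}^2/\abs{x}^{2+\tau}$ does not work: you propose to use ``the remaining half'' of $\int\abs{\partial_{x_3}(P_\perp\psi_2)}^2$ \emph{together with} the angular Hardy inequality \eqref{Hardy_type_ineq}. But \eqref{Hardy_type_ineq} converts \emph{in--plane} gradient energy into $\int\abs{F}^2/\abs{(x_1,x_2)}^2$; it cannot be powered by $\partial_{x_3}$--energy, and the in--plane gradient of $\psi_2$ has already been spent on the coercivity estimate producing $\mu\norm{\nabla_{12}F}^2$. Moreover, the constant $4$ in \eqref{Hardy_type_ineq} requires the vanishing of the $e^{\pm i\theta}$ modes \eqref{higher_momentum}, which holds for $P_\perp\varphi_0$ by the evenness of $\varphi_0$ but not for $P_\perp\psi_2$ of a generic test function $\psi$. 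The correct tool is the one--dimensional Hardy inequality in the $x_3$ variable (Lemma~\ref{one_dimensional_hardy}): since $P_\perp\psi_2$ vanishes on $\partial K_{12}(\gamma)$ (because $u=0$ there), for fixed $(x_1,x_2)$ it vanishes at the endpoints of the $x_3$--interval, and $\abs{x}\geq\abs{x_3}$ on the cone; this is \eqref{one_dim_hard_again} in the paper.

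Second, the potential tail over the exterior region $\R^3\setminus K_{12}(\gamma)$ cannot be ``absorbed into the remaining terms and into the $\varepsilon$--localization error'': on that region $\tilde\psi_2$ is the artificial extension, constant in $(x_1,x_2)$ and equal to $P_0\psi$ on the boundary, so there is no bulk energy available there at all (the localization error lives on the annulus $K_{12}(\gamma,\tilde\gamma)$ \emph{inside} the cone). This tail must itself be converted into the surface integral $\int_{\partial K_{12}(\gamma)}\abs{P_0\psi}^2\abs{x}^{-1-\tau}d\sigma$ by integrating out the radial $(x_1,x_2)$--variable explicitly, as in \eqref{get_surface_int}--\eqref{est_with_surf_int}. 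Relatedly, your ``integrating by parts generates exactly the boundary contribution'' for the $P_0$--channel needs the concrete device the paper uses: write $P_0\psi_2=\Gamma+G$ with $G$ the $x_3$--independent extension of the boundary values, apply the one--dimensional Hardy inequality to $\Gamma$ (which does vanish on $\partial K_{12}(\gamma)$), and estimate the $G$--term by the surface integral. With these repairs your argument becomes the paper's proof.
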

\begin{remark}
    By comparing the statement of Lemma~\ref{L_1j_lemma} with Lemma~\ref{second_part_of_lem21} we see that the assertion of Lemma~\ref{L_1j_lemma} follows immediately if we can show that the sum of the first three terms on the right--hand side of \eqref{eq_in_second_part_of_lem21} are positive.  We give this in the following lemma which is the last step in the proof of Lemma~\ref{L_1j_lemma}.
\end{remark}
\begin{lemma}\label{third_part_of_lem21} For $\varepsilon \in (0,\mu/8)$ and $b>0$ large enough we have
    \begin{equation} \label{last_of_42}
    \mu \norm{ \nabla_{12} F}^2  + \frac{1}{2}\int_{K_{12}(\gamma)} \abs{\partial_{x_3} (P_\perp \psi_2)}^2 dx -\varepsilon\int_{K_{12}(\gamma,\tilde \gamma)} \frac{\abs{P_\perp {\psi}_2}^2}{\abs{(x_1,x_2)}^2} dx \geq 0 \, .
\end{equation}
\end{lemma}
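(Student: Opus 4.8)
The plan is to show that the two nonnegative terms $\mu\norm{\nabla_{12}F}^2$ and $\tfrac12\int_{K_{12}(\gamma)}\abs{\partial_{x_3}(P_\perp\psi_2)}^2\,dx$ dominate the singular shell integral over $K_{12}(\gamma,\tilde\gamma)$, the decisive input being the decay of the non--radial part of the resonance proved in Lemma~\ref{first_lemma}. First I would reduce the integrand. Since the cutoff $u$ is radial in $(x_1,x_2)$ for each fixed $x_3$, it commutes with $P_0$ and $P_\perp$, so with $w\coloneqq P_\perp\psi_{12}$ one has $P_\perp\psi_2=uw$; in particular $P_\perp\psi_2$ is even and orthogonal to radial functions, hence on each slice it satisfies \eqref{higher_momentum} and obeys the angular Hardy inequality \eqref{Hardy_type_ineq}. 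On the cone I would then insert the decomposition \eqref{projection}, which after applying $P_\perp$ reads $P_\perp\psi_2=g\Phi+P_\perp F$ with $g\coloneqq P_\perp\varphi_0$ and $\Phi=\Phi(x_3)$.

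\textbf{The easy part.} Splitting the shell integral with $\abs{a+b}^2\le 2\abs{a}^2+2\abs{b}^2$ separates a contribution from $P_\perp F$ and one from $g\Phi$. For the first, $P_\perp F$ is again even and orthogonal to radial functions, so \eqref{Hardy_type_ineq} applied slicewise and integrated in $x_3$ gives $\int_{K_{12}(\gamma,\tilde\gamma)}\abs{(x_1,x_2)}^{-2}\abs{P_\perp F}^2\,dx\le \tfrac14\norm{\nabla_{12}(P_\perp F)}^2\le \tfrac14\norm{\nabla_{12}F}^2$, where the last step uses that $P_\perp$ commutes with $-\Delta_{12}$. Because $\varepsilon<\mu/8$, the resulting contribution $\tfrac{\varepsilon}{2}\norm{\nabla_{12}F}^2$ is absorbed by $\mu\norm{\nabla_{12}F}^2$ with room to spare.

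\textbf{The main obstacle.} The heart of the proof is the $g\Phi$ term. Geometrically, on $K_{12}(\gamma,\tilde\gamma)$ one has $\abs{(x_1,x_2)}\ge\tilde\gamma\abs{x}$, and since this is a cone about the $x_3$--axis, $\abs{(x_1,x_2)}$ is comparable to $\abs{x_3}$, which on the support of $\psi$ is at least of order $b$. Combining this with the weighted bound $(1+\abs{\cdot})^{l}g\in L^2(\R^2)$ from \eqref{Perp_in_L2} extracts a gain $b^{-2l}$ and reduces the $g\Phi$ term to (a constant times) $\int_{\abs{x_3}\ge cb}\abs{x_3}^{-2}\abs{\Phi}^2\,dx_3$ for some $c>0$. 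A one--dimensional Hardy inequality in $x_3$, legitimate because $\Phi$ vanishes at infinity ($\psi$ being compactly supported), bounds this by $\norm{\Phi'}_{L^2(\R)}^2$. The crux is then to control $\Phi'$ by the retained term $\int_{K_{12}(\gamma)}\abs{\partial_{x_3}(P_\perp\psi_2)}^2\,dx$. Here care is required: writing $\partial_{x_3}(P_\perp\psi_2)=g\Phi'+\partial_{x_3}(P_\perp F)$, the slicewise orthogonality that would make this split clean is the $L^2(\R^2)$ pairing against the fixed profile $g$, whereas $F$ is defined through the $\dot H^1$ projection onto $\varphi_0$. Reconciling these two inner products — or, equivalently, re--deriving the $\Phi'$ estimate directly from the $L^2$ projection of $P_\perp\psi_2$ onto $g$, whose $x_3$--derivative is genuinely $L^2$--orthogonal to $g$ — is the technical heart of the argument, and is where I expect the real work to lie.

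\textbf{Conclusion.} Once $\norm{\Phi'}^2$ is bounded by $\int_{K_{12}(\gamma)}\abs{\partial_{x_3}(P_\perp\psi_2)}^2\,dx$ plus a multiple of $\norm{\nabla_{12}F}^2$ with constants independent of $b$, the gained factor $b^{-2l}$ makes the whole $g\Phi$ contribution (after multiplication by $\varepsilon$) smaller than $\tfrac12\int_{K_{12}(\gamma)}\abs{\partial_{x_3}(P_\perp\psi_2)}^2\,dx$ plus a small multiple of $\norm{\nabla_{12}F}^2$ for $b$ large enough. Adding the two contributions and using once more $\varepsilon<\mu/8$ yields exactly \eqref{last_of_42}.
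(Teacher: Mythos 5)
Your treatment of the $P_\perp F$ contribution (split the shell integral, apply the angular Hardy inequality \eqref{Hardy_type_ineq} slicewise, absorb into $\mu\norm{\nabla_{12}F}^2$ using $\varepsilon<\mu/8$) matches the paper, and you correctly identify the decisive mechanism for the $g\Phi$ term: on $K_{12}(\gamma,\tilde\gamma)$ one has $\abs{(x_1,x_2)}\gtrsim b$, so the weighted bound $(1+\abs{\cdot})^{l}P_\perp\varphi_0\in L^2(\R^2)$ from Lemma~\ref{first_lemma} reduces that term to $c\,b^{-\nu}\int_{b/2}^\infty\abs{x_3}^{-2}\abs{\Phi}^2\,dx_3$. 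The gap is in what you do next. Passing to $\norm{\Phi'}^2$ via a one--dimensional Hardy inequality and then trying to dominate $\norm{\Phi'}^2$ by $\int_{K_{12}(\gamma)}\abs{\partial_{x_3}(P_\perp\psi_2)}^2\,dx$ is not just "the technical heart" — it is a step that does not go through with the quantities at hand. Since $\Phi(x_3)=\langle\nabla_{12}\varphi_0,\nabla_{12}\tilde\psi_2(\cdot,x_3)\rangle$, its derivative $\Phi'$ involves the mixed second derivatives $\nabla_{12}\partial_{x_3}\tilde\psi_2$, which are controlled by none of the three terms in \eqref{last_of_42}; and, as you yourself observe, the decomposition $\partial_{x_3}(P_\perp\psi_2)=\Phi'\,P_\perp\varphi_0+\partial_{x_3}(P_\perp F)$ admits no usable $L^2$--orthogonality because $F$ is defined by a $\dot H^1$ projection. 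So the proposal leaves the essential inequality unproved.

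The paper closes this step without ever touching $\Phi'$. It applies the one--dimensional Hardy inequality in $x_3$ to $P_\perp\psi_2$ itself (legitimate since $P_\perp\psi_2$ vanishes for small $\abs{x_3}$), giving
\begin{equation*}
\int_{K_{12}(\gamma)}\abs{\partial_{x_3}(P_\perp\psi_2)}^2\,dx
\;\geq\;\frac14\int_{K_{12}(\gamma)}\frac{\abs{P_\perp\psi_2}^2}{\abs{x_3}^2}\,dx
\;\geq\;\frac18\int_{K_{12}(\gamma)}\abs{\Phi}^2\frac{\abs{P_\perp\varphi_0}^2}{\abs{x_3}^2}\,dx-\int_{K_{12}(\gamma)}\frac{\abs{P_\perp F}^2}{\abs{x_3}^2}\,dx\,,
\end{equation*}
where the $P_\perp F$ remainder is converted to an $\abs{(x_1,x_2)}^{-2}$ weight by the cone geometry and absorbed into $\mu\norm{\nabla_{12}F}^2$ exactly as before. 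The surviving positive term is then bounded \emph{below}: because $P_\perp\varphi_0\in L^2(\R^2)$ (again Lemma~\ref{first_lemma}), the slice integral of $\abs{P_\perp\varphi_0}^2$ over the cross--section of $K_{12}(\gamma)$ at height $x_3$ is at least $\tfrac12\norm{P_\perp\varphi_0}_{L^2(\R^2)}^2$ once $b$ is large, so this term is a fixed positive constant times $\int_{b/2}^\infty\abs{x_3}^{-2}\abs{\Phi}^2\,dx_3$ — the very same quantity that upper--bounds your $g\Phi$ shell term, whose coefficient carries the extra factor $b^{-\nu}$. Comparing the two coefficients for $b$ large finishes the proof. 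In short: both the good and the bad term are reduced to the single scalar quantity $\int_{b/2}^\infty\abs{x_3}^{-2}\abs{\Phi}^2\,dx_3$, and the decay of $P_\perp\varphi_0$ makes the bad coefficient lose to the good one; no estimate on $\Phi'$ is needed. You should replace the last part of your argument by this comparison.
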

\begin{remark}
    Recall that for any fixed $\gamma>0$ we can make $\varepsilon>0$ arbitrarily small by choosing $\tilde \gamma \in (0, \gamma)$ small. In particular, we can always assume $\varepsilon < \mu/8$.
\end{remark}
\subsection{Proofs of Lemmas~\ref{first_part_of_lem21}, \ref{second_part_of_lem21} and \ref{third_part_of_lem21}} \label{three_lemmatas}
\subsubsection{Proof of Lemma~\ref{first_part_of_lem21}} \label{prof_of_lem_51} We aim to decompose the expression
\begin{equation} \label{l_12_copy}
    L_{12}[\psi_{12}] = \int_{K_{12}(\gamma)} \left( \abs{\nabla \psi_{12}}^2 + \sum_{\alpha\in I} V_{\alpha} \abs{\psi_{12}}^2 \right) dx -  \int_{K_{12}(\gamma)\setminus S(0,b)} \frac{\abs{\psi_{12}}^{2}}{\abs{x}^{2+\tau}} dx
\end{equation}
into terms that involve either $\psi_1$ or $\psi_2$ defined in \eqref{def_tilde_psi}. In the region $K_{12}(\gamma, \tilde{\gamma})$, the potentials $V_\alpha$ satisfy \mbox{$\abs{V_\alpha} \leq \tilde{C} \abs{x}^{-2-\delta}$} for some constant $\tilde{C} > 0$, assuming $\gamma > 0$ is sufficiently small and $b > 0$ is sufficiently large (see Lemma~\ref{potentials_small_outside_cones}). Since $\tau<\delta$ we have
\begin{equation}
    \int_{K_{12}(\gamma, \tilde{\gamma})} \sum_{\alpha\in I} V_\alpha \abs{\psi_{12}}^2 dx \geq - \int_{K_{12}(\gamma)\setminus S(0,b)} \frac{\abs{\psi_{12}}^{2}}{\abs{x}^{2+\tau}} dx
\end{equation}
and consequently
\begin{equation} \label{l_12_copy_copy}
    L_{12}[\psi_{12}] \geq \int_{K_{12}(\gamma)} \left( \abs{\nabla \psi_{12}}^2 + \sum_{\alpha\in I} V_{\alpha} \abs{\psi_{12}}^2 \mathbbm{1}_{K_{12}(\tilde \gamma)}  \right) dx -  2\int_{K_{12}(\gamma)\setminus S(0,b)} \frac{\abs{\psi_{12}}^{2}}{\abs{x}^{2+\tau}} dx \, .
\end{equation}
Due to the orthogonality of $P_0 \psi_{12}$ and $P_\perp \psi_{12}$
\begin{equation} \label{ortho_decomp_kinetic}
    \int_{K_{12}(\gamma)} \abs{\nabla \psi_{12}}^2 dx = \int_{K_{12}(\gamma)} \abs{\nabla (P_\perp \psi_{12}) }^2 dx + \int_{K_{12}(\gamma)} \abs{\nabla  (P_0 \psi_{12})}^2 dx \, .
\end{equation}
With the bounds on the localization estimates in \eqref{localization_error} we have
\begin{equation} \label{kin_perp_with_loc_error}
    \begin{split}
        \int_{K_{12}(\gamma)} \abs{\nabla (P_\perp \psi_{12}) }^2 dx   \geq &\int_{K_{12}(\gamma)} \abs{\nabla( P_\perp \psi_{12} u) }^2 dx  - \varepsilon \int_{K_{12}(\gamma,\tilde \gamma)} \abs{P_\perp \psi_{12}}^2 \frac{u^2}{\abs{(x_1,x_2)}^2} dx \\
        &+ \int_{K_{12}(\gamma)} \abs{\nabla (P_\perp \psi_{12}v) }^2 dx - \varepsilon \int_{K_{12}(\gamma,\tilde \gamma)} \abs{P_\perp \psi_{12}}^2  \frac{v^2}{\abs{x}^2}  dx,
    \end{split}
\end{equation}
where $u,v$ are functions defined by \eqref{u_and_v_functions}.  Due to the definition of functions $\psi_1$ and $\psi_2$ in \eqref{def_tilde_psi} we have
\begin{equation} \label{der_psi_2}
    \int_{K_{12}(\gamma)} \abs{\nabla( P_\perp \psi_{12} u) }^2 dx +\int_{K_{12}(\gamma)} \abs{\nabla  (P_0 \psi_{12})}^2 dx=\int_{K_{12}(\gamma)}  \abs{\nabla  \psi_2}^2  dx
\end{equation}
and since $\psi_1$ vanishes on $K_{12}(\tilde \gamma)$ it follows
\begin{equation} \label{der_psi_1}
    \int_{K_{12}(\gamma)} \abs{\nabla (P_\perp \psi_{12}v) }^2 dx =\int_{K_{12}(\gamma,\tilde \gamma)}  \abs{\nabla  \psi_1 }^2  dx \, .
\end{equation}
Inserting \eqref{kin_perp_with_loc_error}  into  \eqref{ortho_decomp_kinetic} and applying the relations in \eqref{der_psi_2} and \eqref{der_psi_1} we arrive at
\begin{equation} \label{IMS_and_ortho_first}
    \begin{split}
        \int_{K_{12}(\gamma)}  \abs{\nabla \psi_{12}}^2  dx \geq &\int_{K_{12}(\gamma,\tilde \gamma)}  \abs{\nabla  \psi_1 }^2  dx - \varepsilon\int_{K_{12}(\gamma,\tilde \gamma)} \frac{\abs{\psi_1}^2}{\abs{x}^2} dx \\
        &+ \int_{K_{12}(\gamma)}  \abs{\nabla  \psi_2}^2  dx-\varepsilon\int_{K_{12}(\gamma,\tilde \gamma)} \frac{\abs{P_\perp \psi_2}^2}{\abs{(x_1,x_2)}^2} dx \, .
    \end{split}
\end{equation} 

The functions $\psi_1$,  $\psi_2$ satisfy
\begin{equation} \label{ortho_once_more}
    P_0 \psi_1=0 \quad \text{and} \quad \abs{P_\perp \psi_1}^2 + \abs{P_\perp \psi_2}^2 = \abs{P_\perp \psi_{12}}^2 \, .
\end{equation}
Dividing by $\abs{x}^{2+ \tau}$ does not change the symmetry of the functions and consequently, together with \eqref{ortho_once_more} we get
\begin{equation} \label{orthogonality_used}
   \int_{K_{12}(\gamma)\setminus S(0,b)} \frac{\abs{\psi_{12}}^{2}}{\abs{x}^{2+\tau}} dx = \int_{K_{12}(\gamma)\setminus S(0,b)} \frac{\abs{ \psi_1}^{2}}{\abs{x}^{2+\tau}}dx + \int_{K_{12}(\gamma)\setminus S(0,b)} \frac{\abs{ \psi_2}^{2}}{\abs{x}^{2+\tau}} dx \, . 
\end{equation}
Inserting \eqref{IMS_and_ortho_first} and \eqref{orthogonality_used} into \eqref{l_12_copy_copy} yields
\begin{equation} \label{l_12_copy_next}
    L_{12}[\psi_{12}] \geq \tilde{L}_{12}[\psi_2] + \int_{K_{12}(\gamma)} \left( \abs{\nabla \psi_1}^2 - 2\frac{\abs{\psi_1}^{2}}{\abs{x}^{2+\tau}}  - \varepsilon \frac{\abs{\psi_1}^2}{\abs{x}^2}  \right) dx
\end{equation}
where $\tilde{L}_{12}[\psi_2]$ is given in \eqref{L_tilde_def}. 
Since $\supp{\psi_1} \subset \{x \in \R^3:\abs{x} \geq b\}$ we can apply the radial Hardy Inequality and for $\varepsilon>0$ small and $b>0$ large enough we obtain
\begin{equation}
    \int_{K_{12}(\gamma)} \left( \abs{\nabla \psi_1}^2 - 2\frac{\abs{\psi_1}^{2}}{\abs{x}^{2+\tau}}  - \varepsilon \frac{\abs{\psi_1}^2}{\abs{x}^2}  \right) dx \geq 0 \, . 
\end{equation}
This completes the proof of Lemma~\ref{first_part_of_lem21}. \qed
\subsubsection{Proof of Lemma~\ref{second_part_of_lem21}}  We aim to estimate 
    \begin{equation}
    \begin{split}
        \tilde{L}_{12}[\psi_2] = &\int_{K_{12}(\gamma)} \left( \abs{\nabla \psi_2}^2 + \sum_{\alpha\in I} V_{\alpha} \abs{\psi_2}^2 \mathbbm{1}_{K_{12}(\tilde \gamma)}  -2\frac{\abs{\psi_2}^{2}}{\abs{x}^{2+\tau}} \right)dx \\
        &-\varepsilon\int_{K_{12}(\gamma,\tilde \gamma)} \frac{\abs{P_\perp {\psi}_2}^2}{\abs{(x_1,x_2)}^2} dx \, .
    \end{split}
\end{equation}
Due to Lemma~\ref{potentials_small_outside_cones} there exists a constant $\tilde C>0$ such that on on $K_{12}(\gamma)\setminus S(0,b)$
\begin{equation} \label{v_23_13}
\abs{V_{13} + V_{23}} \leq \tilde C \abs{x}^{-2-\delta}  \, .
\end{equation}
Then, using \eqref{v_23_13} together with $\tau < \delta$ yields
\begin{equation} \label{tilde_12_step}
    \begin{split}
        \tilde{L}_{12}[\psi_2] \geq &\int_{K_{12}(\gamma)} \left( \abs{\nabla \psi_2}^2 +  V_{12} \abs{\psi_2}^2 \mathbbm{1}_{K_{12}(\tilde \gamma)}  -3\frac{\abs{\psi_2}^{2}}{\abs{x}^{2+\tau}} \right)dx \\
        &-\varepsilon\int_{K_{12}(\gamma,\tilde \gamma)} \frac{\abs{P_\perp \psi_2}^2}{\abs{(x_1,x_2)}^2} dx \, .
    \end{split}
\end{equation}
We rewrite \eqref{tilde_12_step} as \begin{equation} \label{tilde_12_step_2}
    \begin{split}
        \tilde{L}_{12}[\psi_2] \geq &\int_{K_{12}(\gamma)} \left( \abs{\nabla_{12} \psi_2}^2 +  V_{12} \abs{\psi_2}^2 \mathbbm{1}_{K_{12}(\tilde \gamma)} \right) dx \\
        & +\int_{K_{12}(\gamma)} \abs{\partial_{x_3} \psi_2}^2 dx -3\int_{K_{12}(\gamma)}\frac{\abs{\psi_2}^{2}}{\abs{x}^{2+\tau}} dx -\varepsilon\int_{K_{12}(\gamma,\tilde \gamma)} \frac{\abs{P_\perp \psi_2}^2}{\abs{(x_1,x_2)}^2} dx  \, .
    \end{split}
\end{equation}
We start with the first integral on the right--hand side of \eqref{tilde_12_step_2}. With the definition of $\tilde\psi_2$ in Section \ref{proof_of_L_1j_lemma} we have
\begin{equation} \label{obvious_grad_term}
    \int_{K_{12}(\gamma)} \abs{\nabla_{12} \psi_2}^2 dx =\int_{\R^3} \abs{\nabla_{12} \tilde \psi_2}^2 dx \, .
\end{equation}
The function $\tilde \psi_2$ and $\psi_2$ coincide inside of $K_{12}(\gamma)$ and consequently for the term involving $V_{12}$ in \eqref{tilde_12_step_2} we have
\begin{equation} \label{potential_cutted}
    \begin{split}
         \int_{K_{12}(\gamma)}V_{12} &\abs{\psi_2}^2 \mathbbm{1}_{K_{12}(\tilde \gamma)} dx \\
         &= \int_{\R^3} V_{12}\abs{\tilde \psi_2}^2 dx - \int_{K_{12}(\gamma, \tilde \gamma)} V_{12} \abs{\psi_2}^2 dx - \int_{\R^3 \setminus K_{12}(\gamma)} V_{12} \abs{\tilde \psi_2}^2 dx \, .
    \end{split}
\end{equation}
Outside of $K_{12}(\tilde \gamma)$ holds \mbox{$\abs{V_{12}}\leq \tilde C \abs{x}^{-3-\delta}$} for some $ \tilde C>0$. Consequently for $b>0$ large enough
\begin{equation} \label{potential_small_anulus}
    \abs{\int_{K_{12}(\gamma, \tilde \gamma)} V_{12} \abs{\psi_2}^2 dx} \leq \int_{K_{12}(\gamma, \tilde \gamma)} \abs{\psi_{2}(x)}^2 \abs{x}^{-2-\tau}   dx
\end{equation}
and
\begin{equation} \label{potential_small_outside_k12}
    \abs{ \int_{\R^3 \setminus K_{12}(\gamma)} V_{12} \abs{\tilde \psi_2}^2 dx } \leq   \int_{\R^3 \setminus K_{12}(\gamma)}  \abs{\tilde \psi_{2}(x)}^2 \abs{x}^{-2-\tau}  dx \, .
\end{equation}
Combining \eqref{obvious_grad_term}, \eqref{potential_cutted}, \eqref{potential_small_anulus} and \eqref{potential_small_outside_k12} we find
\begin{equation} \label{2d_energy_and_surface_int}
    \begin{split}
        \int_{K_{12}(\gamma)} \abs{\nabla_{12} \psi_2}^2 +  V_{12} \abs{\psi_{2}}^2  dx \geq &\int_{\R^3} \left( \abs{\nabla_{12} \tilde \psi_{2}}^2 +  V_{12} \abs{\tilde \psi_{2}}^2 \right) dx  \\
        &- \int_{\R^3 \setminus K_{12}(\gamma)}  \frac{\abs{\tilde \psi_{2}(x)}^2}{\abs{x}^{2+\tau}}  dx- \int_{K_{12}(\gamma, \tilde \gamma)} \frac{\abs{\psi_2}^2}{\abs{x}^{2+\tau}} dx \, .
    \end{split}
\end{equation}
By expressing $\tilde \psi_2$ in terms of $\Phi \varphi_0$ and $F$ (see \eqref{projection}), using $h_{12}\varphi_0=0$ and assertion (2) of Lemma~\ref{virtual_level_lemma} yields
\begin{equation} \label{there_is_mu}
    \int_{\R^3} \left( \abs{\nabla_{12} \tilde \psi_{2}}^2 +  V_{12} \abs{\tilde \psi_{2}}^2 \right) dx  \geq \mu \norm{ \nabla_{12} F}^2 \, .
\end{equation}
Inserting \eqref{there_is_mu} into \eqref{2d_energy_and_surface_int} gives
\begin{equation} \label{mu_and_surface_int}
    \begin{split}
        \int_{K_{12}(\gamma)} &\abs{\nabla_{12} \psi_2}^2 +  V_{12} \abs{\psi_{2}}^2  dx \\
        &\geq \mu \norm{ \nabla_{12} F}^2 - \int_{\R^3 \setminus K_{12}(\gamma)}  \abs{\tilde \psi_{2}(x)}^2 \abs{x}^{-2-\tau}  dx- \int_{K_{12}(\gamma, \tilde \gamma)} \abs{\psi_2}^2\abs{x}^{-2-\tau} dx \, .
    \end{split}
\end{equation}

Next, we show that 
\begin{equation}
    \int_{\R^3 \setminus K_{12}(\gamma)}  \abs{\tilde \psi_{2}(x)}^2 \abs{x}^{-2-\tau}  dx
\end{equation}
can be estimated by an integral over the surface $\partial K_{12}(\gamma)$.  We introduce polar coordinates
\begin{equation} \label{polar_for_12}
    \begin{split}
        x_1&=\rho  \sin(\varphi) , \\
        x_2&=\rho  \cos(\varphi) 
    \end{split}
\end{equation}
with $\varphi\in[0,2\pi)$ and $\rho \in[0,\infty)$. In this choice of coordinates the set $K_{12}(\gamma)$ is determined by $\rho \leq \kappa \abs{x_3}$ for some $\kappa>0$ depending on $\gamma$ (see Figure \ref{fig:002}). Then
\begin{equation} \label{get_surface_int_before}
    \begin{split}
        \int_{\R^3 \setminus K_{12}(\gamma)}  \frac{\abs{\tilde \psi_{2}(x)}^2}{\abs{x}^{2+\tau}}  dx &=  \int_{-\infty}^\infty \int_{\kappa \abs{x_3}}^\infty \frac{\int_{0}^{2\pi} \abs{\tilde \psi_{2}(\rho,\varphi, x_3)}^2d\varphi }{\rho^{2+\tau}}  \rho d\rho dx_3 \, .
    \end{split}
\end{equation}
Outside of the conical set $K_{12}(\gamma)$ the function $\tilde \psi_{2}$ is equal to its value on the boundary and consequently substituting $\tilde \psi_{2}(\rho,\varphi, x_3)$ with $\tilde \psi_{2}(\kappa \abs{x_3}, \varphi, x_3)$ in \eqref{get_surface_int} and solving the integral over $\rho$ yields
\begin{equation} \label{get_surface_int}
    \begin{split}
        \int_{\R^3 \setminus K_{12}(\gamma)}  \frac{\abs{\tilde \psi_{2}(x)}^2}{\abs{x}^{2+\tau}}  dx &= \int_{-\infty}^\infty \int_{\kappa \abs{x_3}}^\infty \frac{\int_{0}^{2\pi} \abs{\tilde \psi_{2}(\kappa \abs{x_3} ,\varphi, x_3)}^2d\varphi }{\rho^{2+\tau}}  \rho d\rho dx_3 \\
        &=(\kappa^{\tau}\tau)^{-1}\int_{-\infty}^\infty \frac{\int_{0}^{2\pi} \abs{\tilde \psi_{2}(\kappa \abs{x_3} ,\varphi, x_3)}^2d\varphi }{\abs{x_3}^{1+\tau}} \abs{x_3} dx_3 \, . 
    \end{split}
\end{equation}
\begin{figure}[!ht]
\begin{center}
\begin{minipage}{.45\textwidth}
\begin{tikzpicture}[scale=1] 
    \fill[orange,opacity=0.4] (0,0) -- (2,-3) -- (-2,-3) -- cycle;
    \fill[orange,opacity=0.4] (0,0) -- (2,3) -- (-2,3) -- cycle;

    \fill[white] (0,0) -- (1.7,3) -- (-1.7,3) -- cycle;
    \fill[white] (0,0) -- (1.7,-3) -- (-1.7,-3) -- cycle;
    \fill[red!20,opacity=0.5] (0,0) -- (1.7,3) -- (-1.7,3) -- cycle;
    \fill[red!20,opacity=0.5] (0,0) -- (1.7,-3) -- (-1.7,-3) -- cycle;

    \draw [->, thick] (2,1.8) -- (1.5,2.5);
    \node at (3,1.5) {$K_{12}(\gamma) \setminus K_{12}(\tilde \gamma)$};
    
    \draw[black] (0,0) circle (1.5cm);
    \draw[fill, blue!20,opacity=0.5] (0,0) circle (1.5cm);
    \draw [thick] (0,0) -- (-2,3);
    \draw [thick] (0,0) -- (2,3);
    \draw [thick] (0,0) -- (-2,-3);
    \draw [thick] (0,0) -- (2,-3);

    \draw [thick] (0,0) -- (-1.7,3);
    \draw [thick] (0,0) -- (1.7,3);
    \draw [thick] (0,0) -- (-1.7,-3);
    \draw [thick] (0,0) -- (1.7,-3);

    \draw[fill, white] (0,0) circle (1.5cm);
    \draw[black] (0,0) circle (1.5cm);
    \draw[fill, blue!20,opacity=0.5] (0,0) circle (1.5cm);

    \draw [ ->] (0,0) -- (0,3.2);
    \node at (0,3.4) {$x_3$};
    \draw[->] (0,0) -- (2.5,0) node[anchor=west] {$\abs{(x_1,x_2)}$};
    \draw [dotted] (0,0) -- (-1.7,3);
    \draw [dotted] (0,0) -- (1.7,3);
    \draw [dotted] (0,0) -- (-1.7,-3);
    \draw [dotted] (0,0) -- (1.7,-3);
    \draw [dotted] (0,0) -- (-2,3);
    \draw [dotted] (0,0) -- (2,3);
    \draw [dotted] (0,0) -- (-2,-3);
    \draw [dotted] (0,0) -- (2,-3);

    \draw [<->] (0,0) -- (-1.47,0.3);
    \node at (-0.75,-0.05) {$b$};

    \node at (0.6,2.25) {$K_{12}(\tilde\gamma)$};
    \node at (0.6,-2.25) {$K_{12}(\tilde\gamma)$};
    
\draw [thick] (-0.5,3,0) arc[start angle=180, end angle=360, x radius=0.5, y radius=0.125];
\draw [->,thick] (0.5,3,0) arc[start angle=0, end angle=60, x radius=0.5, y radius=0.125];
\end{tikzpicture}
\end{minipage}
\hfill
\begin{minipage}{.45\textwidth}
\begin{tikzpicture}[scale=1] 
    \fill[red!20,opacity=0.5] (0,0) -- (2,3) -- (-2,3) -- cycle;
    \fill[red!20,opacity=0.5] (0,0) -- (2,-3) -- (-2,-3) -- cycle;
    \draw[black] (0,0) circle (1.5cm);
    \draw[fill, blue!20,opacity=0.5] (0,0) circle (1.5cm);
    \draw [thick] (0,0) -- (-2,3);
    \draw [thick] (0,0) -- (2,3);
    \draw [thick] (0,0) -- (-2,-3);
    \draw [thick] (0,0) -- (2,-3);

    \draw [thick] (0,0) -- (-3,3);
    \draw [thick] (0,0) -- (3,3);
    \draw [thick] (0,0) -- (-3,-3);
    \draw [thick] (0,0) -- (3,-3);

    \draw
    (0,1.5) coordinate (a) 
    -- (0,0) coordinate (b) 
    -- (-1.5,2.25) coordinate (c) 
    pic["$\theta_0$", draw=black, <->, angle eccentricity=1.1, angle radius=2cm]
    {angle=a--b--c};

    \draw
    (0,1.5) coordinate (a) 
    -- (0,0) coordinate (b) 
    -- (-3,3) coordinate (c) 
    pic["$\theta_1$", draw=black, <->, angle eccentricity=1.1, angle radius=2.75cm]
    {angle=a--b--c};
    
    \draw[fill, white] (0,0) circle (1.5cm);
    \draw[black] (0,0) circle (1.5cm);
    \draw[fill, blue!20,opacity=0.5] (0,0) circle (1.5cm);

    \draw [ ->] (0,0) -- (0,3.2);
    \node at (0,3.4) {$x_3$};
    \draw[->] (0,0) -- (3.5,0) node[anchor=west] {$\abs{(x_1,x_2)}$};
    \draw [dotted] (0,0) -- (-3,3);
    \draw [dotted] (0,0) -- (3,3);
    \draw [dotted] (0,0) -- (-3,-3);
    \draw [dotted] (0,0) -- (3,-3);
    \draw [dotted] (0,0) -- (-2,3);
    \draw [dotted] (0,0) -- (2,3);
    \draw [dotted] (0,0) -- (-2,-3);
    \draw [dotted] (0,0) -- (2,-3);

    \draw [<->] (0,0) -- (-1.47,0.3);
    \node at (-0.75,-0.05) {$b$};

    \node at (0.75,2.25) {$K_{12}(\gamma)$};
    \node at (0.75,-2.25) {$K_{12}(\gamma)$};
    
\draw [thick] (-0.5,3,0) arc[start angle=180, end angle=360, x radius=0.5, y radius=0.125];
\draw [->,thick] (0.5,3,0) arc[start angle=0, end angle=60, x radius=0.5, y radius=0.125];
\end{tikzpicture} 
\end{minipage}
\end{center}
\caption{Left--hand side: sketch of the sets $K_{12}(\gamma)$ and $K_{12}(\tilde \gamma)$ used in the proof of Lemma~\ref{L_1j_lemma}. \\Right--hand side: sketch of the sets $K_{12}(\gamma)$ and $K_{12}(\gamma_1)$ where the angles $\theta_0$ and $\theta_1$ are defined as $\theta_0=\arcsin(\gamma)$ and $\theta_1=\arcsin (\gamma_1)$ and used in Lemma~\ref{trace_lemma_001}. } \label{fig:002}
\end{figure}
Regarding the set $\partial K_{23}(\gamma)$ the surface measure $d\sigma$ equals $\abs{x_3}dx_3 d\varphi$ up to a constant depending on $\gamma$ and the function $P_\perp \tilde \psi_2 = 0$ such that $\tilde \psi_2 = P_0 \psi$ on this surface. For $(x_1,x_2,x_3) \in \partial K_{12}(\gamma)$ we have $\abs{x_3} = (1-\gamma^2)^{1/2} \abs{x}$ and therefore there exists some $C_1>0$ that depends on $\gamma$ and $\delta$ but is independent of $\psi$ such that
\begin{equation} \label{est_with_surf_int}
    \int_{\R^3 \setminus K_{12}(\gamma)}  \frac{\abs{\tilde \psi_{2}(x)}^2}{\abs{x}^{2+\tau}}  dx \leq C_1 \int_{\partial K_{12}(\gamma) } \frac{\abs{P_0 \psi }^2}{\abs{x}^{1+\tau}} d\sigma \, .
\end{equation}

Combining \eqref{est_with_surf_int} and \eqref{mu_and_surface_int} we arrive at
\begin{equation} \label{estimate_first_ingral_grad12}
    \begin{split}
    \int_{K_{12}(\gamma)} \abs{\nabla_{12} \psi_2}^2 +  V_{12} \abs{\psi_{2}}^2  dx  
    \geq \mu \norm{ \nabla_{12} F}^2 &- C_1\int_{\partial K_{12}(\gamma) } \frac{\abs{(P_0\psi )(x)}^2}{\abs{x}^{1+\tau}} d\sigma \\
    &- \int_{K_{12}(\gamma, \tilde \gamma)} \frac{\abs{\psi_2}^2}{\abs{x}^{2+\tau}} dx \, .
    \end{split}
\end{equation}
Substituting \eqref{estimate_first_ingral_grad12} into \eqref{tilde_12_step_2} yields
\begin{equation} \label{tilde_12_step_2_1}
    \begin{split}
        \tilde{L}_{12}[\psi_2] \geq &\mu \norm{ \nabla_{12} F}^2 - C_1\int_{\partial K_{12}(\gamma) } \frac{\abs{P_0 \psi }^2}{\abs{x}^{1+\tau}} d\sigma  \\
        & +\int_{K_{12}(\gamma)} \left( \abs{\partial_{x_3} \psi_2}^2  -4\frac{\abs{\psi_2}^{2}}{\abs{x}^{2+\tau}} \right) dx -\varepsilon\int_{K_{12}(\gamma,\tilde \gamma)} \frac{\abs{P_\perp \psi_2}^2}{\abs{(x_1,x_2)}^2} dx  \, .
    \end{split}
\end{equation}
We proceed by studying the term
\begin{equation} \label{split_up_psi_2}
    \begin{split}
        \int_{K_{12}(\gamma)} \left( \abs{\partial_{x_3} \psi_2}^2 dx -4\frac{\abs{\psi_2}^{2}}{\abs{x}^{2+\tau}} \right) dx = &\int_{K_{12}(\gamma)} \left( \abs{\partial_{x_3}  (P_0 \psi_2)}^2 -4\frac{\abs{P_0 \psi_2}^{2}}{\abs{x}^{2+\tau}} \right) dx \\
        &+ \int_{K_{12}(\gamma)} \left( \frac{1}{2}\abs{ \partial_{x_3} (P_\perp \psi_2)}^2  -4\frac{\abs{P_\perp \psi_2}^{2}}{\abs{x}^{2+\tau}} \right) dx   \\
        &+\int_{K_{12}(\gamma)}\frac{1}{2} \abs{ \partial_{x_3} (P_\perp \psi_2)}^2 dx 
    \end{split}
\end{equation}
Using $P_\perp \psi_2 =0$  on $\partial K_{12}(\gamma)$ and decreasing the integral by replacing $\abs{x}$ with $\abs{x_3}$ together with the one--dimensional Hardy Inequality (see Lemma~\ref{one_d_hardy}) yields for $b>0$ large enough
\begin{equation} \label{one_dim_hard_again}
 \int_{K_{12}(\gamma)} \left( \frac{1}{2}\abs{\partial_{x_3} (P_\perp \psi_2)}^2  -4\frac{\abs{P_\perp \psi_2}^{2}}{\abs{x_3}^{2+\tau}} \right) dx   \geq 0 \, .
\end{equation}
Combining \eqref{split_up_psi_2} and \eqref{one_dim_hard_again} yields
\begin{equation} \label{only_p0_terms_left}
    \begin{split}
        \int_{K_{12}(\gamma)} \left( \abs{\partial_{x_3} \psi_2}^2 dx -4\frac{\abs{\psi_2}^{2}}{\abs{x}^{2+\tau}} \right) dx  \geq 
        & \int_{K_{12}(\gamma)} \left( \abs{\partial_{x_3}  (P_0 \psi_2)}^2  -4\frac{\abs{P_0 \psi_2(x)}^{2}}{\abs{x_3}^{2+\tau}} \right) dx \\
        &+\frac{1}{2} \int_{K_{12}(\gamma)}  \abs{\partial_{x_3} (P_\perp \psi_2)}^2 dx \, .
    \end{split}
\end{equation}
Next, we estimate the integral involving $P_0 \psi_2 $ in \eqref{only_p0_terms_left}. The one--dimensional Hardy Inequality can not be applied directly, as $P_0 \psi_2$ does not vanish on the boundary $\partial K_{12}(\gamma)$. So, we use the following construction instead. 

Let $G$ be defined as a continuous function in $K_{12}(\gamma) \setminus S(0,b)$ that coincides with $P_0 \psi_2$ on the boundary $\partial K_{12}(\gamma)$ and is independent of $x_3$ within $K_{12}(\gamma)$. We define $\Gamma$ in $K_{12}(\gamma)\setminus S(0,b)$ as
\begin{equation}
    \Gamma \coloneqq P_0 \psi_2 - G 
\end{equation}
such that $\Gamma$ vanishes on $\partial K_{12}(\gamma)$. Then
\begin{equation} \label{obvious_with_Gamma}
    \int_{K_{12}(\gamma)} \abs{\partial_{x_3}  (P_0 \psi_2)}^2 dx =  \int_{K_{12}(\gamma)} \abs{\partial_{x_3}  \Gamma }^2 dx
\end{equation}
and 
\begin{equation} \label{splitted_Gamma_G}
    \begin{split}
    \int_{K_{12}(\gamma)} \frac{\abs{P_0 \psi_2}^{2}}{\abs{x_3}^{2+\tau}} dx  &= \int_{K_{12}(\gamma)} \frac{\abs{\Gamma + G}^{2}}{\abs{x_3}^{2+\tau}} dx \\
    &\leq 2\int_{K_{12}(\gamma)} \frac{\abs{\Gamma}^{2}}{\abs{x_3}^{2+\tau}} dx  + 2\int_{K_{12}(\gamma)} \frac{\abs{G}^{2}}{\abs{x_3}^{2+\tau}} dx \, .
    \end{split}
\end{equation}
Combining \eqref{obvious_with_Gamma} and \eqref{splitted_Gamma_G} we find
\begin{equation}
    \begin{split}
        \int_{K_{12}(\gamma)} \left( \abs{\partial_{x_3}  (P_0 \psi_2)}^2  -4\frac{\abs{P_0 \psi_2(x)}^{2}}{\abs{x_3}^{2+\tau}} \right) dx \geq  &\int_{K_{12}(\gamma)} \left( \abs{\partial_{x_3}  \Gamma }^2  -8\frac{\abs{\Gamma}^{2}}{\abs{x_3}^{2+\tau}} \right) dx \\
        &-\int_{K_{12}(\gamma)}8\frac{\abs{G}^{2}}{\abs{x_3}^{2+\tau}}dx
    \end{split}
\end{equation}
Since $\psi$ and consequently $\Gamma$ vanishes for $\abs{x_3}<b/2 $ 
we can apply the one--dimensional Hardy inequality (see Lemma~\ref{one_dimensional_hardy}), which yields for $\tau>0$ and $b>0$ large enough
\begin{equation} \label{using_1d_hardy_on_Gamma}
    \int_{K_{12}(\gamma)} \abs{\partial_{x_3}  \Gamma }^2 dx - 8\int_{K_{12}(\gamma)} \frac{\abs{\Gamma}^{2}}{\abs{x_3}^{2+\tau}} dx   \geq 0 \, .
\end{equation}
This yields
\begin{equation} \label{another_surface_int}
  \int_{K_{12}(\gamma)} \left( \abs{\partial_{x_3}  (P_0 \psi_2)}^2 dx -4\frac{\abs{P_0 \psi_2(x)}^{2}}{\abs{x}^{2+\tau}} \right) dx \geq  -8\int_{K_{12}(\gamma)} \frac{\abs{G}^{2}}{\abs{x_3}^{2+\tau}} dx \, .
\end{equation}
Next, we show that the integral on the right--hand side of equation \eqref{another_surface_int} can be estimated by an integral over $\partial K_{12}(\gamma)$. The function $G$ is independent of $x_3$ in $K_{12}(\gamma)$, therefore using polar coordinates as in \eqref{polar_for_12} we find
\begin{equation} 
    \begin{split}
        \int_{K_{12}(\gamma)} \frac{\abs{G(x)}^{2}}{\abs{x_3}^{2+\tau}} dx  &=  \int_{0}^\infty \int_{\abs{x_3}\geq \kappa^{-1} \rho} \frac{\int_{0}^{2\pi} \abs{ G(\rho,\varphi, x_3)}^2d\varphi }{\abs{x_3}^{2+\tau}} dx_3  \rho d\rho \\
        &=\int_{0}^\infty \int_{\abs{x_3}\geq \kappa^{-1} \rho} \frac{\int_{0}^{2\pi} \abs{ G(\rho,\varphi, \kappa^{-1} \rho)}^2d\varphi }{\abs{x_3}^{2+\tau}} dx_3  \rho d\rho \\
        &=(\kappa^{1+\tau}(1+\tau))^{-1} \int_{0}^\infty  \frac{\int_{0}^{2\pi} \abs{ G(\rho,\varphi, \kappa^{-1} \rho)}^2d\varphi }{\rho^{1+\tau}}  \rho d\rho 
    \end{split}
\end{equation}
Due to the definition of $G$ and since $\rho = \gamma \abs{x}$ on $\partial K_{12}(x)$ there exists a constant $C_2>0$ that depends on $\gamma$ and $\delta$ but is independent of $\psi$ such that
\begin{equation}\label{surface_int_with_tau}
    \int_{K_{12}(\gamma)} \frac{\abs{G}^{2}}{\abs{x_3}^{2+\tau}} dx = C_2 \int_{\partial K_{12}(\gamma) } \frac{\abs{P_0\psi }^2}{\abs{x}^{1+\tau}} d\sigma \, .
\end{equation}
Substituting the relation \eqref{surface_int_with_tau} into \eqref{another_surface_int} it follows from \eqref{only_p0_terms_left} that
\begin{equation} \label{second_part_in_proof_of_64}
    \begin{split}
        \int_{K_{12}(\gamma)} \left( \abs{\partial_{x_3} \psi_2}^2 dx -4\frac{\abs{\psi_2}^{2}}{\abs{x}^{2+\tau}} \right) dx  \geq 
        & -8C_2 \int_{\partial K_{12}(\gamma) } \frac{\abs{P_0\psi }^2}{\abs{x}^{1+\tau}} d\sigma  \\
        &+\frac{1}{2} \int_{K_{12}(\gamma)}  \abs{\partial_{x_3} (P_\perp \psi_2)}^2 dx \, .
    \end{split}
\end{equation}
We insert \eqref{second_part_in_proof_of_64} into \eqref{tilde_12_step_2_1} and define $C \coloneqq C_1+8C_2$, such that
\begin{equation}
    \begin{split}
        \tilde{L}_{12}[\psi_2] \geq &\mu \norm{ \nabla_{12} F}^2  + \frac{1}{2}\int_{K_{12}(\gamma)} \abs{\partial_{x_3} (P_\perp \psi_2)}^2 dx -\varepsilon\int_{K_{12}(\gamma,\tilde \gamma)} \frac{\abs{P_\perp {\psi}_2}^2}{\abs{(x_1,x_2)}^2} dx \\
        &- C \int_{\partial K_{\alpha}(\gamma) } \frac{\abs{P_0 \psi }^2}{\abs{x}^{1+\tau}} d\sigma,
    \end{split}
\end{equation}
which completes the proof of Lemma~\ref{second_part_of_lem21}.  \qed
\subsubsection{Proof of Lemma~\ref{third_part_of_lem21}} \label{prof_of_lem_53}  To prove the lemma it suffices to show that for any $\mu>0$ and \mbox{$\varepsilon\in (0,\mu/8)$} there exists a $\lambda \in (0,1/2)$ such that for all $b>0$ (depending on $\mu, \varepsilon, \lambda$) large enough, the following inequality holds:
\begin{equation} \label{remains_for_21_copy}
    \mu \norm{ \nabla_{12} F}^2  + \lambda \int_{K_{12}(\gamma)} \abs{\partial_{x_3} (P_\perp \psi_2)}^2 dx -\varepsilon\int_{K_{12}(\gamma,\tilde \gamma)} \frac{\abs{P_\perp {\psi}_2}^2}{\abs{(x_1,x_2)}^2} dx \geq 0 \, .
\end{equation}
We start with the second term on the left--hand side of \eqref{remains_for_21_copy}. 
The function $P_\perp \psi_2$ vanishes for $\abs{x_3}=0$ and therefore by the one--dimensional Hardy Inequality (Lemma~\ref{one_d_hardy})
\begin{equation} \label{next_step001}
    \begin{split}
             \int_{K_{12}(\gamma)} \abs{\partial_{x_3} (P_\perp \psi_2)}^2 dx  &\geq \frac{1}{4} \int_{K_{12}(\gamma)} \frac{\abs{ P_\perp \psi_2}^2}{\abs{x_3}^2} dx \, . 
    \end{split}
\end{equation}
Since $\psi_2 = \Phi \varphi_0 + F$ on $K_{12}(\gamma)$ and  $(a+b)^2 \geq a^2/2 - b^2$ we can estimate the right--hand side of \eqref{next_step001} by
\begin{equation}\label{next_step002}
    \begin{split}
             \int_{K_{12}(\gamma)} \frac{\abs{ P_\perp \psi_2}^2}{\abs{x_3}^2} dx  &\geq \frac{1}{2} \int_{K_{12}(\gamma)} \abs{\Phi}^2 \frac{\abs{ P_\perp \varphi_0}^2}{\abs{x_3}^2} dx -\int_{K_{12}(\gamma)} \frac{\abs{ P_\perp F}^2}{\abs{x_3}^2} dx \, .
    \end{split}
\end{equation}
Using that 
\begin{equation}
    \abs{(x_1,x_2)}^2\leq \frac{\gamma^2}{1-\gamma^2} \, x_3^2, \quad \forall x\in K_{12}(\gamma).
\end{equation}
and substituting this into the right--hand side of \eqref{next_step002} we find
\begin{equation}\label{next_step002_2}
    \begin{split}
             \int_{K_{12}(\gamma)} \frac{\abs{ P_\perp \psi_2}^2}{\abs{x_3}^2} dx  \geq \frac{1}{2} \int_{K_{12}(\gamma)} \abs{\Phi}^2 \frac{\abs{ P_\perp \varphi_0}^2}{\abs{x_3}^2} dx - \frac{\gamma^2}{1-\gamma^2}\int_{K_{12}(\gamma)} \frac{\abs{ P_\perp F}^2}{\abs{(x_1,x_2)}^2} dx \, .
    \end{split}
\end{equation}
Combining \eqref{next_step001} and \eqref{next_step002_2}  and assuming 
\begin{equation}
    \lambda < \frac{1-\gamma^2}{2\gamma^2} \mu
\end{equation}
we arrive at
\begin{equation}\label{intermed_step}
    \begin{split}
        \lambda \int_{K_{12}(\gamma)} \abs{\partial_{x_3} (P_\perp \psi_2)}^2 dx \geq  &\frac{\lambda}{8}\int_{K_{12}(\gamma)} \abs{\Phi}^2 \frac{\abs{ P_\perp \varphi_0}^2}{\abs{x_3}^2} dx - \frac{\lambda}{4}\frac{\gamma^2}{1-\gamma^2}\int_{K_{12}(\gamma)} \frac{\abs{ P_\perp F}^2}{\abs{(x_1,x_2)}^2} dx \\
        &\geq  \frac{\lambda}{8}\int_{K_{12}(\gamma)} \abs{\Phi}^2 \frac{\abs{ P_\perp \varphi_0}^2}{\abs{x_3}^2} dx - \frac{\mu}{8}\int_{K_{12}(\gamma)} \frac{\abs{ P_\perp F}^2}{\abs{(x_1,x_2)}^2} dx
    \end{split}
\end{equation}
Using \mbox{$\varepsilon\in (0,\mu/8)$} we find that last term on the left--hand side of \eqref{remains_for_21_copy} can be estimated as
\begin{equation} \label{obvious_next_step}
\begin{split}
    \varepsilon \int_{K_{12}(\gamma,\tilde \gamma)} \frac{\abs{P_\perp {\psi}_2}^2}{\abs{(x_1,x_2)}^2} dx &\leq 2\varepsilon \int_{K_{12}(\gamma,\tilde \gamma)} \abs{\Phi}^2 \frac{\abs{P_\perp \varphi_0}^2}{\abs{(x_1,x_2)}^2} dx +2 \varepsilon \int_{K_{12}(\gamma,\tilde \gamma)} \frac{\abs{P_\perp F}^2}{\abs{(x_1,x_2)}^2} dx \\
    &\leq  2\varepsilon \int_{K_{12}(\gamma,\tilde \gamma)} \abs{\Phi}^2 \frac{\abs{P_\perp \varphi_0}^2}{\abs{(x_1,x_2)}^2} dx + \frac{\mu}{4} \int_{K_{12}(\gamma,\tilde \gamma)} \frac{\abs{P_\perp F}^2}{\abs{(x_1,x_2)}^2} dx \, .
\end{split}
\end{equation}
Inserting \eqref{intermed_step} and \eqref{obvious_next_step} into \eqref{remains_for_21_copy} we find 
\begin{equation} \label{reoder_after_in_lambda}
    \begin{split}
         \mu \norm{ \nabla_{12} F}^2 + \lambda \int_{K_{12}(\gamma)} &\abs{\partial_{x_3} (P_\perp \psi_2)}^2 dx -\varepsilon\int_{K_{12}(\gamma,\tilde \gamma)} \frac{\abs{P_\perp {\psi}_2}^2}{\abs{(x_1,x_2)}^2} dx\\
        &\geq \frac{\lambda}{8} \int_{K_{12}(\gamma)} \abs{\Phi}^2 \frac{\abs{ P_\perp \varphi_0}^2}{\abs{x_3}^2} dx- 2\varepsilon\int_{K_{12}(\gamma,\tilde \gamma)} \abs{\Phi}^2 \frac{\abs{P_\perp \varphi_0}^2}{\abs{(x_1,x_2)}^2} dx \\
        & + \mu \norm{ \nabla_{12} F}^2 - \frac{3}{8}\mu \int_{K_{12}(\gamma)}\frac{\abs{P_\perp F}^2}{\abs{(x_1,x_2)}^2} dx \, .
    \end{split}
\end{equation}
Furthermore, due to the symmetry of $P_\perp F$ (see \eqref{Hardy_type_ineq}) we have
\begin{equation}
    \norm{ \nabla_{12} F}^2 = \norm{ \nabla_{12} (P_0F)}^2+\norm{ \nabla_{12} (P_\perp F)}^2 \geq \norm{ \nabla_{12} (P_\perp F)}^2 \geq 4 \int_{\R^3}\frac{\abs{P_\perp F}^2}{\abs{(x_1,x_2)}^2} dx \, .
\end{equation}
Consequently for the terms in that last line of \eqref{reoder_after_in_lambda} we find
\begin{equation}
    \mu \norm{ \nabla_{12} F}^2 - \frac{3}{8}\mu \int_{K_{12}(\gamma)}\frac{\abs{P_\perp F}^2}{\abs{(x_1,x_2)}^2} dx \geq 0 \, .
\end{equation}
To complete the proof of the lemma it remains to show for fixed $\lambda$ and $\varepsilon$ we can choose $b>0$ large enough such that
\begin{equation} \label{remains_for_5.3}
    \frac{\lambda}{8} \int_{K_{12}(\gamma)} \abs{\Phi}^2 \frac{\abs{ P_\perp \varphi_0}^2}{\abs{x_3}^2} dx- 2\varepsilon\int_{K_{12}(\gamma,\tilde \gamma)} \abs{\Phi}^2 \frac{\abs{P_\perp \varphi_0}^2}{\abs{(x_1,x_2)}^2} dx \geq 0 \, .
\end{equation}
The first integral in \eqref{remains_for_5.3} is taken over the region $K_{12}(\gamma)$, while the second integral, which is negative, is taken over $K_{12}(\gamma, \tilde \gamma)$, a subset of $K_{12}(\gamma)$. We will show that \eqref{remains_for_5.3} follows from this observation and the decay properties of $P_\perp \varphi_0$ proved in Lemma~\ref{first_lemma}.

It holds
\begin{equation} \label{geometry_part_1}
    \frac{\tilde \gamma^2}{1- \tilde \gamma^2} \, x_3^2 \leq \abs{(x_1,x_2)}^2, \quad \forall x\in K_{12}(\gamma,\tilde \gamma),
\end{equation}
and
\begin{equation} \label{geometry_part_2}
    \tilde \gamma  b/2 \leq \tilde \gamma \abs{x} \leq \abs{(x_1,x_2)}, \quad \forall x\in K_{12}(\gamma,\tilde \gamma) \cap \supp(\psi) \, .
\end{equation}
Using \eqref{geometry_part_1} and \eqref{geometry_part_2} and applying Lemma~\ref{first_lemma} there exists some $\nu>0$ and a constant $c(\tilde \gamma,\nu)>0$ such that for the second integral in \eqref{remains_for_5.3}
\begin{equation} \label{next_step005}
    \begin{split}
        &\int_{K_{12}(\gamma,\tilde \gamma)} \abs{\Phi}^2 \frac{\abs{P_\perp \varphi_0}^2}{\abs{(x_1,x_2)}^2} dx \\
        &=\int_{K_{12}(\gamma,\tilde \gamma)} \abs{\Phi}^2 \frac{\abs{P_\perp \varphi_0}^2}{\abs{(x_1,x_2)}^2} \frac{\left( 1+\abs{(x_1,x_2)} \right)^\nu}{\left( 1+\abs{(x_1,x_2)} \right)^\nu}dx \\
        &\leq \frac{1}{\left(1+\tilde \gamma b/2\right)^\nu} \frac{1-\tilde \gamma^2}{\tilde \gamma^2} \int_{K_{12}(\gamma,\tilde \gamma)} \frac{\abs{\Phi}^2}{\abs{x_3}^2} \left( 1+\abs{(x_1,x_2)} \right)^\nu \abs{P_\perp \varphi_0}^2 d(x_1,x_2) dx_3 \\
        &= \frac{c(\tilde \gamma,\nu)}{b^\nu} \int_{b/2}^\infty \frac{\abs{\Phi}^2}{\abs{x_3}^2} dx_3 \, .
    \end{split}
\end{equation}
On the other hand for the first term in \eqref{remains_for_5.3} we have for $b>0$ large enough
\begin{equation} \label{next_step006}
    \int_{K_{12}(\gamma)} \abs{\Phi}^2 \frac{\abs{ P_\perp \varphi_0}^2}{\abs{x_3}^2} dx  \geq \frac{\norm{P_\perp \varphi_0}_{L^2(\R^2)}^2}{2} \int_{b/2}^\infty \frac{\abs{\Phi}^2}{\abs{x_3}^2} dx_3 \, .
\end{equation}
Combining \eqref{next_step005} and \eqref{next_step006} proves \eqref{remains_for_5.3}, which completes the proof of Lemma~\ref{third_part_of_lem21} and as discussed in \ref{proof_of_L_1j_lemma} this completes also the proof of Lemma~\ref{L_1j_lemma}. \qed 
\subsection{Proof of Lemma~\ref{L_23_lemma}} \label{proof_of_L_23_lemma}
We aim to  estimate 
\begin{equation}
    L_{23}[\psi_{23}]  = \int_{K_{23}(\gamma)} \left( \abs{\nabla \psi_{23}}^2 + \sum_{\beta\in I} V_{\beta} \abs{\psi_{23}}^2 \right) dx -  \int_{K_{23}(\gamma)\setminus S(0,b)} \frac{\abs{\psi_{23}}^{2}}{\abs{x}^{2+\tau}} dx \, .
\end{equation}
The potentials $V_{1j}$ satisfy $\abs{V_{1j}}\leq \tilde C \abs{x}^{-3-\delta}$ for $j\in \{2,3\}$ and some $\tilde C>0$ on $K_{23}(\gamma)$ due to Lemma~\ref{potentials_small_outside_cones}. Consequently, for $b>0$ sufficiently large
\begin{equation} \label{L_23_got_rid_of_pot}
    L_{23}[\psi_{23}]  \geq \int_{K_{23}(\gamma)} \left( \abs{\nabla \psi_{23}}^2 + V_{23} \abs{\psi_{23}}^2 \right) dx -  2 \int_{K_{23}(\gamma)\setminus S(0,b)} \frac{\abs{\psi_{23}}^{2}}{\abs{x}^{2+\tau}} dx,
\end{equation} 
which is equivalent to
\begin{equation} \label{idendify_2part_op}
    \begin{split}
        L_{23}[\psi_{23}]  \geq &\int_{K_{23}(\gamma)} \left( \abs{\partial_{x_2} \psi_{23}}^2 + \abs{\partial_{x_3} \psi_{23}}^2 + V_{23} \abs{\psi_{23}}^2 \right) dx \\
        &+\int_{K_{23}(\gamma)} \abs{\partial_{x_1} \psi_{23}}^2  -  2 \int_{K_{23}(\gamma)\setminus S(0,b)} \frac{\abs{\psi_{23}}^{2}}{\abs{x}^{2+\tau}} dx \, .
    \end{split}
\end{equation}
Next, we estimate the first term on the right--hand side of equation \eqref{idendify_2part_op}, which corresponds to a part of the quadratic form of the operator $h_{23}$ defined in \eqref{two_particle_operators}. Note that the operator $h_{23}$ is translation invariant. We introduce new coordinates $(q,\xi)$ which correspond to the relative distance and position of the center of mass of the subsystem $(23)$ with 
\begin{equation}
    q \coloneqq \frac{1}{\sqrt{M_{23}}}(\sqrt{m_3}x_2 - \sqrt{m_2}x_3)\,,
\end{equation}
\begin{equation}
   \xi \coloneqq \frac{1}{\sqrt{M_{23}}}(\sqrt{m_2}x_2 + \sqrt{m_3}x_3),
\end{equation}
where $M_{23} \coloneqq m_2+m_3$. Note that $q^2+\xi^2 = x_2^2+x_3^2$. Direct computations show that in $(q,\xi)$--coordinates the operator $h_{23}$ takes the form
\begin{equation}
    h_{23} = -\partial_q^2 - \partial_{\xi}^2 + V_{23}\left( \left( \frac{\abs{q}^2}{\mu_{23}} + (a_2-a_3)^2 \right)^{1/2} \right)
\end{equation}
where the reduced mass $\mu_{23}$ of particles $(23)$ is given by
\begin{equation}
    \mu_{23}  \coloneqq \frac{m_2m_3}{M_{23}} \, .
\end{equation}
The set $K_{23}(\gamma)$ in $(x_1, q,\xi)$--coordinates is given by
\begin{equation} \label{K_23_in_new_coordinates}
\begin{split}
        K_{23}(\gamma) &= \left\{ (x_1,q,\xi)\in \R^3: \abs{q} \leq \gamma \sqrt{\mu_{23}} \left( x_1^2 + q^2 + \xi^2 \right)^{1/2} \right\} \\
        &=\left\{ (x_1,q,\xi)\in \R^3: \abs{q} \leq \kappa_0 \left( x_1^2 + \xi^2 \right)^{1/2} \right\} 
\end{split}
\end{equation}
with
\begin{equation} \label{def_kappa_0}
    \kappa_0 \coloneqq \left(\frac{\gamma^2 \mu_{23}}{1-\gamma^2 \mu_{23}} \right)^{1/2} \, .
\end{equation}
The functional $L_{23}[\psi_{23}]$ can be written as
\begin{equation} \label{q_xi_introduced}
    \begin{split}
        L_{23}&[\psi_{23}]  \geq  \left( \frac{M_{23}}{\mu_{23}} \right)^{1/2} \Bigg[ \int_{K_{23}(\gamma)} \abs{\partial_{q} \psi_{23}}^2 +  V_{23} \abs{\psi_{23}}^2 d(x_1,q,\xi)\\
        &+\int_{K_{23}(\gamma)}\abs{\nabla_{(x_1,\xi)} \psi_{23}}^2 d(x_1,q,\xi) -  2 \int_{K_{23}(\gamma)\setminus S(0,b)} \frac{\abs{\psi_{23}}^{2}}{\abs{(x_1,q,\xi))}^{2+\tau}} d(x_1,q,\xi) \Bigg]
    \end{split}
\end{equation}
where $(M_{23}/\mu_{23})^{1/2}$ is the Jacobian determinant of the transformation to the new set of coordinates $(x_1,q,\xi)$. In abuse of notation, we denote $\psi_{23}$ expressed in coordinates $(x_1,q,\xi)$ by the same letter. We estimate \eqref{q_xi_introduced} in two steps. As the first step, we show that 
\begin{lemma} \label{first_lemma_in_32}
    \begin{equation} \label{first_step_in_32}
    \int_{K_{23}(\gamma)} \left( \abs{\partial_{q} \psi_{23}}^2 +  V_{23} \abs{\psi_{23}}^2 \right) d(x_1,q,\xi) \geq -C\int_{\partial K_{23}(\gamma)}  \frac{\abs{\psi_{23}}^2 }{\abs{x}^{1+\delta}} d\sigma 
\end{equation}
for some $C>0$ independent of $\psi$.
\end{lemma}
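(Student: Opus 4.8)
The plan is to reduce the estimate, for each fixed $(x_1,\xi)$, to a one--dimensional problem in the relative variable $q$ on the interval $[-q_0,q_0]$, where $q_0 \coloneqq \kappa_0(x_1^2+\xi^2)^{1/2}$ is the half--width of the cone $K_{23}(\gamma)$ at the slice $(x_1,\xi)$, cf.\ \eqref{K_23_in_new_coordinates}. Since $h_{23}\geq 0$ and the $\xi$--direction is free, a spreading argument in $\xi$ shows that the reduced one--dimensional operator $-\partial_q^2 + W \geq 0$ on $L^2(\R)$, where $W(q)\coloneqq V_{23}((\abs{q}^2/\mu_{23}+(a_2-a_3)^2)^{1/2})$ is the effective short--range potential; because $\abs{\mathbf{r}_{23}}\sim \abs{q}$ at infinity and $\nu_{23}=2+\delta$, it satisfies $\abs{W(q)}\leq C(1+\abs{q})^{-2-\delta}$. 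Non--negativity then yields a strictly positive solution $\phi_0>0$ of $-\phi_0''+W\phi_0=0$, which is $C^1$ since $W\in L^2_{\loc}(\R)$.

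The main device is the ground--state substitution. Writing $\psi_{23}=\phi_0 w$ on the slice and integrating by parts on $[-q_0,q_0]$, using $\phi_0''=W\phi_0$, the cross terms cancel the potential term and one obtains the exact identity
\begin{equation*}
  \int_{-q_0}^{q_0}\left(\abs{\partial_q\psi_{23}}^2 + W\abs{\psi_{23}}^2\right)dq = \int_{-q_0}^{q_0}\phi_0^2\abs{\partial_q w}^2\,dq + \left[\,m\,\abs{\psi_{23}}^2\,\right]_{-q_0}^{q_0},
\end{equation*}
where $m\coloneqq \phi_0'/\phi_0$ is the logarithmic derivative. The first term is non--negative, so the slice is controlled from below purely by the boundary contribution $m(q_0)\abs{\psi_{23}(q_0)}^2 - m(-q_0)\abs{\psi_{23}(-q_0)}^2$.

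The heart of the proof — and the step I expect to be the main obstacle — is to bound the \emph{unfavorably signed} part of these boundary terms by $C q_0^{-1-\delta}$. Because $W$ is short--range, the ODE $\phi_0''=W\phi_0$ forces $\phi_0$ to be asymptotically affine at each end, $\phi_0(q)=A_\pm q+B_\pm+o(1)$, and positivity of $\phi_0$ excludes any blow--up of $m$. As $q\to+\infty$ there are only two possibilities: either $\phi_0$ grows linearly with positive slope, so that $m(q)>0$ and $m(q_0)\abs{\psi_{23}(q_0)}^2\geq 0$ is already favorable; or $\phi_0$ tends to a positive constant, in which case $\phi_0'(q)=-\int_q^\infty W\phi_0\,ds$ gives $\abs{\phi_0'(q)}\leq C\int_q^\infty s^{-2-\delta}\,ds\leq Cq^{-1-\delta}$ and hence $\abs{m(q)}\leq Cq^{-1-\delta}$. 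In either case $m(q_0)\geq -Cq_0^{-1-\delta}$, and the symmetric analysis at $-\infty$ gives $m(-q_0)\leq Cq_0^{-1-\delta}$. Consequently
\begin{equation*}
  \int_{-q_0}^{q_0}\left(\abs{\partial_q\psi_{23}}^2 + W\abs{\psi_{23}}^2\right)dq \geq -C q_0^{-1-\delta}\left(\abs{\psi_{23}(q_0)}^2 + \abs{\psi_{23}(-q_0)}^2\right).
\end{equation*}

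Finally I would integrate this inequality over $(x_1,\xi)\in\R^2$ and pass to the surface integral. On $\partial K_{23}(\gamma)$ one has $\abs{x}^2=(1+\kappa_0^2)(x_1^2+\xi^2)$, so $q_0=\kappa_0(1+\kappa_0^2)^{-1/2}\abs{x}$ is comparable to $\abs{x}$ and $q_0^{-1-\delta}\leq c_\gamma \abs{x}^{-1-\delta}$; moreover the cone has constant opening, whence $d\sigma=(1+\kappa_0^2)^{1/2}\,d(x_1,\xi)$ on each face. Substituting these relations merges the two boundary terms into the single surface integral on the right--hand side of \eqref{first_step_in_32}, with a constant $C>0$ depending only on $\gamma$, $\delta$ and the potential and independent of $\psi$.
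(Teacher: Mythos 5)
Your proposal is correct and shares the paper's overall architecture (reduction to one--dimensional slices in $q$ over each fixed $(x_1,\xi)$, followed by the geometric identification of the boundary contributions with a surface integral over $\partial K_{23}(\gamma)$ using $q_0=\kappa_0\abs{(x_1,\xi)}$ and $\abs{x}^2=(1+\kappa_0^2)\abs{(x_1,\xi)}^2$), but it proves the crucial one--dimensional estimate by a genuinely different route. The paper invokes Lemma 6.2 of \cite{BBV:2022} (reproved in Appendix D): there one extends $\psi$ linearly from $\pm L$ down to zero over a length $s$, uses $h\geq 0$ applied to the extended test function, bounds the potential term on the extension by $\abs{\psi(\pm L)}^2\int_L^{L+s}\abs{V}\,dq\lesssim L^{-1-\delta}\abs{\psi(\pm L)}^2$, and lets $s\to\infty$. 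Your ground--state substitution $\psi=\phi_0 w$ instead yields the exact identity with boundary term $\bigl[m\abs{\psi}^2\bigr]_{-q_0}^{q_0}$, $m=\phi_0'/\phi_0$, and you control the unfavorable sign via the Levinson--type dichotomy for positive solutions of $\phi_0''=W\phi_0$. The paper's argument is softer: it needs only the quadratic--form inequality $h\geq 0$ and the pointwise decay of $W$, with no appeal to Allegretto--Piepenbrink existence of a positive zero--energy solution or to ODE asymptotics. Your argument is heavier in prerequisites but structurally more transparent (it identifies the boundary constant as a logarithmic derivative and produces an identity rather than an inequality). One point you should make explicit: the asymptotic bounds $m(q_0)\geq -Cq_0^{-1-\delta}$ and $m(-q_0)\leq Cq_0^{-1-\delta}$ are only available for $q_0$ large, and this suffices because $\supp\psi\subset\{\abs{x}>b\}$ forces $q_0\geq \kappa_0 b(1+\kappa_0^2)^{-1/2}$ on every slice where the boundary values $\psi_{23}(x_1,\pm q_0,\xi)$ are nonzero; on the remaining slices the boundary term vanishes and the identity already gives nonnegativity.
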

As a second step, we show
\begin{lemma} \label{second_step_in_32}
    \begin{equation} 
    \label{rmainder_n_estimated}
    \begin{split}
     \mathcal{N}[\psi_{23}] \coloneqq &\int_{K_{23}(\gamma)}  \abs{\nabla_{(x_1,\xi)} \psi_{23}}^2 d(x_1,q,\xi) -  2 \int_{K_{23}(\gamma)\setminus S(0,b)} \frac{\abs{\psi_{23}}^{2}}{\abs{(x_1,q,\xi))}^{2+\tau}} d(x_1,q,\xi) \\
     &\geq -C\int_{\partial K_{23}(\gamma)}  \frac{\abs{\psi_{23}}^2 }{\abs{x}^{1+\tau}} d\sigma   
    \end{split}
    \end{equation}
for some constant $C>0$ independent of $\psi$.
\end{lemma}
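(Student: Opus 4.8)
The plan is to exploit the product structure of $\mathcal{N}$: the only kinetic energy present is transverse to the relative coordinate $q$, namely $\abs{\nabla_{(x_1,\xi)}\psi_{23}}^2$, so after fixing $q$ the estimate reduces to a two--dimensional problem in the $(x_1,\xi)$--plane. Introducing polar coordinates $x_1=\rho\cos\theta$, $\xi=\rho\sin\theta$, I would first discard the nonnegative angular part of the gradient and use $\abs{x}^2=\rho^2+q^2\geq\rho^2$ in the negative term to get the slicewise lower bound
\begin{equation}
\mathcal{N}[\psi_{23}] \geq \int_{\R}\int_0^{2\pi}\left(\int_{R_q}^{\infty}\abs{\partial_\rho\psi_{23}}^2\,\rho\,d\rho - 2\int_{R_q}^{\infty}\frac{\abs{\psi_{23}}^2}{\rho^{1+\tau}}\,d\rho\right)d\theta\,dq,
\end{equation}
where, by \eqref{K_23_in_new_coordinates}, for fixed $q$ the cross section of $K_{23}(\gamma)$ is the exterior disk $\{\rho\geq R_q\}$ with $R_q\coloneqq\abs{q}/\kappa_0$ and $\kappa_0$ from \eqref{def_kappa_0}. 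The whole problem is thereby reduced to a one--dimensional weighted Hardy inequality on $[R_q,\infty)$ carrying a boundary term at $\rho=R_q$.

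The core estimate I would establish is the following: there are $R_0=R_0(\tau)$ and $c=c(\tau)>0$ such that every $g\in C^1([R,\infty))$ of compact support with $R\geq R_0$ satisfies
\begin{equation}
\int_R^{\infty}\abs{g'}^2\rho\,d\rho - 2\int_R^{\infty}\frac{\abs{g}^2}{\rho^{1+\tau}}\,d\rho \geq -\frac{c}{R^\tau}\abs{g(R)}^2.
\end{equation}
This follows from a completion--of--squares (ground state substitution): for a weight $w$, expanding $0\leq\int_R^\infty\rho\abs{g'-wg}^2$ and integrating the cross term by parts (the endpoint at infinity drops out by compact support) yields
\begin{equation}
\int_R^{\infty}\rho\abs{g'}^2\,d\rho \geq -R\,w(R)\abs{g(R)}^2 - \int_R^{\infty}\big[(\rho w)'+\rho w^2\big]\abs{g}^2\,d\rho.
\end{equation}
Choosing $w(\rho)=c\,\rho^{-1-\tau}$ with $c=3/\tau$ makes $(\rho w)'+\rho w^2 = -c\tau\rho^{-1-\tau}+c^2\rho^{-1-2\tau}\leq -2\rho^{-1-\tau}$ for all $\rho\geq R_0(\tau)$, because the subcritical remainder $c^2\rho^{-1-2\tau}$ is dominated by the main term once $\rho$ is large. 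This absorbs the $-2\int\abs{g}^2\rho^{-1-\tau}$ contribution and leaves exactly the boundary term $-R\,w(R)\abs{g(R)}^2 = -c\,R^{-\tau}\abs{g(R)}^2$. It is precisely here that the strict positivity of $\tau$ and the largeness of $b$ enter, since they guarantee $R_q\geq R_0$ on the support of $\psi_{23}$.

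Reassembling is the final step. Because $\psi_{23}\in C^1_0$ is supported in $\{\abs{x}>b\}$, two cases occur, with $R_b\coloneqq b(1+\kappa_0^2)^{-1/2}$. For $\abs{q}\geq\kappa_0 R_b$ the slice reaches the cone boundary $\rho=R_q$ and contributes $-c\,R_q^{-\tau}\abs{\psi_{23}}^2$; for smaller $\abs{q}$ the slice is supported in $\{\rho\geq(b^2-q^2)^{1/2}\}\subset\{\rho\geq R_b\}$ and vanishes both at $\rho=R_q$ and on the sphere $\abs{x}=b$, so it contributes nothing. Integrating the surviving boundary terms over $\theta$ and $q$ and changing variables $\abs{q}=\kappa_0\rho$ on $\partial K_{23}(\gamma)$, I would identify them, using the conical surface measure $d\sigma=(1+\kappa_0^2)^{1/2}\rho\,d\rho\,d\theta$ together with $\abs{x}=(1+\kappa_0^2)^{1/2}\rho$ valid on $\partial K_{23}(\gamma)$, with $-C\int_{\partial K_{23}(\gamma)}\abs{\psi_{23}}^2\abs{x}^{-1-\tau}\,d\sigma$, which is the claim.

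The routine parts are the polar--coordinate bookkeeping and the matching of constants in the surface measure. The genuine point of the argument is the one--dimensional Hardy inequality with the correct boundary term, and in particular the verification that the negative potential $2\rho^{-1-\tau}$ is a true subcritical perturbation, integrable and absorbable uniformly in $q$ once $b$ is large. The most delicate bookkeeping, which I expect to be the main obstacle, is ensuring that the only boundary contribution is the conical one over $\partial K_{23}(\gamma)$ and that the spherical part coming from $\{\abs{x}=b\}$ drops out by the support condition on $\psi$.
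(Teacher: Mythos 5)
Your argument is correct, but it takes a genuinely different route from the paper. The paper follows the scheme of \cite[Lemma 6.7]{BBV:2022}: it subtracts from $\psi_{23}$ the angular average $\overline{\psi}_1$ of its boundary values on $\partial K_{23}(\gamma)$ (which is constant in $(\rho,\varphi)$, so the kinetic term is unchanged), splits $\abs{\psi_{23}}^2\leq 2\abs{\mathscr{F}}^2+2\abs{\overline{\psi}_1}^2$, absorbs the $\mathscr{F}$--part using the two--dimensional logarithmic Hardy inequality of Lemma~\ref{2d_2_hardy_inq} (which is applicable precisely because the angular mean of $\mathscr{F}$ vanishes at the inner radius $\rho_0$), and converts the $\overline{\psi}_1$--part into the surface integral by integrating out $\rho$. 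You instead discard the angular derivative altogether and prove, slice by slice in $(q,\theta)$, a one--dimensional subcritical Hardy inequality on $[R_q,\infty)$ with measure $\rho\,d\rho$ and an explicit boundary term at the cone boundary, via the ground--state substitution $w=(3/\tau)\rho^{-1-\tau}$; the verification that $(\rho w)'+\rho w^2\leq -2\rho^{-1-\tau}$ for $\rho\geq R_0(\tau)$ is correct, and your bookkeeping of which slices actually produce a boundary term (only those with $R_q\geq b\kappa_0(1+\kappa_0^2)^{-1/2}$, for which $R_q\geq R_0$ once $b$ is large) is handled properly, as is the identification of the resulting $q$--integral with the surface integral up to a $(\kappa_0,\tau)$--dependent constant. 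Your route is more elementary and self--contained: because the negative term in $\mathcal{N}$ already carries the subcritical weight $\abs{x}^{-2-\tau}$ with $\tau>0$, neither the angular decomposition nor the borderline logarithmic Hardy inequality is actually needed, and the completion of squares delivers the boundary term exactly where it is wanted. What the paper's approach buys is uniformity of method with the companion estimate of Lemma~\ref{first_lemma_in_32} and with \cite{BBV:2022}, where the decomposition into an angular mean and its complement is essential; for this particular functional both proofs are valid and yield the same conclusion.
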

\subsubsection{Proof of Lemma~\ref{first_lemma_in_32}}
We rewrite the first integral on the right--hand side of \eqref{q_xi_introduced} as
\begin{equation} \label{splitted_integrals}
\int_{K_{23}(\gamma)} \left( \abs{\partial_{q} \psi_{23}}^2 +  V_{23} \abs{\psi_{23}}^2 \right) d(x_1,q,\xi) = \int_{\R^2} \int_{-\kappa_0 \abs{(x_1,\xi)}}^{\kappa_0 \abs{(x_1,\xi)}} \abs{\partial_{q} \psi_{23}}^2 +  V_{23} \abs{\psi_{23}}^2 dq 
 \, d(x_1,\xi) \, .
\end{equation}
Note that due to the positivity of operator $h_{23}$ also holds
\begin{equation}
     -\partial_q^2  + V_{23}\left( \left( \frac{\abs{q}^2}{\mu_{23}} + (a_2-a_3)^2 \right)^{1/2} \right) \geq 0 \, .
\end{equation}
Since the potential $V_{23}$ satisfies 
\begin{equation}
    \abs{V_{23}\left( \left( \frac{\abs{q}^2}{\mu_{23}} + (a_2-a_3)^2 \right)^{1/2} \right)} \leq \tilde C \abs{q}^{2+\delta} \, 
\end{equation}
for some constant $\tilde C>0$ if $q$ is sufficiently large, applying \cite[Lemma 6.2]{BBV:2022} (for convenience of the reader we proof it in Appendix \ref{app:D}) there exists $C>0$ that depends on $\delta$ but is independent of $\psi$ such that
\begin{equation} \label{used_pos_of_1d_op}
\begin{split}
   \int_{-\kappa_0 \abs{(x_1,\xi)}}^{\kappa_0 \abs{(x_1,\xi)}} &\abs{\partial_q \psi_{23}}^2 +V_{23} \abs{\psi_{23}}^2 dq  \\
   &\geq -C \frac{\abs{\psi_{23}(x_1, \kappa_0 \abs{(x_1,\xi)}, \xi)}^2  +\abs{\psi_{23}(x_1, -\kappa_0 \abs{(x_1,\xi)}, \xi)}^2  }{\abs{(x_1,\xi)}^{1+\delta}} \, . 
\end{split}
\end{equation}
Combining \eqref{splitted_integrals} and \eqref{used_pos_of_1d_op} we find
\begin{equation} \label{lemma_bbv_applied}
    \begin{split}
        \int_{K_{23}(\gamma)} &\left( \abs{\partial_{q} \psi_{23}}^2 +  V_{23} \abs{\psi_{23}}^2 \right) d(x_1,q,\xi) \\
        &\geq -C\int_{\R^2}  \frac{\abs{\psi_{23}(x_1, \kappa_0 \abs{(x_1,\xi)}, \xi)}^2  +\abs{\psi_{23}(x_1, -\kappa_0 \abs{(x_1,\xi)}, \xi)}^2  }{\abs{(x_1,\xi)}^{1+\delta}} d(x_1,\xi) \, .
    \end{split}
\end{equation}
The points $(x_1, \pm \kappa_0 \abs{(x_1,\xi)}, \xi)$ belong to the  surface $\partial K_{23}(\gamma)$. Direct computations show that for the surface measure $d\sigma$ associated with the set $\partial K_{23}(\gamma)$ satisfies the relation 
\begin{equation} \label{surface_measure_with_kappa}
    d\sigma =  \kappa_0  d(x_1,\xi),
\end{equation}
with $\kappa_0$ defined in \eqref{def_kappa_0}. Consequently from \eqref{lemma_bbv_applied} we find 
\begin{equation} \label{got_surface_3.2}
    \int_{K_{23}(\gamma)} \left( \abs{\partial_{q} \psi_{23}}^2 +  V_{23} \abs{\psi_{23}}^2 \right) d(x_1,q,\xi) \geq -C\int_{\partial K_{23}(\gamma)}  \frac{\abs{\psi_{23}}^2 }{\abs{(x_1,\xi)}^{1+\delta}} d\sigma 
\end{equation}
for a possible different constant $C>0$.
Using that on $\partial K_{23}(\gamma)$
\begin{equation}
    \abs{x}^2= \abs{(x_1,\xi)}^2 + q^2 = (1+\kappa_0^2) \abs{(x_1,\xi)}^2
\end{equation}
completes the proof of Lemma~\ref{first_lemma_in_32}. \qed 
\begin{figure}[!ht]
    \begin{tikzpicture}[scale=1] 
    \fill[orange,opacity=0.4] (0,0) -- (4,-2) -- (4,2) -- cycle;
    
    \draw[black] (0,0) circle (1.5cm);
    \draw[fill, blue!20,opacity=0.5] (0,0) circle (1.5cm);
    \draw [thick] (0,0) -- (4,2);
    \draw [thick] (0,0) -- (4,-2);


    \draw[fill, white] (0,0) circle (1.5cm);
    \draw[black] (0,0) circle (1.5cm);
    \draw[fill, blue!20,opacity=0.5] (0,0) circle (1.5cm);

    \draw[-] (0.75,0) -- (0.75,0.1);
    \draw[-] (0.75,0) -- (0.75,-0.1);

    \draw[red,-] (2,1) -- (4.5,1);

    \node at (3.2,-0.5) {$K_{23}(\gamma) $};

    \node at (1.1,0,0) [below] {$b/2$};
        
    \draw [ ->] (0,0) -- (0,2.2);
    \node at (0,2.4) {$q$};
    \draw[->] (0,0) -- (4.5,0) node[anchor=west] {$\rho$};

    \draw [dotted] (0,0) -- (4,2) node[pos=0.7,above, sloped] {$q=\kappa_0 \rho$};
    \draw [dotted] (0,0) -- (4,-2);

    \draw [<->] (0,0) -- (-1.47,0.3);
    \node at (-0.75,-0.05) {$b$};

\draw [thick] (-0.5,2,0) arc[start angle=180, end angle=360, x radius=0.5, y radius=0.125];
\draw [->,thick] (0.5,2,0) arc[start angle=0, end angle=60, x radius=0.5, y radius=0.125];
\end{tikzpicture}
\caption{Sketch of the set $K_{23}(\gamma)$. In the circular blue area, the function $\psi$ vanishes. For fixed $q\in \R$ the horizontal red line indicates the path of integration used in Lemma~\ref{second_step_in_32}.}\label{fig:003}
\end{figure}
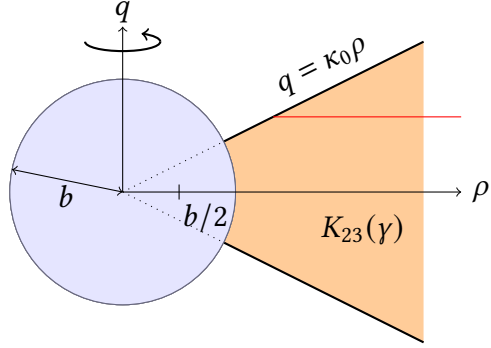
\subsubsection{Proof of Lemma~\ref{second_step_in_32}} 
\label{lemma_with_polars}
This lemma mainly follows the ideas of \cite[Lemma 6.7]{BBV:2022}. We introduce polar coordinates in the $(x_1,\xi)$--plane as
\begin{equation}
    \rho \coloneqq \sqrt{x_1^2+\xi^2}, \quad  \varphi \coloneqq \arctan(x_1/\xi) \, .
\end{equation}
The set $\partial K_{23}(\gamma)$ corresponds to points with 
\begin{equation}
    \abs{q} = \kappa_0 \rho,
\end{equation}
where $\kappa_0$ was defined in \eqref{def_kappa_0}. For each fixed $q\in \R$ let
\begin{equation}
    \overline{\psi}(q) \coloneqq 
        \int_0^{2\pi } \psi_{23}(q, \kappa_0^{-1}\abs{q},\varphi) \frac{d\varphi}{2\pi} \quad \text{ and } \quad \overline{\psi}_1(q,\rho, \varphi) \coloneq \overline{\psi}(q) \cdot \mathbbm{1}(\rho,\varphi) \, .
\end{equation}
Let $\mathscr{F} \coloneqq \psi_{23}-\overline{\psi}_1$.
 We write $\nabla_{(\rho,\varphi)}$ for the gradient in polar coordinates in the $(x_1,\xi)$--plane. Then $\nabla_{(\rho,\varphi)} \overline{\psi}_1 \equiv 0$ and consequently
\begin{equation} \label{symm_ovpsi_psi_}
    \begin{split}
        \int_{K_{23}(\gamma)}  &\abs{\nabla_{(x_1,\xi)} \psi_{23}}^2 d(x_1,q,\xi) \\
        &=\int_{K_{23}(\gamma)}  \abs{\nabla_{(\rho,\varphi)} \psi_{23}}^2 d(q,\rho,\varphi)  = \int_{K_{23}(\gamma)}  \abs{\nabla_{(\rho,\varphi)} \mathscr{F} }^2d(q,\rho,\varphi) ,
    \end{split}
\end{equation}
where $d(x_1,q,\xi)=d(q,\rho,\varphi) =\rho \, dq d\rho d\varphi$. Inserting \eqref{symm_ovpsi_psi_} into the right--hand side of \eqref{rmainder_n_estimated} we arrive at
\begin{equation}\label{N_est_second_step}
    \mathcal{N}[\psi_{23}] =  \int_{K_{23}(\gamma)}  \abs{\nabla_{(\rho,\varphi)} \mathscr{F} }^2d(q,\rho,\varphi) -  2 \int_{K_{23}(\gamma)\setminus S(0,b)} \frac{\abs{\psi_{23}}^{2}}{\abs{(x_1,q,\xi)}^{2+\tau}} d(x_1,q,\xi) \, .
\end{equation}
Transforming the second integral on the right--hand side of \eqref{N_est_second_step} to polar coordinates we get
\begin{equation}\label{N_est_second_step_polar}
    \mathcal{N}[\psi_{23}] =  \int_{K_{23}(\gamma)}  \abs{\nabla_{(\rho,\varphi)} \mathscr{F} }^2d(q,\rho,\varphi) -  2 \int_{K_{23}(\gamma)\setminus S(0,b)} \frac{\abs{\psi_{23}}^{2}}{\abs{(q,\rho)}^{2+\tau}} d(q,\rho,\varphi) \, .
\end{equation}
Since  $\mathscr{F} = \psi_{23}-\overline{\psi}_1$
and $(a+b)^2 \leq 2a^2+2b^2$ it holds
\begin{equation} \label{quadratic_ovpsi_psi1}
    \begin{split}
         &\int_{K_{23}(\gamma)\setminus S(0,b)} \frac{\abs{\psi_{23}}^{2}}{\abs{(q,\rho)}^{2+\tau}} d(q,\rho,\varphi) \\
         &\leq 2 \int_{K_{23}(\gamma)\setminus S(0,b)} \frac{\abs{\mathscr{F} }^{2}}{\abs{(q,\rho)}^{2+\tau}} d(q,\rho,\varphi) +  2 \int_{K_{23}(\gamma)\setminus S(0,b)} \frac{\abs{\overline{\psi}_1}^{2}}{\abs{(q,\rho)}^{2+\tau}} d(q,\rho,\varphi) \, .
    \end{split}
\end{equation}
Combining \eqref{N_est_second_step_polar} and \eqref{quadratic_ovpsi_psi1} yields
\begin{equation} \label{N_third_step}
    \begin{split}
        \mathcal{N}[\psi_{23}] \geq &\int_{K_{23}(\gamma)}  \abs{\nabla_{(\rho,\varphi)} \mathscr{F} }^2 d(q,\rho,\varphi)-4 \int_{K_{23}(\gamma)\setminus S(0,b)} \frac{\abs{\mathscr{F} }^{2}}{\abs{(q,\rho)}^{2+\tau}} d(q,\rho,\varphi) \\
        &-  4 \int_{K_{23}(\gamma)\setminus S(0,b)} \frac{\abs{\overline{\psi}_1}^{2}}{\abs{(q,\rho)}^{2+\tau}} d(q,\rho,\varphi) \, .
    \end{split}
\end{equation}
Next, we show that the sum of the first two integrals on the right--hand side of \eqref{N_third_step} is positive. The function $\psi_{23}=0$ for $\abs{x}<b$ and consequently for 
\begin{equation}
    \rho_0 \coloneqq \max\{\kappa_0^{-1} \abs{q},b/2\}
\end{equation}
we have
\begin{equation} \label{this_2d_kin_energy}
    \begin{split}
        \int_{K_{23}(\gamma)}  \abs{\nabla_{(\rho,\varphi)} \mathscr{F} }^2 d(q,\rho,\varphi) &= \int_{-\infty}^\infty \int_{\rho_0}^\infty \int_0^{2\pi} \abs{\nabla_{(\rho,\varphi)} \mathscr{F} (q,\rho,\varphi)}^2 d(\rho,\varphi) dq \, .
    \end{split}
\end{equation}
For \(\rho = \rho_0\), the projection of \(\mathscr{F}\) onto functions with zero angular momentum in the \((x_1, \xi)\)--plane vanishes. As a result, the following two--dimensional Hardy inequality (see Lemma~\ref{2d_2_hardy_inq}) holds for almost all \(q \in \mathbb{R}\):
\begin{equation} \label{this_2d_hardy}
    \begin{split}
        \int_{\rho_0}^\infty \int_0^{2\pi} \left|\nabla_{(\rho,\varphi)} \mathscr{F} (q, \rho, \varphi)\right|^2 \, d(\rho, \varphi) \geq \frac{1}{4} \int_{\rho_0}^\infty \int_0^{2\pi} \frac{\left|\mathscr{F} (q, \rho, \varphi)\right|^2}{\rho^2\left(1 + \ln^2(\rho)\right)} \, d(\rho, \varphi),
    \end{split}
\end{equation}
where we assume \(\rho_0 > 1\) for sufficiently large \(b > 0\). Inserting \eqref{this_2d_hardy} into the right--hand side of \eqref{this_2d_kin_energy} gives
\begin{equation} \label{use_for_N_fourth}
     \int_{K_{23}(\gamma)}  \abs{\nabla_{(\rho,\varphi)} \mathscr{F} }^2 d(q,\rho,\varphi) \geq \frac{1}{4} \int_{K_{23}(\gamma)} \frac{\abs{\mathscr{F}}^2}{\rho^2\left(1+\ln^2\left(\rho \right)\right)}   d(q,\rho,\varphi) \, .
\end{equation} 
Since $\rho>b/2$ on $K_{23}(\gamma)\cap \supp{\psi_{23}}$ the positivity of the sum of the first two terms on the right--hand side of \eqref{N_third_step} follows from \eqref{use_for_N_fourth} for $b>0$ sufficiently large. We arrive at
\begin{equation} \label{N_fourth_step}
    \begin{split}
        &\mathcal{N}[\psi_{23}] \geq -  4 \int_{K_{23}(\gamma)\setminus S(0,b)} \frac{\abs{\overline{\psi}_1}^{2}}{\abs{(q,\rho)}^{2+\tau}} d(q,\rho,\varphi) \, .
    \end{split}
\end{equation}
It remains to show that the integral on the right--hand side of \eqref{N_fourth_step} can be estimated by an integral over $\partial K_{23}(\gamma)$. By direct computation
\begin{equation} \label{N_step_got_integrals}
    \int_{K_{23}(\gamma)\setminus S(0,b)} \frac{\abs{\overline{\psi}_1(q,\rho,\varphi)}^{2}}{\abs{(q,\rho)}^{2+\tau}} d(q,\rho,\varphi) = 2\pi \int_{-\infty}^\infty \int_{\rho_0}^{\infty } \frac{\abs{\overline{\psi}_1(q,\rho,\varphi)}^{2}}{\abs{(q,\rho)}^{2+\tau}} \rho d\rho dq \, .
\end{equation}
Using the definition of $\overline{\psi}_1$ and Schwarz Inequality
\begin{equation} \label{N_step_almost_surface}
     2\pi \int_{-\infty}^\infty \int_{\rho_0}^{\infty } \frac{\abs{\overline{\psi}_1(q)}^{2}}{\abs{(q,\rho)}^{2+\tau}} \rho d\rho dq \leq  \int_{-\infty}^\infty \int_{\rho_0}^{\infty }\frac{\int_{0}^{2\pi} \abs{\psi_{23}(q, \kappa_0^{-1}\abs{q},\varphi)}^{2} d\varphi }{\abs{(q,\rho)}^{2+\tau}} \rho d\rho dq \, . 
\end{equation}
Combining \eqref{N_step_got_integrals} and \eqref{N_step_almost_surface} we arrive at
\begin{equation}
    \begin{split}
        \int_{K_{23}(\gamma)\setminus S(0,b)}& \frac{\abs{\overline{\psi}_1}^{2}}{\abs{(q,\rho)}^{2+\tau}} d(q,\rho,\varphi) \\
        &\leq \int_{-\infty}^\infty  \int_{0}^{2\pi} \abs{\psi_{23}(q, \kappa_0^{-1}\abs{q},\varphi)}^{2} d\varphi  \int_{\rho_0}^{\infty }\frac{1}{\abs{(q,\rho)}^{2+\tau}} \rho d\rho dq 
    \end{split}
\end{equation}
Using $| (q, \rho) | > \rho$ and $\rho_0 > \kappa_0^{-1} |q|$, yields by solving the integral over $\rho$ that there exists a constant $C > 0$, which depends on $\tau$ and $\kappa_0$ but is independent of $\psi$, such that
\begin{equation} \label{surface_int_for_N}
    \int_{K_{23}(\gamma) \setminus S(0,b)} \frac{|\overline{\psi}_1|^2}{|(q,\rho)|^{2+\tau}} \, d(q,\rho,\varphi) \leq C \int_{-\infty}^\infty \int_0^{2\pi} \frac{|\psi_{23}(q, \kappa_0^{-1}|q|,\varphi)|^2}{|q|^{1+\tau}} \, q \, dq \, d\varphi.
\end{equation}
The points $(q, \kappa_0^{-1} |q|, \varphi)$ with $q \in \mathbb{R}$ and $\varphi \in [0, 2\pi)$ correspond to points on the surface $\partial K_{23}(\gamma)$. Consequently, by substituting \eqref{surface_int_for_N} into \eqref{N_fourth_step}, we obtain for another constant $C > 0$ that depends on $\kappa_0$ and $\tau$ but is independent of $\psi$:
\begin{equation}\label{N_done}
    \begin{split}
     \mathcal{N}[\psi_{23}] \geq -C \int_{\partial K_{23}(\gamma)} \frac{|\psi|^2}{|x|^{1+\tau}} \, d\sigma,  
    \end{split}
\end{equation}
where we used that $\psi_{23}=\psi$ on $\partial K_{23}(\gamma)$. This completes the proof of Lemma~\ref{second_step_in_32} and as discussed in \ref{proof_of_L_23_lemma} this also completes the proof of Lemma~\ref{L_23_lemma}. \qed
\subsection{Proof of Lemma~\ref{trace_lemma_001}} \label{proof_of_lemma_trace_001}
We prove Lemma~\ref{trace_lemma_001} for $\alpha=(12)$. The proof for $\alpha=(13)$ is similar. We introduce spherical coordinates as follows:
\begin{equation} \label{sphericals_for_12}
    \begin{split}
        x_1 &= \abs{x} \sin(\varphi) \sin(\theta), \\
        x_2 &= \abs{x} \cos(\varphi) \sin(\theta), \\
        x_3 &= \abs{x} \cos(\theta),
    \end{split}
\end{equation}
where $\varphi \in [0, 2\pi)$, $\theta \in [0, \pi]$, and $\abs{x} \in [0, \infty)$. The boundary $\partial K_{12}(\gamma)$ is then described by $(\abs{x}, \pm \theta_0, \varphi)$, where $\theta_0 = \arcsin(\gamma)$ and $\theta_0 \in [0, \pi]$ (see Figure~\ref{fig:002}).

Let $P_0[(12)]$ be the projection onto radially symmetric functions in the $(x_1,x_2)$--plane introduced in Lemma~\ref{L_1j_lemma}, then
\begin{equation} \label{get_rid_of_projection}
   \int_{\partial K_{12}(\gamma)} \frac{\abs{(P_0[(12)] \psi)(x)}^2}{\abs{x}^{1+\delta}} \, d\sigma \leq \int_{\partial K_{12}(\gamma)} \frac{\abs{\psi(x)}^2}{\abs{x}^{1+\delta}} \, d\sigma 
\end{equation}
where the surface measure $d\sigma$ associated with $\partial K_{12}(\gamma)$ in the spherical coordinattes \eqref{sphericals_for_12} is given by 
\begin{equation} \label{surface_measure}
    d\sigma = \abs{\sin{\theta_0}} \abs{x} d\abs{x} d\varphi
\end{equation}
By applying the one--dimensional trace theorem in the $\theta$ variable (see, \cite[Theorem 1, p. 272]{book:Evans}) for every fixed $\abs{x} > b/2$, $\varphi\in[0,2\pi)$, there is a constant $C>0$ that depends on $\theta_0$ and $\theta_1\coloneqq \arcsin{(\gamma_1)}$  but is independent of $\abs{x}$ and $\varphi$ such that
\begin{equation} \label{trace_thrm}
    \frac{\abs{\psi(\abs{x}, \theta_0, \varphi)}^2}{\abs{x}^{1+\delta}} \leq C \int_{\theta_0}^{\theta_1} \frac{ \abs{\psi(\abs{x}, \theta, \varphi)}^2 + \abs{(\partial_\theta \psi)(\abs{x}, \theta, \varphi)}^2}{\abs{x}^{1+\delta}}   d\theta.
\end{equation}
Integrating \eqref{trace_thrm} with respect to $d\sigma$ in \eqref{surface_measure} yields
\begin{equation} \label{pos_theta_0}
    \begin{split}
        &\int_{0}^{2\pi} \int_{b/2}^\infty \abs{\psi(\abs{x}, \theta_0, \varphi)}^2 \abs{\sin{\theta_0}} \abs{x} d\abs{x} d\varphi \\
        &\leq C \abs{\sin{\theta_0}} \int_{0}^{2\pi} \int_{\theta_0}^{\theta_1} \int_{b/2}^\infty  \frac{ \abs{\psi(\abs{x}, \theta, \varphi)}^2 + \abs{(\partial_\theta \psi)(\abs{x}, \theta, \varphi)}^2}{\abs{x}^{1+\delta}}   \abs{x} d\abs{x} d\theta d\varphi  .
    \end{split}
\end{equation}
where we have also used that $\psi(\abs{x}, \theta, \varphi) = 0$ for $\abs{x} < b/2$. Similar we get
\begin{equation} \label{neg_theta_0}
    \begin{split}
        &\int_{0}^{2\pi} \int_{b/2}^\infty \abs{\psi(\abs{x}, -\theta_0, \varphi)}^2 \abs{\cos{\theta}} \abs{x} d\abs{x} d\varphi \\
        &\leq C \abs{\sin{\theta_0}} \int_{0}^{2\pi} \int_{-\theta_1}^{-\theta_0} \int_{b/2}^\infty  \frac{ \abs{\psi(\abs{x}, \theta, \varphi)}^2 + \abs{(\partial_\theta \psi)(\abs{x}, \theta, \varphi)}^2}{\abs{x}^{1+\delta}}   \abs{x} d\abs{x} d\theta d\varphi  .
    \end{split}
\end{equation}
For $K_{12}(\gamma,\gamma_1) = K_{12}(\gamma_1) \setminus K_{12}(\gamma)$ by combining \eqref{pos_theta_0} and \eqref{neg_theta_0} we find
\begin{equation} \label{integrals_over_anulus_before}
    \begin{split}
        &\int_{\partial K_{12}(\gamma)} \frac{\abs{\psi(x)}^2}{\abs{x}^{1+\delta}} \, d\sigma  \\
        &\leq C \abs{\sin{\theta_0}} \int_{ K_{12}(\gamma, \gamma_1)}  \frac{ \abs{\psi(\abs{x}, \theta, \varphi)}^2 + \abs{(\partial_\theta \psi)(\abs{x}, \theta, \varphi)}^2}{\abs{x}^{1+\delta}}   \abs{x} d\abs{x} d\theta d\varphi \, .
    \end{split}
\end{equation} 
Using $dx= \sin(\theta) \abs{x}^2 d\abs{x} d\theta d\varphi$ and $\sin(\theta_0)\leq \abs{\sin(\theta)}$ on $K_{12}(\gamma,\gamma_1)$ we arrive at
\begin{equation} \label{integrals_over_anulus}
    \begin{split}
        &\int_{\partial K_{12}(\gamma)} \frac{\abs{\psi(x)}^2}{\abs{x}^{1+\delta}} \, d\sigma  \\
        &\leq C \int_{ K_{12}(\gamma, \gamma_1)} \frac{ \abs{\psi(\abs{x}, \theta, \varphi)}^2 }{\abs{x}^{2+\delta}}  dx + C \int_{ K_{12}(\gamma, \gamma_1)} \frac{ \abs{(\partial_\theta \psi)(\abs{x}, \theta, \varphi)}^2 }{\abs{x}^{2+\delta}}  dx \, .
    \end{split}
\end{equation} 
Since $\psi(x) = 0$ for $\abs{x}\leq b/2$ we conclude from \eqref{integrals_over_anulus}
\begin{equation} \label{integrals_over_anulus_2}
    \begin{split}
        &\int_{\partial K_{12}(\gamma)} \frac{\abs{\psi(x)}^2}{\abs{x}^{1+\delta}} \, d\sigma  \\
        &\leq \frac{C}{(b/2)^{\delta}} \left( \int_{ K_{12}(\gamma, \gamma_1)} \frac{ \abs{\psi(\abs{x}, \theta, \varphi)}^2 }{\abs{x}^{2}}  dx + \int_{ K_{12}(\gamma, \gamma_1)} \frac{ \abs{(\partial_\theta \psi)(\abs{x}, \theta, \varphi)}^2 }{\abs{x}^{2}}  dx \right)
    \end{split}
\end{equation} 
Applying the radial Hardy Inequality to the first term on the right--hand side of \eqref{integrals_over_anulus_2} yields
\begin{equation} \label{integrals_over_anulus_3}
    \begin{split}
        &\int_{\partial K_{12}(\gamma)} \frac{\abs{\psi(x)}^2}{\abs{x}^{1+\delta}} \, d\sigma  \\
        &\leq \frac{C}{(b/2)^{\delta}} \left( \int_{ K_{12}(\gamma, \gamma_1)}\frac{1}{4} \abs{\partial_{\abs{x}}\psi(\abs{x}, \theta, \varphi)}^2   dx + \int_{ K_{12}(\gamma, \gamma_1)} \frac{ \abs{(\partial_\theta \psi)(\abs{x}, \theta, \varphi)}^2 }{\abs{x}^{2}}  dx \right)
    \end{split}
\end{equation} 
Expressing the gradient in spherical coordinates we conclude from \eqref{integrals_over_anulus_3} that
\begin{equation} \label{integrals_over_anulus_4}
    \begin{split}
        \int_{\partial K_{12}(\gamma)} \frac{\abs{\psi(x)}^2}{\abs{x}^{1+\delta}} \, d\sigma \leq \frac{C}{(b/2)^{\delta}} \norm{ \nabla \psi }_{L^2(K_{12}(\gamma,\gamma_1))}^2,
    \end{split}
\end{equation} 
which completes the proof of Lemma~\ref{trace_lemma_001}  \qed
\subsection{Proof of Lemma~\ref{trace_lemma_002}} 
In the set of coordinates $(x_1,q,\xi)$ introduced in 
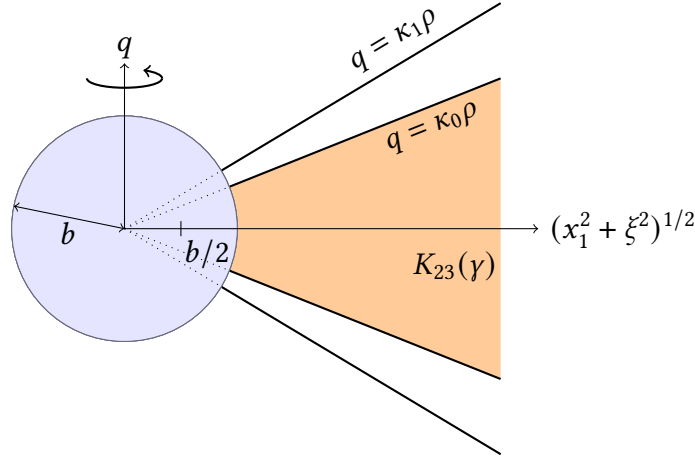
\begin{figure}[!ht]
    \begin{tikzpicture}[scale=1] 
    \fill[orange,opacity=0.4] (0,0) -- (5,-2) -- (5,2) -- cycle;
    
    \draw[black] (0,0) circle (1.5cm);
    \draw[fill, blue!20,opacity=0.5] (0,0) circle (1.5cm);
    \draw [thick] (0,0) -- (5,2);
    \draw [thick] (0,0) -- (5,-2);

    \draw [thick] (0,0) -- (5,3);
    \draw [thick] (0,0) -- (5,-3);

    \draw[fill, white] (0,0) circle (1.5cm);
    \draw[black] (0,0) circle (1.5cm);
    \draw[fill, blue!20,opacity=0.5] (0,0) circle (1.5cm);

    \draw[-] (0.75,0) -- (0.75,0.1);
    \draw[-] (0.75,0) -- (0.75,-0.1);
    
    \node at (4.4,-0.5) {$K_{23}(\gamma) $};

    \node at (1.1,0,0) [below] {$b/2$};
        
    \draw [ ->] (0,0) -- (0,2.2);
    \node at (0,2.4) {$q$};
    \draw[->] (0,0) -- (5.5,0) node[anchor=west] {$(x_1^2+\xi^2)^{1/2}$};

    \draw [dotted] (0,0) -- (5,2) node[pos=0.8,below, sloped] {$q=\kappa_0 \rho$};
    \draw [dotted] (0,0) -- (5,-2);

    \draw [dotted] (0,0) -- (5,3) node[pos=0.75,above, sloped] {$q=\kappa_1 \rho$};
    \draw [dotted] (0,0) -- (5,-3);
    
    \draw [<->] (0,0) -- (-1.47,0.3);
    \node at (-0.75,-0.05) {$b$};

\draw [thick] (-0.5,2,0) arc[start angle=180, end angle=360, x radius=0.5, y radius=0.125];
\draw [->,thick] (0.5,2,0) arc[start angle=0, end angle=60, x radius=0.5, y radius=0.125];
\end{tikzpicture}
\caption{Sketch of the sets $K_{23}(\gamma)$ and $K_{23}(\gamma_1)$ and their relation to the constants $\kappa_0$ and $\kappa_1$. This construction is used in Lemma~\ref{trace_lemma_002}.}\label{fig:004}
\end{figure}\ref{proof_of_L_23_lemma} the set $\partial K_{23}(\gamma)$ is determined by
\begin{equation}
    \abs{q} = \kappa_0(x_1^2+\xi^2)^{1/2}
\end{equation}
with $\kappa_0$ defined in \eqref{def_kappa_0}. We introduce spherical coordinates as follows:
\begin{equation} \label{sphericals_for_12_vartheta}
    \begin{split}
        x_1 &= \abs{x} \sin(\varphi) \cos(\vartheta), \\
        \xi &= \abs{x} \cos(\varphi) \cos(\vartheta), \\
        q &= \abs{x} \sin(\vartheta),
    \end{split}
\end{equation}
where we used that the coordinates $(x_1,q,\xi)$ fulfill $\abs{x} = \abs{(x_1,q,\xi)}$ (see Lemma~\ref{K_23_in_new_coos}). In this set of coordinates $\vartheta_0 \coloneqq \arctan(\kappa_0)$ corresponds to the opening angle of the conical set $K_{23}(\gamma)$. Let $\vartheta_1 \coloneqq \arctan{\kappa_1} $  with 
\begin{equation} \label{def_kappa_1}
    \kappa_1 \coloneqq \left(\frac{\gamma_1^2 \mu_{23}}{1-\gamma_1^2 \mu_{23}} \right)^{1/2},
\end{equation}
then $\vartheta_1>\vartheta_0$ corresponds to the opening angle of the conical set $K_{23}(\gamma_1)$ (see Figure \ref{fig:004}).
Analogous to Lemma~\ref{trace_lemma_001} by application of the one--dimensional trace theorem in the variable $\vartheta$ and the radial Hardy Inequality proves Lemma~\ref{trace_lemma_002}. \qed

\appendix
\section{Hardy--Type Inequalities}\label{app:A}
In the work at hand, several types of Hardy Inequalities are used. We collect them here from different sources and refer to their proofs.

The one--dimensional Hardy Inequality, for which a proof can be found in \mbox{\cite[Theorem 2.65]{book:FLW:2022}} is:
\begin{lemma} \label{one_dimensional_hardy}
    Let $u\in \dot H^1(\R)$ and assume $\displaystyle{\liminf_{r \to 0}}\abs{u(r)}=0$. Then
    \begin{equation} \label{one_d_hardy}
        \int_{\R} \frac{\abs{u(r)}^2}{\abs{r}^2} \, dr \leq 4 \int_{\R} \abs{u'(r)}^2 \, dr \, .
    \end{equation}
\end{lemma}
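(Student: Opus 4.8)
The plan is to reduce the full--line inequality to two half--line estimates and to prove each by an integration--by--parts identity combined with the Cauchy--Schwarz inequality, an argument that reproduces the sharp constant $4$ automatically. First I would observe that any $u\in\dot H^1(\R)$ has an absolutely continuous representative (in one dimension $u'\in L^2$ forces $u$ to be continuous), so that $\liminf_{r\to 0}\abs{u(r)}=\abs{u(0)}$ and the hypothesis is equivalent to $u(0)=0$. It therefore suffices to prove $\int_0^\infty \abs{u}^2/r^2\,dr\le 4\int_0^\infty \abs{u'}^2\,dr$ for $u$ with $u(0)=0$, since the analogous bound on $(-\infty,0)$ follows by the reflection $r\mapsto -r$ and adding the two halves gives the claim. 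We may also assume $\int_\R\abs{u'}^2\,dr<\infty$, as otherwise there is nothing to prove.

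The key step is the pointwise identity obtained by differentiating $\abs{u}^2/r$, namely
\[
\frac{\abs{u(r)}^2}{r^2}=\frac{2\re\big(\ol{u(r)}\,u'(r)\big)}{r}-\frac{d}{dr}\Big(\frac{\abs{u(r)}^2}{r}\Big),
\]
valid for $r>0$. Integrating over $(\veps,R)$ and applying Cauchy--Schwarz to the first term on the right yields
\[
\int_\veps^R\frac{\abs{u}^2}{r^2}\,dr\le 2\Big(\int_\veps^R\frac{\abs{u}^2}{r^2}\,dr\Big)^{1/2}\Big(\int_\veps^R\abs{u'}^2\,dr\Big)^{1/2}+\frac{\abs{u(\veps)}^2}{\veps}-\frac{\abs{u(R)}^2}{R}.
\]
The boundary term at $R$ is nonpositive and may simply be discarded. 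For the term at $\veps$, the condition $u(0)=0$ together with absolute continuity gives $\abs{u(\veps)}^2=\abs{\int_0^\veps u'}^2\le \veps\int_0^\veps\abs{u'}^2$, hence $\abs{u(\veps)}^2/\veps\le \int_0^\veps\abs{u'}^2\to 0$ as $\veps\to 0$.

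Writing $a_{\veps,R}\coloneqq\big(\int_\veps^R\abs{u}^2/r^2\,dr\big)^{1/2}$ and $B\coloneqq\big(\int_0^\infty\abs{u'}^2\,dr\big)^{1/2}$, the inequality above reads $a_{\veps,R}^2\le 2B\,a_{\veps,R}+c_\veps$ with $c_\veps\to 0$, a quadratic inequality in $a_{\veps,R}$ forcing $a_{\veps,R}\le B+\sqrt{B^2+c_\veps}$. Letting $\veps\to 0$ and then $R\to\infty$ by monotone convergence gives $\int_0^\infty\abs{u}^2/r^2\,dr\le 4B^2$, as desired. The main obstacle is purely technical: one must justify the boundary contributions and, crucially, cannot divide through by $a_{\veps,R}$ without first knowing the left--hand side is finite. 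The truncation to $(\veps,R)$ and the resulting quadratic inequality resolve exactly this point, since they bound the finite quantity $a_{\veps,R}$ uniformly before any limit is taken, avoiding a circular appeal to finiteness of $\int\abs{u}^2/r^2$.
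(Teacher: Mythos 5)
Your proof is correct and complete: the reduction to the half--line via continuity of the $\dot H^1$ representative, the integration--by--parts identity for $\abs{u}^2/r$, the Cauchy--Schwarz step, and the truncation to $(\veps,R)$ with the resulting quadratic inequality (which properly avoids dividing by a possibly infinite quantity) all check out, and the boundary terms are handled correctly. The paper does not prove this lemma itself but only cites \cite[Theorem 2.65]{book:FLW:2022}; your argument is the classical one that such references contain, so there is nothing to compare beyond noting that you have supplied in full what the paper delegates to the literature.
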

The following is the Hardy Inequality in dimension two for which a proof can be found in \cite{S:1994} or derived from \cite[Pop. 2.68]{book:FLW:2022}
\begin{lemma} \label{2d_2_hardy_inq} Let $u\in \dot H^1(\R^2)$ with 
\begin{equation}
    \int_{0}^{2\pi} u(c,\varphi) d\varphi=0 ,
\end{equation}
where $(\rho,\varphi)$ are polar coordinates and $c>0$. Then the following Hardy Inequality is true:
\begin{equation}
    \int_{\R^2} \frac{\abs{u}^2}{\abs{x}^2\left(1+\ln^2\left(\frac{\abs{x}}{c}\right) \right)} dx \leq 4 \int_{\R^2} \abs{\nabla u}^2 dx \, .
\end{equation}
\end{lemma}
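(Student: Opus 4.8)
The plan is to diagonalize both sides in the angular variable and reduce the claim to a one--parameter family of one--dimensional Hardy inequalities, treating the radial mode separately from all higher angular momenta. Writing $x=(\rho\cos\varphi,\rho\sin\varphi)$ and expanding into Fourier modes,
\[
u(\rho,\varphi)=\sum_{n\in\mathbb{Z}} u_n(\rho)\,e^{in\varphi},\qquad u_n(\rho)=\frac{1}{2\pi}\int_0^{2\pi} u(\rho,\varphi)\,e^{-in\varphi}\,d\varphi,
\]
Parseval's identity gives, for almost every $\rho$, the relations $\int_0^{2\pi}\abs{u}^2\,d\varphi=2\pi\sum_n\abs{u_n}^2$, $\int_0^{2\pi}\abs{\partial_\rho u}^2\,d\varphi=2\pi\sum_n\abs{u_n'}^2$ and $\int_0^{2\pi}\abs{\partial_\varphi u}^2\,d\varphi=2\pi\sum_n n^2\abs{u_n}^2$. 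Since $\abs{\nabla u}^2=\abs{\partial_\rho u}^2+\rho^{-2}\abs{\partial_\varphi u}^2$ and $dx=\rho\,d\rho\,d\varphi$, Tonelli's theorem (all integrands are nonnegative) lets me decouple both sides over $n$, so it suffices to establish for each $n\in\mathbb{Z}$ the one--dimensional estimate
\[
\int_0^\infty\frac{\abs{u_n(\rho)}^2}{\rho\bigl(1+\ln^2(\rho/c)\bigr)}\,d\rho\;\leq\;4\int_0^\infty\Bigl(\abs{u_n'(\rho)}^2+\frac{n^2}{\rho^2}\abs{u_n(\rho)}^2\Bigr)\rho\,d\rho.
\]
The hypothesis $\int_0^{2\pi}u(c,\varphi)\,d\varphi=0$ is precisely the single scalar condition $u_0(c)=0$ on the radial mode.

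For $n\neq0$ no logarithm is needed. Dropping the nonnegative term $\abs{u_n'}^2$ on the right and using $1+\ln^2(\rho/c)\geq1$ together with $n^2\geq1$,
\[
\int_0^\infty\frac{\abs{u_n}^2}{\rho\bigl(1+\ln^2(\rho/c)\bigr)}\,d\rho\leq\int_0^\infty\frac{\abs{u_n}^2}{\rho}\,d\rho\leq n^2\int_0^\infty\frac{\abs{u_n}^2}{\rho}\,d\rho=\int_0^\infty\frac{n^2}{\rho^2}\abs{u_n}^2\,\rho\,d\rho,
\]
which is bounded by the right--hand side above. Hence the higher angular momenta are controlled by their own centrifugal term, and the decisive case is the radial mode $n=0$.

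For $n=0$ I would pass to the logarithmic variable $t=\ln(\rho/c)$, so that $\rho=c\,e^t$, $d\rho/\rho=dt$, and $t$ ranges over all of $\R$. Setting $w(t)\coloneqq u_0(c\,e^t)$ one has $u_0'(\rho)=\rho^{-1}w'(t)$, and therefore
\[
\int_0^\infty\frac{\abs{u_0}^2}{\rho\bigl(1+\ln^2(\rho/c)\bigr)}\,d\rho=\int_{\R}\frac{\abs{w(t)}^2}{1+t^2}\,dt,\qquad\int_0^\infty\abs{u_0'}^2\,\rho\,d\rho=\int_{\R}\abs{w'(t)}^2\,dt.
\]
The condition $u_0(c)=0$ becomes $w(0)=0$, so $\liminf_{t\to0}\abs{w(t)}=0$. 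Using $1+t^2\geq t^2$ and applying the one--dimensional Hardy inequality of Lemma~\ref{one_dimensional_hardy} to $w$ gives
\[
\int_{\R}\frac{\abs{w(t)}^2}{1+t^2}\,dt\leq\int_{\R}\frac{\abs{w(t)}^2}{t^2}\,dt\leq4\int_{\R}\abs{w'(t)}^2\,dt,
\]
which is exactly the $n=0$ instance of the decoupled inequality. Summing the per--mode bounds over $n\in\mathbb{Z}$ and restoring the factor $2\pi$ recovers the assertion.

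The main obstacle is the radial mode: for $n=0$ there is no centrifugal term to absorb the singular weight, and the ordinary two--dimensional Hardy inequality has vanishing constant. The logarithmic weight $1+\ln^2(\rho/c)$ together with the normalization $u_0(c)=0$ is exactly what compensates for this, and the substitution $t=\ln(\rho/c)$ turns the scale--invariant radial problem on $(0,\infty)$ into the translation--invariant one--dimensional Hardy inequality on $\R$ anchored at the reference radius $c$. The only genuinely technical points are the justification of the Parseval decoupling and the standard approximation argument needed to apply Lemma~\ref{one_dimensional_hardy} to $w$; both are routine, and if $\int_{\R^2}\abs{\nabla u}^2\,dx=\infty$ the inequality holds trivially.
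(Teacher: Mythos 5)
Your proof is correct. The paper does not actually prove this lemma; it only cites \cite{S:1994} and \cite[Prop.~2.68]{book:FLW:2022}, so your self--contained argument --- angular Fourier decomposition, absorbing the modes $n\neq 0$ into their centrifugal terms, and reducing the radial mode via the logarithmic substitution $t=\ln(\rho/c)$ to the one--dimensional Hardy inequality of Lemma~\ref{one_dimensional_hardy} anchored at $w(0)=u_0(c)=0$ --- supplies exactly the standard proof behind those references. The only points worth making explicit are that $u_0'\in L^2((0,\infty),\rho\,d\rho)$ (by Parseval, once the right--hand side is assumed finite) makes $u_0$ locally absolutely continuous on $(0,\infty)$, so the pointwise condition $u_0(c)=0$ and the hypothesis $\liminf_{t\to 0}\abs{w(t)}=0$ of Lemma~\ref{one_dimensional_hardy} are legitimately available; these are the routine justifications you already flagged.
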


The following is the radial Hardy Inequality on conical sets which was proven in \cite[§4]{N:2006}.
\begin{lemma} \label{radial_hardy_inequality}
    Let $d\geq 2$ and $K \subset \mathbb{S}^{d-1}$ a smooth domain and
    \begin{equation}
        \Omega \coloneqq  \{ x = \abs{x} \omega: \omega \in K  \} \,. 
    \end{equation}
    Then for any $u \in \dot H^1(\Omega\setminus\{0\})$ the following Hardy Inequality holds:
    \begin{equation}
        \int_{\Omega} \abs{u}^2 \abs{x}^{-2} dx \leq \left( \frac{2}{d-2} \right)^2 \int_\Omega \abs{\nabla u}^2 dx \, .
    \end{equation}
\end{lemma}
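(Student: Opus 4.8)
The plan is to reduce the inequality to a one-dimensional weighted Hardy inequality in the radial variable, exploiting that the conical structure of $\Omega$ makes the reduction exact. Writing $x = r\omega$ with $r=\abs{x}\in(0,\infty)$ and $\omega\in\mathbb{S}^{d-1}$, the assumption that $\Omega$ is conical means $\Omega\setminus\{0\}$ fibers over $K$ with each fiber the full half-line $(0,\infty)$. In these coordinates $dx = r^{d-1}\,dr\,d\omega$, and the gradient splits orthogonally as $\abs{\nabla u}^2 = \abs{\partial_r u}^2 + r^{-2}\abs{\nabla_\omega u}^2 \geq \abs{\partial_r u}^2$. Hence, by Tonelli,
\[
  \int_\Omega \abs{\nabla u}^2 dx
    \geq \int_K\!\int_0^\infty \abs{\partial_r u(r\omega)}^2 r^{d-1}\,dr\,d\omega,
  \qquad
  \int_\Omega \frac{\abs{u}^2}{\abs{x}^2}\,dx
    = \int_K\!\int_0^\infty \abs{u(r\omega)}^2 r^{d-3}\,dr\,d\omega .
\]
It therefore suffices to prove, for each fixed $\omega$ and with $f(r)\coloneqq u(r\omega)$, the one-dimensional estimate $\int_0^\infty \abs{f}^2 r^{d-3}\,dr \leq \big(\tfrac{2}{d-2}\big)^2 \int_0^\infty \abs{f'}^2 r^{d-1}\,dr$, and then integrate over $K$.

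For $d\geq 3$ I would establish this one-dimensional inequality by the completing-the-square (ground-state substitution) trick, which is the same Agmon-type device used elsewhere in the paper. With the sharp constant $c\coloneqq (d-2)/2>0$, expanding the nonnegative quantity $\int_0^\infty \abs{f' + c\,f/r}^2 r^{d-1}\,dr\geq 0$ gives
\[
  \int_0^\infty \abs{f'}^2 r^{d-1}\,dr
    + c^2\!\int_0^\infty \abs{f}^2 r^{d-3}\,dr
    + 2c\!\int_0^\infty \re\!\big(\overline{f'}\,f\big)\, r^{d-2}\,dr \geq 0 .
\]
The cross term is handled by integration by parts via $2\re(\overline{f'}f)=(\abs{f}^2)'$, which yields $\int_0^\infty (\abs{f}^2)'\,r^{d-2}\,dr = -(d-2)\int_0^\infty \abs{f}^2 r^{d-3}\,dr$; substituting and using $c^2=(d-2)^2/4$ collapses the expression to $\int_0^\infty \abs{f'}^2 r^{d-1}\,dr \geq c^2\int_0^\infty \abs{f}^2 r^{d-3}\,dr$, which is exactly the claim with constant $(2/(d-2))^2$. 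The case $d=2$ is vacuous, since there the claimed constant is $+\infty$.

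The main obstacle is making the integration by parts rigorous, that is, justifying the vanishing of the boundary terms $\big[\abs{f}^2 r^{d-2}\big]_0^\infty=0$. I would treat this by density: first prove the inequality for $u\in C_c^\infty(\Omega\setminus\{0\})$, where the boundary contributions vanish identically because the support is bounded away from both $0$ and $\infty$, and then pass to general $u\in\dot H^1(\Omega\setminus\{0\})$ by approximation in the homogeneous norm, using lower semicontinuity (Fatou) for the left-hand side. The only genuinely technical point is the density of $C_c^\infty(\Omega\setminus\{0\})$ in $\dot H^1(\Omega\setminus\{0\})$; since $d\geq 3$ the origin has zero $H^1$-capacity, so excising a small neighborhood of $\{0\}$ via a radial cutoff costs nothing in the limit, and the same cutoff controls the behavior at infinity. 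Everything else is the elementary algebraic identity above together with the fiberwise Tonelli reduction. As this is a known inequality, I would alternatively just cite \cite{N:2006} for the sharp constant and reproduce only the reduction in the first paragraph.
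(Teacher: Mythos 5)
The paper does not prove this lemma at all; it is quoted from \cite[\S 4]{N:2006}, so any written argument is ``different'' by default. Your core computation is the standard and correct one: the fiberwise reduction over $K$ (dropping the angular part of $\abs{\nabla u}^2$, which requires no boundary condition on the lateral boundary $\partial\Omega$), followed by the ground--state substitution $\int_0^\infty \abs{f'+cf/r}^2 r^{d-1}\,dr\ge 0$ with $c=(d-2)/2$. The arithmetic checks out: the cross term contributes $-c(d-2)\int\abs{f}^2 r^{d-3}\,dr=-2c^2\int\abs{f}^2r^{d-3}\,dr$, so the expansion collapses to $\int\abs{f'}^2r^{d-1}\,dr\ge c^2\int\abs{f}^2r^{d-3}\,dr$, which is the claim. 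For $u\in C_c^\infty(\Omega\setminus\{0\})$ this is a complete proof, and since every application of the lemma in the paper is to functions supported in $\{\abs{x}>b\}$ with compact support, your argument covers everything the paper actually needs.

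There is, however, a genuine gap in your final extension step. You propose to pass from $C_c^\infty(\Omega\setminus\{0\})$ to all of $\dot H^1(\Omega\setminus\{0\})$ by density in the homogeneous norm. That cannot work: with the paper's definition $\dot H^1=\{u\in L^2_{\loc}:\nabla u\in L^2\}$, the space contains the constant function $u\equiv 1$, for which the right--hand side is $0$ while $\int_\Omega\abs{x}^{-2}\,dx=\abs{K}\int_0^\infty r^{d-3}\,dr=\infty$, so the inequality is simply false for such $u$; correspondingly, constants are \emph{not} in the closure of $C_c^\infty(\Omega\setminus\{0\})$ under the seminorm $\norm{\nabla\cdot}_{L^2}$ (a cutoff at scale $R$ costs $\norm{\nabla\chi_R}_{L^2}^2\sim R^{d-2}\to\infty$ for $d\ge 3$). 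So the obstruction is at infinity, not at the origin, and it is not removable. This is really a defect of the lemma's statement (it should restrict to the completion of $C_c^\infty(\Omega\setminus\{0\})$, or at least to functions with appropriate decay), but as written your density paragraph asserts something false. I would either state and prove the inequality for compactly supported $u$ only, which suffices for the paper, or simply cite \cite{N:2006} as the paper does.
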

\section{Remarks on the Geometry of the Sets $K_{\alpha}(\gamma)$}\label{app:B}
\begin{lemma} \label{cones_are_disjoint} Let $\alpha,\beta \in \{(12),(13),(23)\}$ then
    \begin{equation} \label{app:eq:b1}
       K_{\alpha}(\gamma) \cap K_{\beta}(\gamma) = \{0\}
    \end{equation}
    for $\alpha \neq \beta$ if $\gamma>0$ is small enough.
\end{lemma}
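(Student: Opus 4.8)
The plan is to exploit the fact that each set $K_\alpha(\gamma)$ is a cone, i.e.\ invariant under the scaling $x \mapsto tx$ for $t>0$. Consequently $K_\alpha(\gamma)\cap K_\beta(\gamma)=\{0\}$ holds if and only if the two sets have no common point on the unit sphere $\{x\in\R^3:\abs{x}=1\}$. Restricting to $\abs{x}=1$, the defining conditions simplify to $x_1^2+x_j^2\leq \gamma^2$ for $K_{1j}(\gamma)$ and $(x_2/\sqrt{m_2}-x_3/\sqrt{m_3})^2\leq \gamma^2$ for $K_{23}(\gamma)$. I would then treat the three pairs separately and finally take $\gamma$ below the finitely many thresholds that appear.

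For the pair $K_{12}(\gamma)\cap K_{13}(\gamma)$, I would simply add the two defining inequalities $x_1^2+x_2^2\leq \gamma^2$ and $x_1^2+x_3^2\leq \gamma^2$ to obtain $x_1^2+\left(x_1^2+x_2^2+x_3^2\right)\leq 2\gamma^2$. Using $x_1^2+x_2^2+x_3^2=1$ this reads $x_1^2+1\leq 2\gamma^2$, which is impossible once $\gamma^2<1/2$. Hence this intersection is empty on the sphere.

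For the pairs involving $K_{23}(\gamma)$, say $K_{12}(\gamma)\cap K_{23}(\gamma)$, I would note that $x_1^2+x_2^2\leq \gamma^2$ on the unit sphere forces $x_3^2=1-x_1^2-x_2^2\geq 1-\gamma^2$, so $\abs{x_3}\geq \sqrt{1-\gamma^2}$, while $\abs{x_2}\leq \gamma$. The reverse triangle inequality then gives
\[
    \left| \frac{x_2}{\sqrt{m_2}} - \frac{x_3}{\sqrt{m_3}} \right|
      \geq \frac{\abs{x_3}}{\sqrt{m_3}} - \frac{\abs{x_2}}{\sqrt{m_2}}
      \geq \frac{\sqrt{1-\gamma^2}}{\sqrt{m_3}} - \frac{\gamma}{\sqrt{m_2}},
\]
and the right--hand side tends to $1/\sqrt{m_3}>0$ as $\gamma\to 0$, so it exceeds $\gamma$ for all sufficiently small $\gamma$, contradicting membership in $K_{23}(\gamma)$. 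The pair $K_{13}(\gamma)\cap K_{23}(\gamma)$ is handled identically with the roles of $x_2$ and $x_3$ interchanged, which produces the analogous lower bound $\sqrt{1-\gamma^2}/\sqrt{m_2}-\gamma/\sqrt{m_3}$.

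Taking $\gamma$ smaller than each of the thresholds obtained above, all three pairwise intersections are empty on the unit sphere, and by conicity the claim \eqref{app:eq:b1} follows. The argument is entirely elementary; the only point demanding care is the bookkeeping of the mass factors $\sqrt{m_2},\sqrt{m_3}$ in the $K_{23}$ condition, but since the masses are fixed positive constants this poses no genuine obstacle.
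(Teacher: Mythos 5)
Your proof is correct and follows essentially the same elementary strategy as the paper: for $K_{12}(\gamma)\cap K_{13}(\gamma)$ the computation is identical (add the two defining inequalities and contradict $\gamma^2<1/2$), and for the mixed pairs both arguments show that the two ``closeness'' conditions are incompatible once $\gamma$ is small. The only difference is cosmetic --- the paper passes back to the unscaled coordinates $y_i=x_i/\sqrt{m_i}$ and applies the triangle inequality to the three particle positions, whereas you normalize to the unit sphere and use the reverse triangle inequality on the defining quantity of $K_{23}(\gamma)$ directly; both routes produce the same mass-dependent thresholds and the same conclusion.
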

\begin{proof} We begin with the sets $K_{1j}(\gamma)$ for $j\in \{2,3\}$. Assume that $x\in K_{12}(\gamma) \cap K_{13}(\gamma)$ and $x\neq 0$ then
\begin{equation}
     \abs{x}^2 \leq (x_1^2 + x_2^2) + (x_1^2+x_3^3) < 2\gamma^2 \abs{x}^2
\end{equation}
which fails for $\gamma$ small enough, which we assume henceforth. Consequently, we have 
\begin{equation}
    K_{12}(\gamma) \cap K_{13}(\gamma)= \{0\}
\end{equation}
Next we assume that $x\in K_{12}(\gamma) \cap K_{23}(\gamma)$ and $x\neq 0$, then
\begin{equation}
    \begin{split}
        x_1^2 + x_2^2 &< \gamma^2 \abs{x}^2 \, , \\
        \abs{\frac{x_2}{\sqrt{m_2}} - \frac{x_3}{\sqrt{m_3}}} &< \gamma \abs{x}  \, .
    \end{split}
\end{equation}
Let $m \coloneqq \min\{m_1,m_2,m_3\}$ and $M \coloneqq \max \{m_1,m_2,m_3\}$ and recall the choice $ \sqrt{m_i}y_i =x_i$ then 
\begin{equation} \label{cones_in_y}
    \begin{split}
        y_1^2 + y_2^2 &< \frac{\gamma^2}{m} M \abs{y}^2 \, \\
        \abs{y_2 - y_3} &< \gamma \sqrt{M} \abs{y}  \, .
    \end{split}
\end{equation}
Note that the points $(y_2,0), (y_3,0)$ and $(0,y_1)$ are the corners of a triangle in $\R^2$ and thus by the triangle inequality
\begin{equation} \label{triangle_inequality}
    \sqrt{y_1^2+ y_3^2} \leq \sqrt{y_1^2 + y_2^2} +\abs{y_2 - y_3} < \gamma \sqrt{M}\left( 1 + m^{-1/2} \right) \abs{y} \, .
\end{equation}
Combining \eqref{cones_in_y} and \eqref{triangle_inequality} there exists a constant $c(M,m)>0$ independent of $\gamma$ and $y\neq 0$ such that
\begin{equation}
    \abs{y}^2 \leq (y_1^2 +y_2^2)+ (y_1^2+y_3^2) \leq c(M,m) \abs{\gamma}^2 \abs{y}^2
\end{equation}
which fails for $\gamma$ small enough  and thus $K_{12}(\gamma) \cap K_{23}(\gamma) = \{0\}$. Repeating the same arguments for the set $K_{13}(\gamma)\cap K_{23}(\gamma)$ concludes the proof.
\end{proof}

\begin{lemma} \label{potentials_small_outside_cones}
    Let $\alpha \in \{(12),(13),(23)\}$ and let $\gamma>0$ satisfy the condition \eqref{app:eq:b1}. Assume that the potentials $V_\alpha$ satisfy \eqref{short-range}.
    Then there exists $C>0$ such that for any $x\notin K_\alpha(\gamma)$ with $\abs{x}>b$ holds 
    \begin{equation} \label{short_range_with_x}
        \abs{V_\alpha(x)} \leq C \abs{x}^{-\nu_\alpha}, 
    \end{equation}
    where $\nu_{23} = 2+ \delta$ and $ \nu_{12}=\nu_{13} =3+\delta$ .
\end{lemma}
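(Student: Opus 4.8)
The plan is to reduce the claimed pointwise bound to the single geometric fact that, outside the cone $K_\alpha(\gamma)$, the interparticle distance $\abs{\mathbf{r}_\alpha}$ is comparable to $\abs{x}$. Since $V_\alpha$ depends on $x$ only through $\abs{\mathbf{r}_\alpha}$ (the notation $V_\alpha(x)$ in \eqref{short_range_with_x} being shorthand for $V_\alpha(\abs{\mathbf{r}_\alpha})$), and the short--range hypothesis \eqref{short-range} gives $\abs{V_\alpha(\abs{\mathbf{r}_\alpha})}\leq C(1+\abs{\mathbf{r}_\alpha})^{-\nu_\alpha}$ whenever $\abs{\mathbf{r}_\alpha}>A$, it suffices to produce a constant $c=c(\gamma,\zeta,m_1,m_2,m_3)>0$ with $\abs{\mathbf{r}_\alpha}\geq c\abs{x}$ for all $x\notin K_\alpha(\gamma)$. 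Once this is in hand, choosing $b>0$ so large that $cb>A$ forces $\abs{\mathbf{r}_\alpha}>A$ on $\{\abs{x}>b\}$, whence $(1+\abs{\mathbf{r}_\alpha})^{-\nu_\alpha}\leq \abs{\mathbf{r}_\alpha}^{-\nu_\alpha}\leq c^{-\nu_\alpha}\abs{x}^{-\nu_\alpha}$, which is \eqref{short_range_with_x} with a relabeled constant.

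For $\alpha=(1j)$ with $j\in\{2,3\}$, I would read off from \eqref{distanc_of_particles} that $\abs{\mathbf{r}_{1j}}^2 = Q_j(x_1,x_j)+a_j^2$, where $Q_j(x_1,x_j)=x_1^2/m_1+x_j^2/m_j-2\cos(\zeta)(m_1m_j)^{-1/2}x_1x_j$. The associated symmetric matrix has positive trace $1/m_1+1/m_j$ and determinant $(1-\cos^2(\zeta))/(m_1m_j)=\sin^2(\zeta)/(m_1m_j)$, which is strictly positive precisely because $\zeta\in(0,\pi/2]$ forces $\cos^2(\zeta)<1$. Hence $Q_j$ is positive definite, so there is $c_1>0$ with $Q_j(x_1,x_j)\geq c_1(x_1^2+x_j^2)$. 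Discarding the nonnegative term $a_j^2$ and invoking the defining inequality $(x_1^2+x_j^2)^{1/2}>\gamma\abs{x}$ valid for $x\notin K_{1j}(\gamma)$, I obtain $\abs{\mathbf{r}_{1j}}^2\geq c_1(x_1^2+x_j^2)>c_1\gamma^2\abs{x}^2$, i.e. the desired lower bound with $c=\sqrt{c_1}\,\gamma$.

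For $\alpha=(23)$ the argument is even more direct: \eqref{distanc_of_particles} gives $\abs{\mathbf{r}_{23}}^2=(x_2/\sqrt{m_2}-x_3/\sqrt{m_3})^2+(a_2-a_3)^2$, and dropping the nonnegative term $(a_2-a_3)^2$ together with the defining inequality for $x\notin K_{23}(\gamma)$ yields $\abs{\mathbf{r}_{23}}^2\geq (x_2/\sqrt{m_2}-x_3/\sqrt{m_3})^2>\gamma^2\abs{x}^2$, so $c=\gamma$ works. Combining the two cases with the reduction of the first paragraph completes the proof.

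The only point demanding genuine care is the strict positivity of $\det Q_j$, i.e. the coercivity of the $(1j)$ quadratic form; this is exactly where the geometric assumption $\zeta\in(0,\pi/2]$ (equivalently, that the line $L_1$ is truly transversal to the plane $P$) enters, ensuring that $\abs{\mathbf{r}_{1j}}$ does not degenerate along any direction of the $(x_1,x_j)$--plane. Everything else is elementary and the constants are manifestly independent of $\psi$.
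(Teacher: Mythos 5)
Your proposal is correct and follows essentially the same route as the paper: drop the nonnegative offset terms $a_j^2$, respectively $(a_2-a_3)^2$, show that $\abs{\mathbf{r}_\alpha}\geq c\,\abs{x}$ outside $K_\alpha(\gamma)$ using the defining inequality of the cone, and then invoke \eqref{short-range} for $b$ large enough that $\abs{\mathbf{r}_\alpha}>A$. The only (cosmetic) difference is that you establish coercivity of the $(1j)$ quadratic form via its trace and determinant, whereas the paper obtains the same lower bound $Q_j\geq(1-\cos\zeta)\min\{m_1^{-1},m_j^{-1}\}(x_1^2+x_j^2)$ directly from $2ab\leq a^2+b^2$; both hinge on $\cos^2(\zeta)<1$ for $\zeta\in(0,\pi/2]$.
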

\begin{remark} Lemma~\ref{potentials_small_outside_cones} implies that for small $\gamma > 0$ condition \eqref{short_range_with_x} holds on $\Omega(\gamma)$ for any of the potentials if $\abs{x}>b$ and $b>0$ large enough. Consequently the condition \eqref{short_range_with_x} applies on any of the sets $K_{\alpha}(\gamma,\tilde \gamma)$ defined in Section \ref{proof_of_main_thrm}. 
\end{remark}
\begin{proof}
    We begin with $\alpha = (23)$. The potential $V_{23}$ fulfills \eqref{short-range} and thus there exist constants $C,\delta>0$ and $A>0$ such that
    \begin{equation} \label{v_23_short_range}
    \left| V_{23}(\abs{\mathbf{r}_{23}}) \right| \leq C (1 + \abs{\mathbf{r}_{23}})^{-2-\delta},
\end{equation}
whenever \( \abs{\mathbf{r}_{23}} > A \). For $x \notin K_{23}(\gamma)$ we have
\begin{equation} \label{dist_23_est}
    \abs{\mathbf{r}_{23}} \geq \abs{\frac{x_2}{\sqrt{m_2}} - \frac{x_3}{\sqrt{m_3}} } \geq \gamma \abs{x} \, .
\end{equation}
Consequently $\abs{\mathbf{r}_{23}}>A$ for $\abs{x}>b$ for $b>0$ large enough. Combining \eqref{dist_23_est} and \eqref{v_23_short_range} the statement for $\alpha=(23)$ follows directly.

Next, we prove the case $\alpha=(12)$. The proof for $\alpha = (13)$ is similar. The potential $V_{12}$ fulfills \eqref{short-range} and thus there exist constants $C,\delta>0$ and $A>0$ such that
    \begin{equation} \label{v_12_short_range}
    \left| V_{12}(\abs{\mathbf{r}_{12}}) \right| \leq C (1 + \abs{\mathbf{r}_{12}})^{-3-\delta} 
\end{equation}
whenever \( \abs{\mathbf{r}_{12}} > A \). Assume $x\notin K_{12}(\gamma)$ then
\begin{equation} \label{lower_bnd_r_12}
    \begin{split}
        \abs{\mathbf{r}_{12}} &\geq \left( \frac{x_1^2}{m_1} + \frac{x_2^2}{m_2} - \cos(\zeta) 2\frac{x_1 x_2}{\sqrt{m_1 m_2}} \right)^{1/2} \\
        &\geq \left( \frac{x_1^2}{m_1} + \frac{x_2^2}{m_2}\right)^{1/2}(1-\cos(\zeta))^{1/2} \\
        &\geq \left(\min\left\{\frac{1}{m_1},\frac{1}{m_2} \right\}(1-\cos(\zeta))\right)^{1/2} \abs{(x_1,x_2)} \\
        &\geq \left(\min\left\{\frac{1}{m_1},\frac{1}{m_2} \right\}(1-\cos(\zeta)) \gamma \right)^{1/2} \abs{x}\, .
    \end{split}
\end{equation}
where we have used that $2ab\leq a^2+b^2$ from the first to the second line in \eqref{lower_bnd_r_12}. Since $\zeta \in (0,\pi/2]$, combining \eqref{lower_bnd_r_12} and \eqref{v_12_short_range} completes the proof.
\end{proof}
\begin{lemma} \label{K_23_in_new_coos}
    Let $(x_1,x_2,x_3) \in \R^3$ and let $(q,\xi)$ be defined as,
    \begin{equation}
    q = \frac{1}{\sqrt{M_{23}}}(\sqrt{m_3}x_2 - \sqrt{m_2}x_3)\,,
\end{equation}
\begin{equation}
   \xi = \frac{1}{\sqrt{M_{23}}}(\sqrt{m_2}x_2 + \sqrt{m_3}x_3),
\end{equation}
where $M_{23} \coloneqq m_2+m_3$.
Then
    \begin{equation} \label{invariant_measure}
        \abs{x} = \abs{(x_1,q,\xi)}\, .
    \end{equation}
    The surface measure on the set $\partial K_{23}(\gamma)$ in this set of coordinates is given by
    \begin{equation} \label{surf_element_23_apppendix}
        d\sigma = \kappa_0  d(x_1,\xi)
    \end{equation}
    with
    \begin{equation} \label{kappa_0_mu_23}
         \kappa_0 =\left(\frac{\gamma^2 \mu_{23}}{1-\gamma^2 \mu_{23}} \right)^{1/2} , \quad  \mu_{23} = \frac{m_2m_3}{m_2+m_3}\, .
    \end{equation}
\end{lemma}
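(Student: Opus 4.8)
The plan is to reduce both assertions to the single fact that the linear map $(x_2,x_3)\mapsto(q,\xi)$ is an orthogonal transformation of $\R^2$, and then to compute the surface measure on the resulting cone.

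First I would establish the norm identity \eqref{invariant_measure}. Writing the defining relations as
\[
  \binom{q}{\xi}
    = \frac{1}{\sqrt{M_{23}}}
      \begin{pmatrix} \sqrt{m_3} & -\sqrt{m_2}\\ \sqrt{m_2} & \sqrt{m_3}\end{pmatrix}
      \binom{x_2}{x_3},
\]
one checks directly that the two columns are orthonormal: each has squared length $(m_3+m_2)/M_{23}=1$, while their inner product is $(-\sqrt{m_2m_3}+\sqrt{m_2m_3})/M_{23}=0$. Hence the map is orthogonal and length--preserving, so $q^2+\xi^2=x_2^2+x_3^2$; adding $x_1^2$ to both sides gives $\abs{x}^2=x_1^2+q^2+\xi^2=\abs{(x_1,q,\xi)}^2$, which is \eqref{invariant_measure}.

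Next I would identify $\partial K_{23}(\gamma)$ in the coordinates $(x_1,q,\xi)$. Inverting the orthogonal map and substituting yields the clean relation $\frac{x_2}{\sqrt{m_2}}-\frac{x_3}{\sqrt{m_3}}=q/\sqrt{\mu_{23}}$, with $\mu_{23}=m_2m_3/M_{23}$. Feeding this, together with \eqref{invariant_measure}, into the defining inequality $\abs{\frac{x_2}{\sqrt{m_2}}-\frac{x_3}{\sqrt{m_3}}}\le\gamma\abs{x}$ and solving the resulting quadratic inequality for $q$ reproduces the description $\abs{q}\le\kappa_0(x_1^2+\xi^2)^{1/2}$ of \eqref{K_23_in_new_coordinates}, with $\kappa_0$ as in \eqref{def_kappa_0}. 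Thus $\partial K_{23}(\gamma)$ is the circular cone $\abs{q}=\kappa_0\rho$, $\rho\coloneqq(x_1^2+\xi^2)^{1/2}$, made up of the two sheets $q=\pm\kappa_0\rho$.

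For the surface--measure identity \eqref{surf_element_23_apppendix} I would parametrize each sheet as a graph $q=\pm\kappa_0\rho$ over the punctured plane $\{(x_1,\xi)\neq0\}$ and compute the induced area element via the area formula. The decisive structural point is that $\kappa_0$ is constant along the cone, so the Gram determinant of this parametrization is independent of $(x_1,\xi)$; consequently $d\sigma$ is a fixed multiple of the base element $d(x_1,\xi)$, and evaluating that multiple gives the constant $\kappa_0$ of \eqref{kappa_0_mu_23}, which is \eqref{surf_element_23_apppendix}. The one genuinely delicate step is this final evaluation of the constant prefactor: one must keep the graph Jacobian, the radial/angular bookkeeping suggested by Figure~\ref{fig:003}, and the passage from the two boundary sheets appearing in \eqref{lemma_bbv_applied} to a single integral over $\partial K_{23}(\gamma)$ mutually consistent, so that the prefactor comes out exactly as stated.
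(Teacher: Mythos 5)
Your first two steps are correct and coincide with the paper's own argument. The identity $q^2+\xi^2=x_2^2+x_3^2$ (the paper expands the squares directly; your observation that the columns of the transformation matrix are orthonormal is the same computation), and the identification of the boundary via $x_2/\sqrt{m_2}-x_3/\sqrt{m_3}=q/\sqrt{\mu_{23}}$, leading to $\abs{q}=\kappa_0\left(x_1^2+\xi^2\right)^{1/2}$ with $\kappa_0$ as in \eqref{kappa_0_mu_23}, are exactly the paper's steps.

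The problem is your final step. The method you propose --- parametrize each sheet as the graph $q=\pm\kappa_0\rho$ over the $(x_1,\xi)$--plane and apply the area formula --- is the natural one, and your structural point (constant Gram determinant, hence $d\sigma$ is a fixed multiple of $d(x_1,\xi)$) is right. But evaluating that multiple does \emph{not} give $\kappa_0$: for a graph $q=f(x_1,\xi)$ the induced Euclidean area element is $\left(1+\abs{\nabla f}^2\right)^{1/2}d(x_1,\xi)$, and here $\abs{\nabla f}=\kappa_0$ identically, so each sheet carries $d\sigma=\left(1+\kappa_0^2\right)^{1/2}d(x_1,\xi)$. Equivalently, the cone makes the angle $\vartheta_0=\arctan(\kappa_0)$ with the $(x_1,\xi)$--plane, and projection onto that plane contracts areas by $\cos\vartheta_0$, so $d\sigma=\left(\cos\vartheta_0\right)^{-1}d(x_1,\xi)=\left(1+\kappa_0^2\right)^{1/2}d(x_1,\xi)$. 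Since $\left(1+\kappa_0^2\right)^{1/2}>\kappa_0$ for every $\kappa_0>0$, no bookkeeping of the two sheets (which can only contribute a factor $2$) reconciles your method with the claimed prefactor; carried out honestly, your computation yields \eqref{surf_element_23_apppendix} with $\left(1+\kappa_0^2\right)^{1/2}$ in place of $\kappa_0$, i.e.\ it contradicts the constant as stated. For context: the paper's own proof of this point is only the phrase that it ``follows from direct computations,'' and the same direct computation produces $\left(1+\kappa_0^2\right)^{1/2}$. The discrepancy is harmless downstream, because every later use of the lemma (in Lemmas \ref{first_lemma_in_32}, \ref{second_step_in_32} and \ref{trace_lemma_002}) needs only that $d\sigma$ is a constant multiple of $d(x_1,\xi)$ with constant depending on $\gamma$ and the masses, and that constant is absorbed into $C$. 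But as a proof of the identity as literally stated, your last step fails; you should either prove the corrected identity with the factor $\left(1+\kappa_0^2\right)^{1/2}$ or explicitly flag that the stated prefactor $\kappa_0$ cannot be produced by the Euclidean surface measure.
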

\begin{proof}
   Direct computations show
    \begin{equation}
        \begin{split}
            q^2+\xi^2&= \frac{1}{M_{23}}(m_3 x_2^2+m_2x_3^2-2\sqrt{m_2m_3}x_2x_3)+\frac{1}{M_{23}}(m_2 x_2^2+m_3x_3^2+2\sqrt{m_2m_3}x_2x_3) \\
            &= x_2^2+x_3^2\, .
        \end{split}
    \end{equation}
    Consequently \eqref{invariant_measure} holds.
    
    The set $\partial K_{23}(\gamma)$ is determined by \begin{equation}\label{defining_k_23}
        \abs{\frac{x_2}{\sqrt{m_2}} - \frac{x_3}{\sqrt{m_3}}} = \gamma \abs{x} \, .
    \end{equation}
    Expressing $x_2$ and $x_3$ in $(q,\xi)$--varibales gives
    \begin{equation} \label{x_2x_3_new_coos}
        x_2 = \left( \frac{m_2}{M_{23}} \right)^{1/2} \left( \xi + \left( \frac{m_3}{m_2} \right)^{1/2} q\right), \quad x_3 = \left( \frac{m_3}{M_{23}} \right)^{1/2} \left( \xi - \left( \frac{m_2}{m_3} \right)^{1/2} q\right) \, .
    \end{equation}
    Inserting \eqref{x_2x_3_new_coos} into \eqref{defining_k_23} gives
    \begin{equation} \label{parallel_to_q}
        \abs{\frac{x_2}{\sqrt{m_2}} - \frac{x_3}{\sqrt{m_3}} }  = (\mu_{23})^{-1/2}\abs{q},
    \end{equation}
    for $\mu_{23}$ in \eqref{kappa_0_mu_23}.

    Combining \eqref{defining_k_23} and \eqref{parallel_to_q} shows that the surface $\partial K_{23}(\gamma)$ is determined by the relation
    \begin{equation} \label{points_on_del_K_23}
        \abs{q} = (\mu_{23})^{1/2} \gamma \abs{(x_1,q,\xi)} \, .
    \end{equation}
     Solving  \eqref{points_on_del_K_23} for $\abs{q}$ yields
    \begin{equation} \label{surface_in_q_xi}
        \abs{q} = \kappa_0\abs{(x_1,\xi)} \quad \text{ where } \quad \kappa_0 = \left(\frac{\gamma^2 \mu_{23}}{1-\gamma^2 \mu_{23}} \right)^{1/2}\, .
    \end{equation}
    Using $\eqref{surface_in_q_xi}$ as parametrization of the surface $\partial K_{23}(\gamma)$ the relation for the surface element $d\sigma$ of $\partial K_{23}(\gamma)$ in \eqref{surf_element_23_apppendix} follows from direct computations.
\end{proof}
\section{Auxiliary Estimate used in Lemma~\ref{first_lemma}
}\label{app:C}
\begin{lemma} \label{Lemma_G_derivative}
    Let $\xi \in C^\infty([0,\infty))$ with $\xi(t) = 0$ for $t\leq 1$ and $\xi(t)=1$ for $t \geq 2$, such that $\xi(t)\leq 1$ and $\xi'(t) \leq 2$ for any $t\in [0,\infty)$. For any $\omega,\kappa,\beta >0$ we define
\begin{equation} 
     G(\abs{x})  \coloneqq \frac{\abs{x}^\kappa}{1+\omega \abs{x}^\kappa} \xi(\abs{x}/\beta)\, .
\end{equation}
Then 
    \begin{equation} 
        \abs{\nabla G(x)} \leq  \kappa \abs{x}^{-1} G(x), \,  \text{for } \abs{x}>2\beta 
    \end{equation}
     and for 
    \begin{equation}
        \abs{\nabla G(x)} \leq  \beta^{\kappa -1 } \left( 2^{\kappa+1} +  \kappa 2^{\kappa-1} \right), \, \text{ for } \abs{x} \in [\beta, 2\beta] .
    \end{equation}
\end{lemma}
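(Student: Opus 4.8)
The plan is to exploit that $G$ depends only on $r=\abs{x}$, so that its full gradient is governed by a single radial derivative. Writing $G(x)=g(r)$ with
\[
g(r) = h(r)\,\xi(r/\beta),\qquad h(r)\coloneqq \frac{r^\kappa}{1+\omega r^\kappa},
\]
the chain rule for radial functions gives $\nabla G(x)=g'(r)\,x/\abs{x}$, hence $\abs{\nabla G(x)}=\abs{g'(r)}$. Thus the entire lemma reduces to estimating the scalar quantity $g'(r)$ separately on $\{r>2\beta\}$ and on $[\beta,2\beta]$.

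First I would compute $h'$. A direct application of the quotient rule, in which the cross terms proportional to $\omega r^\kappa$ cancel, yields
\[
h'(r)=\frac{\kappa r^{\kappa-1}}{(1+\omega r^\kappa)^2}.
\]
Since $\omega,r>0$ we always have $1+\omega r^\kappa\ge 1$, and this single elementary bound drives everything: it gives both $h'(r)\le \kappa r^{\kappa-1}$ and, after factoring, $h'(r)=\kappa r^{-1}h(r)\,(1+\omega r^\kappa)^{-1}\le \kappa r^{-1}h(r)$. For the region $\abs{x}>2\beta$ the cutoff satisfies $\xi(r/\beta)=1$ and $\xi'(r/\beta)=0$, because $r/\beta>2$ and $\xi\equiv 1$ on $[2,\infty)$. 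Consequently $g(r)=h(r)$ and $g'(r)=h'(r)$ there, so the inequality $h'(r)\le \kappa r^{-1}h(r)$ becomes exactly $\abs{\nabla G(x)}\le \kappa\abs{x}^{-1}G(x)$, which is the first assertion.

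For the annular region $\abs{x}\in[\beta,2\beta]$ I would use the product rule
\[
g'(r)=h'(r)\,\xi(r/\beta)+\tfrac{1}{\beta}\,h(r)\,\xi'(r/\beta)
\]
and bound each term using $r\le 2\beta$. For the first term, $\xi\le 1$ and $h'(r)\le \kappa r^{\kappa-1}\le \kappa(2\beta)^{\kappa-1}=\kappa\,2^{\kappa-1}\beta^{\kappa-1}$. For the second term, $\xi'\le 2$ and $h(r)\le r^\kappa\le (2\beta)^\kappa=2^\kappa\beta^\kappa$, so this term is at most $\beta^{-1}\cdot 2^\kappa\beta^\kappa\cdot 2=2^{\kappa+1}\beta^{\kappa-1}$. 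Adding the two contributions gives $\abs{\nabla G(x)}\le \beta^{\kappa-1}\bigl(2^{\kappa+1}+\kappa\,2^{\kappa-1}\bigr)$, which is the second assertion.

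The computation is entirely elementary and I expect no genuine obstacle. The only points requiring care are the cancellation that produces the clean formula for $h'$, the correct placement of the factor $1/\beta$ and of the argument $r/\beta$ in the derivative of $\xi(\abs{x}/\beta)$, and the use of monotonicity of $r\mapsto r^{\kappa-1}$ and $r\mapsto r^{\kappa}$ on $[\beta,2\beta]$ to replace $r$ by the endpoint $2\beta$. The endpoint substitution for $r^{\kappa-1}$ is the clean one precisely in the regime $\kappa\ge 1$ that is relevant in the application (where $\kappa=1+\delta/4$), and that is the only place the value of $\kappa$ enters.
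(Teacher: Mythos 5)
Your proof is correct and follows essentially the same route as the paper's: compute the radial derivative of $\abs{x}^\kappa/(1+\omega\abs{x}^\kappa)$, use $1+\omega\abs{x}^\kappa\geq 1$, and treat the regions $\abs{x}>2\beta$ (where $\xi\equiv 1$) and $\abs{x}\in[\beta,2\beta]$ (product rule plus endpoint substitution) separately, arriving at the identical constants. Your remark that the substitution $r^{\kappa-1}\leq(2\beta)^{\kappa-1}$ requires $\kappa\geq 1$ is a fair observation about the stated generality ("any $\kappa>0$"), and it applies equally to the paper's own proof; in the application $\kappa=1+\delta/4$, so nothing breaks.
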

\begin{proof}
    By construction 
    \begin{equation}
        \xi \left( \frac{\abs{x}}{\beta} \right) \equiv 1
    \end{equation}
    for $\abs{x}\geq 2\beta$ and consequently
    \begin{equation}
        \abs{(\nabla G)(x)} = \abs{ \nabla \left(  \frac{\abs{x}^\kappa}{1+\omega \abs{x}^\kappa} \right)} = \frac{\kappa \abs{x}^{\kappa-1}}{\left( 1+\omega \abs{x}^\kappa\right)^2} \leq \kappa \abs{x}^{-1} G(x) \, .
    \end{equation}
    For $\abs{x} \in [\beta,2\beta]$ 
    \begin{equation}
        \xi \left( \frac{\abs{x}}{\beta} \right) \leq 2 ,
    \end{equation}
    then for $\abs{x} \in [\beta,2\beta]$ 
    \begin{equation}
        \begin{split}
            \abs{(\nabla G)(x)} &\leq \frac{2}{\beta} \frac{\abs{x}^\kappa}{1+\omega \abs{x}^\kappa} + \abs{ \nabla \left(  \frac{\abs{x}^\kappa}{1+\omega \abs{x}^\kappa} \right)} \\
            &\leq 2^{\kappa+1} \beta^{\kappa-1} + \kappa 2^{\kappa-1} \beta^{\kappa -1 } \\
            &=   \beta^{\kappa -1 } \left( 2^{\kappa+1} +  \kappa 2^{\kappa-1} \right) \, .
        \end{split}
    \end{equation}
\end{proof}
\section{Proof of Lemma 6.2 of \cite{BBV:2022} } \label{app:D}
\begin{lemma} \label{1D_hamiltonain_estimate}
    Let $h=-\partial_q^2 + V(q)$  on $L^2(\R)$, such that $h\geq 0$. Asume that for every $\varepsilon>0$ there exists a constant $C(\varepsilon)>0$, such that
    \begin{equation}
        \int_{\R} \abs{V(q)} \abs{\psi(q)}^2 dq \leq \varepsilon  \int_{\R} \abs{\psi'(q)}^2 dq + C(\varepsilon) \int_{\R} \abs{ \psi(q)}^2 dq \quad \forall \psi \in H^1(\R) \, .
    \end{equation}
    Assume there are constants $\tilde C,A,\delta>0$ such that 
    \begin{equation} \label{1d_short_range}
        \abs{V(q)} \leq \tilde C \abs{q}^{-2-\delta}
    \end{equation}
    for any $q\in \R$ with $\abs{q}>A$. Then there exists a constant $C>0$, such that for any $L>0$ and any function $\psi \in H^1(\R)$
    \begin{equation}
        \mathcal{J}(\psi,L) \coloneqq \int_{-L}^L 
 \left( \abs{\psi'(q)}^2  + V(q) \abs{\psi(q)}^2 \right) dq \geq - C L^{-1-\delta}\left( \abs{\psi(L)}^2 + \abs{\psi(-L)}^2\right)
    \end{equation}
\end{lemma}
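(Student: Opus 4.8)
The plan is to prove the statement by the ground-state (Jacobi) factorization of the one--dimensional quadratic form, which converts the bulk term into a manifestly non-negative integral plus boundary terms involving the logarithmic derivative of a positive solution. By hypothesis the form $\int_\R(\abs{\psi'}^2 + V\abs{\psi}^2)$ is well defined and non-negative on $H^1(\R)$; in one dimension this is equivalent to disconjugacy of $-\phi''+V\phi=0$, so by the Allegretto--Piepenbrink theorem (classical Sturm theory in dimension one) there is a strictly positive weak solution $\phi>0$ on $\R$. Under the assumed regularity of $V$ we have $\phi\in H^2_{\loc}(\R)\subset C^1(\R)$, so $\phi'$ has pointwise values and $\phi$ is bounded away from $0$ on every compact set.

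For $\psi\in H^1(\R)$ I would set $\eta\coloneqq \psi/\phi\in H^1([-L,L])$ and use $-\phi''+V\phi=0$ together with an integration by parts to obtain the ground-state representation
\[
  \mathcal{J}(\psi,L) = \int_{-L}^L \phi^2\abs{\eta'}^2\,dq + \frac{\phi'(L)}{\phi(L)}\abs{\psi(L)}^2 - \frac{\phi'(-L)}{\phi(-L)}\abs{\psi(-L)}^2 .
\]
Since the integral is non-negative, the lemma is reduced to controlling the logarithmic derivative $\phi'/\phi$ at the endpoints, i.e.\ to showing
\[
  \frac{\phi'(L)}{\phi(L)} \geq -C L^{-1-\delta} \quad\text{and}\quad \frac{\phi'(-L)}{\phi(-L)} \leq C L^{-1-\delta}
\]
for all $L>0$ with a constant $C$ independent of $\psi$.

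The decay of $\phi'/\phi$ is extracted from the short-range bound \eqref{1d_short_range}. Because $\phi$ grows at most linearly, $V\phi$ is integrable near $\pm\infty$, so the limits $\beta_\pm\coloneqq\lim_{q\to\pm\infty}\phi'(q)$ exist and $\phi(q)=\beta_\pm q+\alpha_\pm+o(1)$; the asymptotic data map $\phi\mapsto(\alpha_\pm,\beta_\pm)$ is a linear isomorphism, hence $(\alpha_\pm,\beta_\pm)\neq(0,0)$, and positivity of $\phi$ forces $\beta_+\ge 0$, $\beta_-\le 0$. If $\beta_+>0$ then $\phi'(L)/\phi(L)\sim 1/L>0$ and the first inequality is automatic; if $\beta_+=0$ then $\alpha_+>0$, and from $\phi'(L)=\beta_+-\int_L^\infty V\phi\,dq=-\int_L^\infty V\phi\,dq$ together with \eqref{1d_short_range} one gets $\abs{\phi'(L)}\le \tilde C(\sup\abs{\phi})\int_L^\infty q^{-2-\delta}\,dq=O(L^{-1-\delta})$, while $\phi(L)\to\alpha_+>0$, so $\phi'(L)/\phi(L)=O(L^{-1-\delta})$. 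The symmetric computation at $q=-L$ gives the second inequality. For $L$ ranging over a compact set the logarithmic derivative is bounded (as $\phi>0$ is $C^1$), and since $L^{-1-\delta}\to\infty$ as $L\to0^+$, a single constant $C$ suffices for all $L>0$.

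The main obstacle is precisely this endpoint estimate: in the \emph{resonance} case of a bounded $\phi$ one must verify that $\phi'$ inherits the full $L^{-1-\delta}$ tail decay of $V$, which is where the pointwise bound $\abs{V(q)}\le\tilde C\abs{q}^{-2-\delta}$ (rather than mere integrability) is essential, whereas in the linearly growing case the boundary terms are automatically of the favourable sign. A minor but necessary point is checking that the constant $C$ can be chosen uniformly across the small-$L$ and large-$L$ regimes; the form-boundedness hypothesis on $V$ is used only to guarantee that $\mathcal{J}$ is a closed non-negative form on $H^1(\R)$, which is what licenses the existence of the positive solution $\phi$.
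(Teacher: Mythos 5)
Your argument is correct, but it takes a genuinely different route from the paper. The paper's proof is a short cut--and--extend trick: it extends $\psi$ beyond $\pm L$ by a linear ramp of length $s$ down to zero, applies $h\geq 0$ to the extended function $\psi_s$, observes that the exterior kinetic contribution $\abs{\psi(\pm L)}^2/s$ vanishes as $s\to\infty$, and bounds the exterior potential contribution by $\abs{\psi(\pm L)}^2\int_{\abs{q}>L}\abs{V}\,dq = O(L^{-1-\delta})\abs{\psi(\pm L)}^2$ using \eqref{1d_short_range}; this uses nothing beyond $h\geq 0$ and the tail integrability of $V$. You instead invoke the Allegretto--Piepenbrink/disconjugacy theorem to produce a positive zero--energy solution $\phi$, perform the ground--state (Jacobi) factorization to get the exact identity
\begin{equation*}
  \mathcal{J}(\psi,L)=\int_{-L}^{L}\phi^2\abs{(\psi/\phi)'}^2\,dq+\frac{\phi'(L)}{\phi(L)}\abs{\psi(L)}^2-\frac{\phi'(-L)}{\phi(-L)}\abs{\psi(-L)}^2,
\end{equation*}
and then control the boundary logarithmic derivatives via Levinson--type asymptotics $\phi(q)=\beta_\pm q+\alpha_\pm+o(1)$, which hold because $\int^{\infty}(1+\abs{q})\abs{V(q)}\,dq<\infty$. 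The case analysis ($\beta_+>0$ gives a favourable sign; $\beta_+=0$ forces $\alpha_+>0$ and $\phi'(L)=-\int_L^\infty V\phi\,dq=O(L^{-1-\delta})$) is sound, the regularity claim is fine since $V\in L^1_{\loc}$ already gives $\phi\in W^{2,1}_{\loc}\subset C^1$, and you correctly patch the small--$L$ regime by compactness. What your approach buys is an exact representation of $\mathcal{J}$ with a manifestly non-negative bulk and a sharp identification of where the $L^{-1-\delta}$ rate comes from (the tail of $V$ acting on the bounded, ``resonant'' branch of $\phi$); what it costs is noticeably heavier machinery (existence of the positive solution and ODE asymptotics) for a statement the paper dispatches with an elementary variational extension argument. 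Both yield the same constant structure and the same rate, so the proposal is acceptable, just not the most economical proof.
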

\begin{proof}
    Let $L>A$, for $s>0$ we define for $\abs{q}\leq L +2$ 
    \begin{equation}
        \psi_s(q) \coloneqq \begin{cases}
            \psi(L)\frac{(L+s)-q}{s}  &q \in (L, (L+s))\\
            \psi(q) &q\in[-L,L] \\
            \psi(-L)\frac{(L+s)+q}{s}  &q \in (-(L+s), -L)
        \end{cases}
    \end{equation}
    and $\psi_s(q) = 0$ for $\abs{q}>L+s$. Then $\psi_s \in H^1(\R)$. By construction $\mathcal{J}(\psi,L) = \mathcal{J}(\psi_s,L) $ and since $h\geq 0$
    \begin{equation} \label{outside_J}
        \begin{split}
            \mathcal{J}(\psi,L)  \geq - &\int_{L}^\infty \left( \abs{\psi_s'(q)}^2  + \abs{V(q)}  \abs{\psi_s(q)}^2 \right) dq \\
            - &\int_{-\infty}^{-L} \left( \abs{\psi_s'(q)}^2  + \abs{V(q)} \abs{\psi_s(q)}^2 \right) dq 
        \end{split}
    \end{equation}
    We estimate the first integral on the right--hand side of \eqref{outside_J}. Using that $\psi'$ and $\psi$ are supported on $\{q\in \R, \abs{q}\leq L+s\}$ yields
    \begin{equation} \label{two_terms_to_be_est}
        \int_{L}^\infty \left( \abs{\psi_s'(q)}^2  + \abs{V(q)}  \abs{\psi_s(q)}^2 \right) dq=\int_{L}^{L+s}  \abs{\psi_s'(q)}^2 dq + \int_{L}^{L+s} \abs{V(q)}  \abs{\psi_s(q)}^2dq \, .
    \end{equation}
    The first term on the right--hand side of \eqref{two_terms_to_be_est} vanishes in the limit $s\to \infty$ and for the second term  we find
    \begin{equation}\label{remaining_integral_withV}
        \int_{L}^{L+s} \abs{V(q)}  \abs{\psi_s(q)}^2dq \leq \abs{\psi(L)}^2 \int_{L}^{L+s}\abs{V(q)} dq
    \end{equation}
    Applying \eqref{1d_short_range} and solving the remaining integral on the right--hand side of \eqref{remaining_integral_withV} we find there exists a constant $C>0$ depending on $\delta>0$ but independent of $L$ and $s$ such that
    \begin{equation} \label{psi_on_edge}
         \int_{L}^{L+s} \abs{V(q)}  \abs{\psi_s(q)}^2dq \leq \frac{C}{L^{1+\delta}}\abs{\psi(L)}^2 \, .
    \end{equation}
    Using the same Arguments for the Integral over $(-\infty,-L]$ in \eqref{outside_J} proves the statement in the limit $s\to \infty$.
\end{proof}
%

\providecommand{\MR}{\relax\ifhmode\unskip\space\fi MR }
\providecommand{\MRhref}[2]{%
  \href{http://www.ams.org/mathscinet-getitem?mr=#1}{#2}
}

\bibliographystyle{annotate}
\bibliography{ref}

\end{document}